\newtheoremstyle{localrem}
	{5pt} 
	{5pt} 
	{\rm} 
	{} 
	{\bf} 
	{{\rm.}} 
	{.7em} 
	{} 
\theoremstyle{localrem}
\newtheorem{Definition}{Definition}[section]
\newtheorem{Remark}[Definition]{Remark}
\newtheorem{Example}[Definition]{Example}
\newtheoremstyle{localthm}
	{5pt} 
	{5pt} 
	{\sl} 
	{} 
	{\bf} 
	{{\rm.}} 
	{.7em} 
	{} 
\theoremstyle{localthm}
\newtheorem{Lemma}[Definition]{Lemma}
\newcommand{\R}{\mathbb{R}}
\newcommand{\V}{\mathbb{V}}
\newcommand{\DD}{\mathcal{D}}
\newcommand{\XX}{\mathcal{X}}
\newcommand{\eps}{\varepsilon}
\DeclareMathOperator*{\argmax}{arg\,max}
\def\DD{\mathcal{D}}
\def\LL{\mathcal{L}}
\def\NN{\mathcal{N}}
\def\VV{\mathcal{V}}
\def\Ex{\mathop{\rm I\!E}\nolimits}
\def\Pr{\mathop{\rm I\!P}\nolimits}
\def\bs{\boldsymbol}
\def\thb{\bs{\theta}}
\def\taub{\bs{\tau}}
\begin{document}

\addtolength{\baselineskip}{0.4\baselineskip}

\title{Active Set Algorithms for\\
	Estimating Shape-Constrained Density Ratios}
\author{Lutz D\"{u}mbgen$^1$, Alexandre M\"{o}sching$^2$ and Christof Str\"ahl$^1$\\
	$^1$University of Bern and $^2$University of G\"ottingen}
\date{December 2020}

\maketitle

\begin{abstract}
In many instances, imposing a constraint on the shape of a density is a reasonable and flexible assumption. It offers an alternative to parametric models which can be too rigid and to other nonparametric methods requiring the choice of tuning parameters. This paper treats the nonparametric estimation of log-concave or log-convex density ratios by means of active set algorithms in a unified framework. In the setting of log-concave densities, the new algorithm is similar to but substantially faster than previously considered active set methods. Log-convexity is a less common shape constraint which is described by some authors as ``tail inflation''. The active set method proposed here is novel in this context. As a by-product, new goodness-of-fit tests of single hypotheses are formulated and are shown to be more powerful than higher criticism tests in a simulation study.
\end{abstract}

\section{Introduction}
\label{sec:introduction}

Suppose we observe independent random variables $X_1, X_2, \ldots, X_n$ with unknown distributions $P_1, P_2, \ldots, P_n$ on the real line. This paper discusses the estimation of the marginal (average) distribution $P := n^{-1} \sum_{i=1}^n P_i$ under certain shape constraints on $P$. This framework includes the case of i.i.d.\ observations from a single distribution $P$, of course.

Within the broad field of nonparametric statistics, inference about $P$ under shape-constraints is a well-established alternative to the assumption of quantitative smoothness properties, e.g.\ certain bounds on the maximum modulus of some higher order derivative of the density of $P$ (w.r.t.\ Lebesgue measure). While estimation under smoothness assumptions involves typically tuning parameters, e.g.\ bandwidths of kernel density estimators, maximum likelihood estimation under shape constraints is often possible without any further specifications. For a thorough discussion of the benefits of shape-constraints we refer to \cite{Groeneboom_Jongbloed_2014}.

One particular example of a shape constraint is log-concavity of the density of $P$. A broad overview of statistical methods with such densities, including the multivariate case, is given by \cite{samworth2018}. A second example of a shape constraint is convexity of the density of $P$ on the positive half-line, see \cite{groeneboom2001}. In the present paper we reconsider the estimation of log-concave densities and a less familiar setting which is related to the estimation of convex densities:

\paragraph{Setting 1: Log-concave densities.}
We assume that $P$ has a log-concave density $f$ with respect to Lebesgue measure, that means, $\log f : \R \to [-\infty,\infty)$ is concave.

\paragraph{Setting 2: Tail inflation.}
For a given continuous reference distribution $P_o$ on $\R$, we assume that $P$ has a log-convex density $f$ with respect to $P_o$, that means, $\log f : \R \to \R$ is convex.

\medskip

The notion of tail inflation has been introduced by \cite{mccullagh2012,McCullagh_Polson_2017} to investigate statistical sparsity. They consider the case of observations $X_i > 0$, the reference distribution $P_o$ being the chi-squared distribution with one degree of freedom, and $\log f$ is assumed to be convex and isotonic (non-decreasing). Setting~2 is also related to multiple hypothesis testing. There, $X_1, X_2, \ldots, X_n$ represent test statistics for given null hypotheses $H_1, H_2, \ldots, H_n$, where $X_i$ has distribution $P_o$ whenever $H_i$ is true. In image analysis, the random variables $X_i$ could be measured intensities at different pixels of a digital image, and $P_o$ describes pure background noise or measurement errors.

Primary goals are to estimate $P$ or to test the null hypothesis that all $P_i$ are equal to $P_o$. The assumption of log-convexity of $f = dP/dP_o$ may seem a bit arbitrary at first sight. But note, for instance, that the testing problems considered by \cite{Donoho_Jin_2004} may be viewed as a special case of Setting~2, with $P_o$ being the standard Gaussian distribution $\NN(0,1)$. Indeed, the latter authors considered i.i.d.\ observations with distribution $P$ being a mixture $(1 - \eps) \NN(0,1) + \eps \NN(\mu,1)$ with unknown parameters $\eps \in [0,1]$ and $\mu \ge 0$. As shown later, if each $P_i$ is a mixture of Gaussian distributions with standard deviation at least $1$, then each $P_i$ as well as the marginal distribution $P$ has a log-convex density with respect to $P_o$. Consequently, if we estimate the log-density $\theta := \log f$ of $P$, this gives rise to a new likelihood ratio test statistic for the null hypothesis that all $P_i$ are equal to $P_o$.

\paragraph{Outline of the paper.}
Our main goals are to establish existence and uniqueness of the nonparametric maximum likelihood estimator $\hat{\theta}$ of $\theta := \log f$ in Setting~2 and to devise explicit algorithms for its computation. Since Settings~1 and 2 are closely related, it is worthwhile to treat both of them simultaneously, highlighting similarities and differences. In Section~\ref{sec:settings}, the specific estimation problems are described in more detail, and it is shown that under certain assumptions, the maximizer $\hat{\theta}$ exists and is unique.

In Section~\ref{sec:active.sets}, we describe a general active set method for the computation of $\hat{\theta}$. The starting point is the active set method described by \cite{Duembgen_etal_2011} and \cite{Duembgen_Rufibach_2011}, which is similar to the support reduction algorithm of \cite{Groeneboom_etal_2008}. The new version is more efficient in that all single Newton steps take shape constraints on $\theta$ into account. We also adopt the proposal of \cite{Liu_Wang_2018} to deactivate occasionally more than one constraint in one step, but other than the latter authors, we do not resort to quadratic programming routines within the algorithm. In Setting~2, we explore the full infinite-dimensional parameter space rather than using ad hoc finite-dimensional approximations.

Numerical examples illustrating the estimation method are given in Section~\ref{sec:numerical.examples}. For Setting~1, we demonstrate the benefits of the new method in a small simulation study. We also show that our estimator for Setting~2 leads to a promising goodness-of-fit test. Simulations show that its power can exceed the power of higher criticism methods as proposed by \cite{Donoho_Jin_2004} and \cite{Gontscharuk_etal_2016}. 

Section~\ref{sec:proofs} provides proofs for the existence, uniqueness and special properties of $\hat{\theta}$, while Appendix~\ref{sec:technical.details} provides technical details for specific applications and a proof of convergence which generalises and simplifies a previous proof of \cite{Sommer-Simpson_2019}. The algorithms have been implemented in the statistical langage R \citep{R2016} and are available from the authors.

\section{General considerations, existence and uniqueness}
\label{sec:settings}

In what follows, we consider an arbitrary discrete distribution
\[
	\hat{P} \ := \ \sum_{i=1}^n w_i \delta_{x_i}
\]
with $n \ge 2$ probability weights $w_1, \ldots, w_n > 0$ and real support points $x_1 < \cdots < x_n$. In Settings~1 and 2, these points $x_1,\ldots,x_n$ are the order statistics of the observations $X_1,\ldots,X_n$ while $w_i = n^{-1}$. The general form of $\hat{P}$ covers also the situation of $N \ge n$ raw observations from $P$ which are recorded with rounding errors. Then $x_1,\ldots,x_n$ are the different recorded values, and $w_i$ is the relative frequency of $x_i$ in the sample.

\subsection{Parameter spaces and target functional}

In general, we assume that $\hat{P}$ estimates an unknown distribution $P$ which has a density $f$ with respect to a given continuous measure $M$ on $\R$. Precisely,
\[
	f(x) \ = \ f_\theta(x) := e_{}^{\theta(x)}
\]
with an unknown function parameter $\theta : \R \to [-\infty,\infty)$ in a given family $\Theta$ reflecting the particular shape constraints to be specified later. Then $\theta$ is estimated by a function $\hat{\theta} \in \Theta$ maximizing the normalized log-likelihood
\[
	\ell(\theta)
	\ := \ \int \theta \, d\hat{P}
	\ = \ \sum_{i=1}^n w_i \theta(x_i)
\]
under the constraint that $\int e^{\theta} \, dM = 1$.

In the specific settings we have in mind, all functions $\theta \in \Theta$ satisfy $0 < \int e^\theta \, dM \le \infty$ and $\theta + c \in \Theta$ for arbitrary real constants $c$. Thus we may apply the Lagrange trick of \cite{silverman1982} and rewrite $\hat{\theta}$ as
\[
	\hat{\theta}
	\ = \ \argmax_{\theta \in \Theta} \, L(\theta)
\]
with
\[
	L(\theta) \ := \ \int \theta \, d\hat{P} - \int e_{}^\theta \, dM + 1
	\ \in \ [-\infty,\infty) .
\]
Indeed, for $\theta \in \Theta$ with $L(\theta) > -\infty$ and $c \in \R$, the derivative $\partial L(\theta + c)/\partial c$ equals $1 - e^c \int e_{}^\theta \, dM$. Hence, a function $\hat{\theta} \in \Theta$ with $L(\hat{\theta}) > -\infty$ maximizes $L(\theta)$ over all $\theta \in \Theta$ if and only if it maximises $\ell(\theta)$ under the constraint that $\int e^\theta \, dM = 1$. Note also that $L(\theta) = \ell(\theta)$ if and only if $\int e^\theta \, dM = 1$.

\paragraph{Setting~1.}
$M$ is Lebesgue measure on $\R$, and the parameter space $\Theta$ consists of all concave, upper semicontinuous functions $\theta : \R \to [-\infty,\infty)$ such that $\int e^\theta \, dM > 0$.

\paragraph{}
For Setting~2 from the introduction, we distinguish between two versions, where the second one covers the framework of \cite{mccullagh2012}.

\paragraph{Setting~2A.}
$M$ stands for the reference distribution $P_o$. We assume that $P_o$ is continuous with $P_o(B) > 0$ for any non-degenerate interval $B \subset \R$, and
\[
	\Bigl\{ \lambda \in \R : \int e^{\lambda x} \, P_o(dx) < \infty \Bigr\}
	\ = \ \bigl( \lambda_{\ell}(P_o), \lambda_r(P_o) \bigr)
\]
for certain numbers $-\infty \le \lambda_\ell(P_o) < 0 < \lambda_r(P_o) \le \infty$. The extended parameter space $\Theta$ consists of all convex functions $\theta : \R \to \R$.

\begin{Example}[Gaussian mixtures]
Let $P_o = \NN(0,1)$. Suppose that $P$ is a mixture of Gaussian distributions with standard deviation at least $1$, i.e.\ $P = \int \NN(\mu,\sigma^2) \, Q(d\mu, d\sigma)$ for some probability distribution $Q$ on $\R \times [1,\infty)$. Then $\theta := \log dP/dP_o$ is given by
\[
	\theta(x)
	\ = \ \log \int e_{}^{\theta(x,\mu,\sigma)} \, Q(d\mu,d\sigma)
\]
with
\[
	\theta(x,\mu,\sigma)
	\ := \ \log \frac{d\NN(\mu,\sigma^2)}{d\NN(0,1)}(x)
		\ = \ - \log \sigma
		+ \frac{(\sigma^2 - 1) x^2 + 2\mu x - \mu^2}{2\sigma^2} .
\]
Obviously, $\theta(\cdot,\mu,\sigma)$ is a convex function for arbitrary $\mu \in \R$ and $\sigma \ge 1$, so the log-mixture density $\theta$ is convex, too. This can be deduced from H\"older's inequality or Artin's theorem, see Section~D.4 of \cite{Marshall_Olkin_1979}.
\end{Example}

\begin{Example}[Student distributions]
Let $P_o = \NN(0,\sigma^2)$ and $P = t_k$ with $\sigma, k > 0$. Tedious but elementary calculations show that $\theta = \log(dP/dP_o)$ is convex if and only if $\sigma^2 \le k/(k+1)$.
\end{Example}

\begin{Example}[Logistic distributions]
Let $P_o = \NN(0,1)$, and let $P$ be the logistic distribution with scale parameter $\sigma > 0$, i.e.\ with Lebesgue density $p(x) = \sigma^{-1} (e^{x/\sigma} + e^{-x/\sigma} + 2)^{-1}$. Here one can show that $\theta = \log(dP/dP_o)$ is convex if and only if $\sigma \ge 2^{-1/2}$.
\end{Example}

\paragraph{Setting~2B.}
$M$ stands for the reference distribution $P_o$. We assume that $P_o$ is continuous such that $P_o((-\infty,0]) = 0$ and $P_o(B) > 0$ for any non-degenerate interval $B \subset (0,\infty)$, and
\[
	\Bigl\{ \lambda \in \R : \int e^{\lambda x} \, P_o(dx) < \infty \Bigr\}
	\ = \ \bigl( -\infty, \lambda_r(P_o) \bigr)	
\]
for some number $\lambda_r(P_o) \in (0,\infty]$. Now the extended parameter space $\Theta$ consists of all convex functions $\theta : \R \to \R$ such that $\theta \equiv \theta(0)$ on $(-\infty,0]$. In particular, all $\theta \in \Theta$ are isotonic.

\begin{Example}[Scale mixtures of Gamma distributions]
Let $P_o = \mathrm{Gamma}(\alpha,\beta)$, the gamma distribution with given shape parameter $\alpha > 0$ and rate parameter $\beta > 0$. Suppose that $P$ is a scale mixture of gamma distributions with the same shape parameter, i.e.\ $P = \int \mathrm{Gamma}(\alpha,\beta/s) \, Q(ds)$ for some probability measure $Q$ on $(0,\infty)$. Then $\theta := \log dP/dP_o$ is given by
\[
	\theta(x)
	\ = \ \log \int e^{\theta(x,s)} \, Q(ds)
\]
with $\theta(x,s) := \beta (1 - 1/s) x - \alpha \log s$. The latter expression is linear in $x$, whence $\theta$ is convex. If $Q([1,\infty)) = 1$, then $\theta$ is also isotonic.

A special instance of this setting are raw observations $\tilde{X}_i = S_i G_i$, $1 \le i \le n$, with independent random variables $S_1, \ldots, S_n \ge 1$ and $G_1,\ldots,G_n \sim \NN(0,1)$. With $P_o := \chi_1^2 = \mathrm{Gamma}(1/2,1/2)$, the marginal distribution $P$ of the observations $X_i := \tilde{X}_i^2$ has the log-density $\theta = \log \int e^{\theta(\cdot,s)} \, Q(ds)$ with respect to $P_o$, where $Q := n^{-1} \sum_{i=1}^n \LL(S_i)$.
\end{Example}

\subsection{Existence and uniqueness of the estimator}

In Settings~1 and 2A-B, the target functional $L$ is strictly concave on the convex set $\{\theta \in \Theta : L(\theta) > - \infty\}$. This follows easily from strict convexity of the exponential function. Precisely, there exists a unique maximizer $\hat{\theta} \in \Theta$ of $L$ which is piecewise linear and satisfies further properties as summarized in the following three lemmas. The first one has been proved by \cite{Walther_2002}, see also \cite{Duembgen_etal_2011} or \cite{Cule_etal_2010}:

\begin{Lemma}
\label{lem:existence.uniqueness.1}
In Setting~1, there exists a unique maximizer $\hat{\theta}$ of $L$ over $\Theta$. Precisely, there exist $m \ge 2$ points $\tau_1 < \cdots < \tau_m$ in $\{x_1,x_2,\ldots,x_n\}$ with $\tau_1 = x_1$, $\tau_m = x_n$, with the following properties:
\[
	\hat{\theta} \ \begin{cases}
		\text{is linear on} \ [\tau_j,\tau_{j+1}], \ 1 \le j < m , \\
		\text{equals} \ -\infty \ \text{on} \ \R \setminus [x_1,x_n] ,
	\end{cases}
\]
and the slope $\hat{\theta}'(\tau_j\,+) = \bigl( \hat{\theta}(\tau_{j+1}) - \hat{\theta}(\tau_j) \bigr)/(\tau_{j+1} - \tau_j)$ is strictly decreasing in $j \in \{1,\ldots,m-1\}$.
\end{Lemma}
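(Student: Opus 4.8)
The plan is to reduce the infinite-dimensional maximization to a finite-dimensional concave program, settle existence by a coercivity argument, and then read off the structural claims. Throughout I use that $M$ is Lebesgue measure and that $\theta \mapsto L(\theta)$ is concave on the convex set $\Theta$, since $\theta \mapsto \int \theta \, d\hat{P}$ is affine and $\theta \mapsto \int e^\theta \, dM$ is convex.

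\textbf{Reduction.} For any $\theta \in \Theta$ with $L(\theta) > -\infty$, replacing $\theta$ by $-\infty$ on $\XX \setminus [x_1,x_n]$ leaves $\int \theta \, d\hat{P} = \sum_i w_i \theta(x_i)$ unchanged, as all $x_i$ lie in $[x_1,x_n]$, but does not increase $\int e^\theta \, dM$; upper semicontinuity of the modified function at $x_1$ and $x_n$ follows from that of $\theta$, so it lies in $\Theta$ and does not decrease $L$. This already forces any maximizer to equal $-\infty$ off $[x_1,x_n]$. Next, the grid values $v_i := \theta(x_i)$ form a concave sequence, i.e.\ their successive slopes are nonincreasing, because $\theta$ is concave. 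Letting $\ell_v$ denote the piecewise linear interpolant of $(x_i,v_i)$, concavity gives $\theta \ge \ell_v$ on $[x_1,x_n]$, whence $\int e^{\ell_v}\,dM \le \int e^\theta \, dM$, while the two functions agree at every $x_i$; thus $L(\ell_v) \ge L(\theta)$, and $\ell_v \in \Theta$ after setting it to $-\infty$ outside. Consequently $\sup_\Theta L$ equals the supremum of
\[
	g(v) \ := \ \sum_{i=1}^n w_i v_i - \int_{x_1}^{x_n} e^{\ell_v(x)}\,dx + 1
\]
over the closed convex polyhedral cone $C \subset \R^n$ of vectors with nonincreasing successive slopes.

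\textbf{Existence.} The functional $g$ is concave and continuous on $C$, so it suffices to prove that $g$ is coercive: then a sublevel set $\{v \in C : g(v) \ge g(v^0)\}$, nonempty for $v^0 = 0$, is compact and $g$ attains its maximum there, and the associated interpolant is a maximizer of $L$ over all of $\Theta$. Writing $v = \mu\,\mathbf{1} + u$ with $\mu := \max_i v_i$ and $\max_i u_i = 0$, and using $\sum_i w_i = 1$, one has $g(v) = \mu + \sum_i w_i u_i - e^\mu \int_{x_1}^{x_n} e^{\ell_u}\,dx + 1$. I expect the delicate step to be here. If $\mu \to -\infty$, the leading term drives $g \to -\infty$; if the profile $u$ degenerates to a sharp peak, its grid values away from the peak tend to $-\infty$ so that $\sum_i w_i u_i \to -\infty$; and if $u$ stays controlled while $\mu \to +\infty$, the integral $\int e^{\ell_u}$ is bounded below by a positive constant and $e^\mu$ dominates. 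The main obstacle is balancing the competing regime of large $\mu$ against a vanishing integral $\int e^{\ell_u}$; this uses crucially that $[x_1,x_n]$ is a fixed bounded interval and that $n \ge 2$, so at least one grid point stays a fixed positive distance from the peak.

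\textbf{Uniqueness and structure.} If $\theta_0,\theta_1$ both maximize $L$, then by concavity so does every $\theta_t = (1-t)\theta_0 + t\theta_1$, so $t \mapsto L(\theta_t)$ is affine; but $t \mapsto \int e^{\theta_t}\,dM$ is strictly convex unless $\theta_0 = \theta_1$ Lebesgue-a.e., and this equality a.e.\ on $(x_1,x_n)$ together with concavity and upper semicontinuity upgrades to $\theta_0 \equiv \theta_1$. Finally, reading off $\hat\theta = \ell_{v^*}$: it is affine between consecutive grid points, its kinks define points $\tau_1 = x_1 < \cdots < \tau_m = x_n$ in $\{x_1,\ldots,x_n\}$ with $m \ge 2$ since $x_1 < x_n$, and its slopes are nonincreasing by membership in $C$ and strictly decreasing across each $\tau_j$ by the very definition of a kink.
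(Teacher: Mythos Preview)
Your overall architecture is sound and, for Setting~1, arguably cleaner than what the paper does for the analogous Lemmas~2.5 and~2.7: you reduce immediately to a finite-dimensional concave problem on the polyhedral cone $C\subset\R^n$ of concave grid sequences, whereas the paper (for Settings~2A--B) stays in infinite dimensions and invokes Arzel\`a--Ascoli plus Fatou. The paper does not give a self-contained proof of Lemma~2.1 at all, citing \cite{Walther_2002}, \cite{Duembgen_etal_2011} and \cite{Cule_etal_2010}; what it \emph{does} supply is the coercivity Lemma~5.1(i), which is precisely the missing piece in your argument. Your reduction, uniqueness, and structural steps are correct as written.

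The genuine gap is the one you flag yourself: the coercivity of $g$ on $C$. Your case split handles $\mu\to-\infty$ and ``$u$ controlled, $\mu\to+\infty$'' cleanly, but the joint regime where $\mu\to+\infty$ while $\int e^{\ell_u}\to 0$ is not disposed of by the observation that ``at least one grid point stays a fixed positive distance from the peak.'' What is needed is a quantitative trade-off. In the paper's notation, with $i=\min(v_1,v_n)$, $s=\max_j v_j$, $r=s-i$ and $w_o=\min(w_1,w_n)$, one first bounds $g(v)\le i+(1-w_o)r - e^{i}\int e^{\ell_v - i}+1$ and then optimizes over $i$ to get $g(v)\le (1-w_o)r - \log\int e^{\ell_v - i}$. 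The crucial estimate, using only concavity of $\ell_v$, is that for every $\eps\in(0,1)$ there is a subinterval of $[x_1,x_n]$ of length $\eps(x_n-x_1)$ on which $\ell_v - i\ge (1-\eps)r$; hence $\int e^{\ell_v-i}\ge \eps(x_n-x_1)\,e^{(1-\eps)r}$ and $g(v)\le(\eps-w_o)r-\log\bigl(\eps(x_n-x_1)\bigr)$. Choosing $\eps<w_o$ forces $g(v)\to-\infty$ as $r\to\infty$, uniformly in $i$. This is the content of Lemma~5.1(i), and once you insert it your proof is complete.
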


\begin{Lemma}
\label{lem:existence.uniqueness.2A}
In Setting~2A, there exists a unique maximizer $\hat{\theta}$ of $L$ over $\Theta$. Precisely, either $\hat{\theta}$ is linear, or there exist $m \in \{1,\ldots,n-1\}$ points $\tau_1 < \cdots < \tau_m$ in $[x_1,x_n] \setminus \{x_1,\ldots,x_n\}$ with the following properties:
\[
	\hat{\theta} \ \text{is linear on} \ \begin{cases}
		\XX_0 := (-\infty, \tau_1], \\
		\XX_j := [\tau_j,\tau_{j+1}], \ 1 \le j < m , \\
		\XX_m := [\tau_m,\infty) ,
	\end{cases}
\]
and the sequence of slopes of $\hat{\theta}$ on these $m+1$ intervals is strictly increasing. Furthermore, each interval $(x_i, x_{i+1})$, $1 \le i < n$, contains at most one point $\tau_j$.
\end{Lemma}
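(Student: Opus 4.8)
The plan is to treat this as a concave optimization over the convex cone $\Theta$ of convex functions on $\XX$: first settle uniqueness from strict concavity, then existence from a coercivity argument, and finally extract the geometric description of $\hat\theta$ from the first-order (Karush--Kuhn--Tucker) conditions. Throughout I write $P_0 = M$. Uniqueness is the easy part. The functional $\theta \mapsto \int\theta\,d\hat P = \sum_i w_i\theta(x_i)$ is linear, while $\theta\mapsto \int e^\theta\,dP_0$ is convex, and strictly so along any segment joining two functions that differ on a set of positive $P_0$-measure, by strict convexity of $t\mapsto e^t$. Since two distinct convex functions on $\XX$ differ on a nonempty open interval, which carries positive $P_0$-mass because $P_0$ has full support, the functional $L$ is strictly concave on $\{\theta\in\Theta: L(\theta)>-\infty\}$; hence a maximizer is unique as soon as one exists.

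The crux is \emph{existence}, which I would establish by coercivity together with upper semicontinuity. For $\theta\in\Theta$ with $L(\theta)\ge c>-\infty$ one has $\int e^\theta\,dP_0 \le \sum_i w_i\theta(x_i) + 1 - c$; combined with convexity and the full support of $P_0$ this should bound the values $\theta(x_i)$ from above, while the likelihood term bounds them suitably from below, and finiteness of $\int e^\theta\,dP_0$ forces the one-sided tail slopes $\theta'(\,\cdot\,+)$ to remain in a fixed compact subinterval of $\bigl(\lambda_\ell(P_0),\lambda_r(P_0)\bigr)$, since slopes approaching these endpoints make $\int e^\theta\,dP_0$ diverge. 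A maximizing sequence then lies in a family of convex functions precompact for pointwise convergence on $\XX$, and along a convergent subsequence $-\int e^\theta\,dP_0$ is upper semicontinuous by Fatou's lemma whereas $\sum_i w_i\theta(x_i)$ is continuous, so the limit is a maximizer. Making this coercivity rigorous, in particular the uniform control of the tail slopes and the exclusion of escape to a degenerate affine limit with infinite integral, is the main obstacle, and is exactly where the hypothesis on the moment-generating interval $\bigl(\lambda_\ell(P_0),\lambda_r(P_0)\bigr)$ enters.

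Granting existence of $\hat\theta$ with $L(\hat\theta)>-\infty$, set $\nu := \hat P - e^{\hat\theta}P_0$, so the directional derivative of $L$ at $\hat\theta$ in a feasible direction $\delta$ is $\int\delta\,d\nu$. Affine $\delta(x)=a+bx$ keep $\hat\theta+t\delta$ convex for every $t\in\R$, hence $\int\delta\,d\nu=0$; the choices $(a,b)=(1,0)$ and $(0,1)$ give $\int e^{\hat\theta}\,dP_0 = 1$ (so $\hat\theta\in\Theta_1$, confirming the Lagrange reduction) and $\int x\,e^{\hat\theta}\,dP_0 = \sum_i w_i x_i$. Any convex $\delta$ is an admissible one-sided direction; writing $\delta(x)=a+bx+\int (x-\tau)^+\,d\mu_\delta(\tau)$ with $\mu_\delta\ge 0$ and using Fubini together with the two moment identities reduces $\int\delta\,d\nu\le 0$ to $H(\tau):=\int (x-\tau)^+\,d\nu \le 0$ for all $\tau$. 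Finally $\pm(\hat\theta-\text{affine part})$ are both admissible directions, which forces $\int H\,d\mu = 0$ for the curvature measure $\mu$, the distributional second derivative of $\hat\theta$; since $H\le 0$ and $\mu\ge 0$, the measure $\mu$ is concentrated on $\{H=0\}$.

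It remains to locate $\{H=0\}$, from which the stated structure follows. For $\tau\in\XX\setminus[x_1,x_n]$ the data contributions to $H(\tau)$ either vanish or combine with the two moment identities to leave only a strictly signed contribution of $e^{\hat\theta}P_0$, giving $H(\tau)<0$; thus $\{H=0\}\subset[x_1,x_n]$. On each open gap $(x_i,x_{i+1})$ there is no atom of $\hat P$, so $\tau\mapsto\nu\bigl((\tau,\infty)\bigr)$ is strictly increasing there and $H'(\tau)=-\nu\bigl((\tau,\infty)\bigr)$ is strictly decreasing, i.e.\ $H$ is strictly concave on the gap and meets the level $0$ at most once. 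At an interior data point one has $\nu(\{x_i\})=w_i>0$, so $H'$ jumps upward by $w_i$, an upward (convex) kink that precludes $x_i$ from being a local maximum of value $0$; hence $\{H=0\}$ avoids the points $x_1,\ldots,x_n$. Consequently $\{H=0\}$ is a finite set $\tau_1<\cdots<\tau_m$ with at most one point in each gap interior, so $m\le n-1$, and $\mu$ is a sum of positive atoms at these points. Reading this back, $\hat\theta$ is affine on each $\XX_j$, its slope increases strictly across each $\tau_j$, and if $\{H=0\}=\emptyset$ then $\mu=0$ and $\hat\theta$ is affine, which is precisely the stated dichotomy.
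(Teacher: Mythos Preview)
Your argument is essentially correct, but it follows a genuinely different route from the paper's proof. For existence and uniqueness the two approaches coincide in spirit (coercivity plus strict concavity, with Fatou for the limit), although the paper isolates the coercivity as a separate lemma with an explicit proof, whereas you only sketch it and correctly flag it as the main obstacle. The structural part is where the approaches diverge. The paper proceeds by three successive \emph{constructions}: first it replaces $\hat\theta$ by its linear extension $\tilde\theta$ outside $[x_1,x_n]$, then by the maximum $\check\theta$ of the tangent lines at the data points (forcing piecewise linearity with at most $n-1$ kinks), and finally it uses an ad hoc replacement of $\hat\theta$ between two kinks in the same gap to rule that case out; only the exclusion of the $x_i$ themselves is obtained via the KKT function $h$. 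You instead derive \emph{all} structural properties from the single function $H$ and the curvature measure $\mu=\hat\theta''$: the complementary slackness $\int H\,d\mu=0$ localizes $\mathrm{supp}\,\mu$ to $\{H=0\}$, and your analysis of $H$ (strict negativity outside $[x_1,x_n]$, strict concavity on each gap, convex kinks at the $x_i$) shows $\{H=0\}$ is a finite subset of $(x_1,x_n)\setminus\{x_1,\ldots,x_n\}$ with at most one point per gap. This is more unified and arguably cleaner. One technical caveat: the Fubini step $\int(\hat\theta-\text{affine})\,d\nu=\int H\,d\mu$ needs either Tonelli on the two nonnegative halves $\int H_{\hat P}\,d\mu$ and $\int H_{P_{\hat\theta}}\,d\mu$ separately, or a preliminary observation that $\mu$ is compactly supported; the paper's $\tilde\theta$ reduction supplies the latter in one line and also streamlines the compactness argument for existence, so it is worth inserting even in your approach.
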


\begin{Lemma}
\label{lem:existence.uniqueness.2B}
In Setting~2B, there exists a unique maximizer $\hat{\theta}$ of $L$ over $\Theta$. Precisely, either $\hat{\theta} \equiv 0$, or there exist $m \in \{1,\ldots,n-1\}$ points $\tau_1 < \cdots < \tau_m$ in $\{0\} \cup [x_1,x_n] \setminus \{x_1,\ldots,x_n\}$ with the following properties:
\[
	\hat{\theta} \ \text{is} \ \begin{cases}
		\text{constant on} \ (-\infty, \tau_1] , \\
		\text{linear on} \ \XX_j := [\tau_j,\tau_{j+1}], \ 1 \le j < m-1 , \\
		\text{linear on} \ \XX_m :=	[\tau_m,\infty) ,
	\end{cases}
\]
and the slope $\hat{\theta}'(\tau_j\,+)$ is strictly positive and strictly increasing in $j \in \{1,\ldots,m\}$. Furthermore, each interval $(x_i, x_{i+1})$, $1 \le i < n$, contains at most one point $\tau_j$.
\end{Lemma}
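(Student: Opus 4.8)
The plan is to mirror the proof of Lemma~\ref{lem:existence.uniqueness.2A}, adding the analysis required by the extra constraint $\hat\theta' \ge 0$. For uniqueness I would use that $L$ is concave on the convex cone $\Theta$ and that $-\int e^\theta\,dP_0$ is strictly concave along any segment whose endpoints differ on a set of positive $P_0$-measure; since $P_0$ has full support and convex functions are continuous on $(0,\infty)$, two distinct maximizers would differ on such a set, a contradiction. As in the introduction, $L(\hat\theta) > -\infty$ forces $\hat\theta \in \Theta_1$, so the normalization $\int e^{\hat\theta}\,dP_0 = 1$ holds automatically.

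For existence I would take a maximizing sequence $(\theta_k)$ and first establish coercivity: if $\max_i \theta_k(x_i) \to \infty$ the term $\int e^{\theta_k}\,dP_0$ eventually dominates the linear likelihood gain, while if $\min_i \theta_k(x_i)\to-\infty$ the likelihood itself tends to $-\infty$; in both cases $L(\theta_k)\to-\infty$. Hence the values $\theta_k(x_i)$ stay bounded, convexity bounds $\theta_k$ on $[x_1,x_n]$, and the requirement of a terminal slope below $\lambda_r(P_0)$ controls the behaviour on $(x_n,\infty)$. A Helly-type selection theorem then yields a subsequence converging locally uniformly to a convex isotonic $\hat\theta$, and Fatou's lemma for the integral term gives $L(\hat\theta)\ge\limsup_k L(\theta_k)$, so $\hat\theta$ is a maximizer. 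That the maximizer is in fact piecewise linear follows from the structural analysis below and is what makes the finite-dimensional parametrization $L(\taub,\cdot)$ of Section~\ref{subsec:Details.2A} and the active set algorithm applicable.

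For the structural claims I would write $\mu := \hat P - P_{\hat\theta}$, where $P_{\hat\theta}$ has $P_0$-density $e^{\hat\theta}$, and read off the Karush--Kuhn--Tucker conditions from the fact that $\Theta$ is a convex cone: $DL(\hat\theta;\delta) = \int\delta\,d\mu \le 0$ for every convex isotonic $\delta$, with equality whenever $-\delta$ is also a feasible direction. Taking $\delta \equiv 1$ (two-sided) recovers the normalization, while $\delta = (\cdot - t)^+$ gives $H(t) := \int (x-t)^+\,d\mu(x) \le 0$ for all $t \ge 0$, with $H(\tau_j)=0$ at every kink. On each open gap $(x_i,x_{i+1})$ the measure $\hat P$ has no mass, so $H'' = -e^{\hat\theta}p_0 < 0$ there; thus $H$ is strictly concave on each gap and can touch $0$ at most once, and only at an interior point. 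This forces the kinks to lie strictly between consecutive data points, at most one per gap, whence $m \le n-1$, and convexity together with this strictness yields strictly increasing slopes.

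The step I expect to be the main obstacle, and the genuinely new ingredient relative to Setting~2A, is the bookkeeping for the isotonicity constraint. Because $-x$ is not isotonic, the two-sided identity $\int x\,d\mu = 0$ of Setting~2A survives only as the one-sided inequality $\int x\,d\mu \le 0$, and its multiplier is precisely what creates an initial stretch $[0,\tau_1]$ on which $\hat\theta'=0$. A short computation shows that for $t \in (0,x_1)$ one has $H'(t) = -P_{\hat\theta}([0,t)) < 0$, so $H$ is strictly decreasing there; hence no kink lies in $(0,x_1)$ and $\tau_1 \in \{0\}\cup\bigl([x_1,x_n]\setminus\{x_1,\ldots,x_n\}\bigr)$, with the constant piece degenerating ($\tau_1=0$) exactly when a positive slope at the origin is optimal. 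The slope immediately to the right of $\tau_1$ is strictly positive, for otherwise $\hat\theta$ would not bend at $\tau_1$; and the remaining degenerate case, in which no positive slope is ever advantageous, is $\hat\theta\equiv 0$. Finiteness of $\int e^{\hat\theta}\,dP_0$, i.e.\ $\hat\theta'(\tau_m\,+) < \lambda_r(P_0)$, is guaranteed by the same coercivity estimate used for existence.
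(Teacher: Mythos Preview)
Your overall strategy mirrors the paper's closely: coercivity (the paper packages this as Lemma~\ref{lem:coercivity}(ii.b)), a compactness argument plus Fatou for existence, strict concavity of $L$ for uniqueness, and the function $H(t) = \int (x-t)^+ \, d(\hat P - P_{\hat\theta})$ for locating the kinks. Your observation that $H$ is strictly concave on each gap $(x_i, x_{i+1})$ and hence touches zero at most once, and only in the interior, is a clean variant that yields ``at most one kink per gap'' and ``no kink at an observation'' simultaneously; the paper obtains these by two separate steps (an explicit improvement argument between two adjacent kinks, and then an analysis of $H'(\tau\pm)$). Your treatment of the interval $(0,x_1)$ via $H'(t) = -P_{\hat\theta}([0,t)) < 0$ is more explicit than the paper's, which simply notes that the auxiliary constructions from the Setting~2A proof preserve isotonicity automatically.

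There is, however, a genuine gap. You assert that piecewise linearity of $\hat\theta$ ``follows from the structural analysis below'', but that analysis already presupposes a discrete set of kinks $\tau_j$ and does not rule out $\hat\theta$ being, say, strictly convex and $C^2$ on some subinterval of a gap: on such an interval there are no kinks in your sense, so the complementary slackness $H(\tau_j)=0$ is vacuous and your concavity-of-$H$ argument says nothing. The paper closes this by a direct comparison: it sets $\check\theta(x) := \max_i \bigl( \hat\theta(x_i) + \hat\theta'(x_i)(x - x_i) \bigr)$, notes that $\check\theta \le \hat\theta$ with equality at every $x_i$ (and that $\check\theta$ is automatically isotonic since each slope $\hat\theta'(x_i) \ge 0$), and concludes $\hat\theta = \check\theta$ by optimality, which gives piecewise linearity at once. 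You could instead push your KKT route further---on any open set where $\hat\theta'' > 0$ and $\hat\theta' > 0$, two-sided smooth perturbations are feasible, forcing $\hat P - P_{\hat\theta} = 0$ there and contradicting $e^{\hat\theta} p_0 > 0$---but without some such argument the structural conclusions are not justified.
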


Note that the number $m$ in Lemma~\ref{lem:existence.uniqueness.2B} could be $1$, meaning that $\hat{\theta}$ is constant on $[0,\tau_1]$ and linear on $[\tau_1,\infty)$ with slope $\hat{\theta}'(\tau_1\,+) \in (0,\lambda_r(P_o))$.

\section{A general active set strategy}
\label{sec:active.sets}

\subsection{The space of relevant functions}

In view of Lemmas~\ref{lem:existence.uniqueness.1}, \ref{lem:existence.uniqueness.2A} and \ref{lem:existence.uniqueness.2B}, it suffices to consider continuous, piecewise linear functions $\theta$ on
\[
	\XX \ := \ \begin{cases}
		[x_1,x_n]  & \text{in Setting~1} \\
		\R         & \text{in Setting~2A} \\
		[0,\infty) & \text{in Setting~2B}	
	\end{cases}
\]
with changes of slope only in
\[
	\DD \ := \ \begin{cases}
		\{x_i : 1 < i < n\}  & \text{in Setting~1} , \\
		(x_1,x_n)            & \text{in Setting~2A} , \\
		\{0\} \cup (x_1,x_n) & \text{in Setting~2B} .
	\end{cases}
\]
In Setting~2B, a change of slope at $0$ means that $\theta'(0\,+) \ne 0$. The linear space of all such functions $\theta$ is denoted by $\V$. One particular basis is given by the functions
\begin{align*}
	x \ &\mapsto \ 1 , \\
	x \ &\mapsto \ x \quad (\text{in Settings~1 and 2A})
\intertext{and}
	x \ &\mapsto \ V_\tau(x) \ := \ \xi (x - \tau)^+, \quad \tau \in \DD ,
\end{align*}
where
\[
	\xi \ := \ \begin{cases}
		-1 & \text{in Setting~1} , \\
		+1 & \text{in Settings~2A-B} .
	\end{cases}
\]
That means, $\dim(\V)$ equals $n$ in Setting~1 and $\infty$ in Settings~2A-B. Any $\theta \in \V$ may be written as
\begin{equation}
\label{eq:representation.theta}
	\theta(x) \ = \ \left\{\!\!\begin{array}{l}
		\alpha_0 \\[0.5ex]
		\quad + \ \alpha_1 x	\quad (\text{in Settings~1 and 2A}) \\[0.5ex]
		\displaystyle
		\quad\quad + \ \sum_{\tau \in \DD} \beta_\tau V_\tau(x)
	\end{array}\!\!\right\}
\end{equation}
with real coefficients $\alpha_0, \alpha_1, \beta_\tau$ such that $\beta_\tau \ne 0$ for at most finitely many $\tau \in \DD$. Note that $\xi \beta_\tau$ is equal to the change of slope, $\theta'(\tau\,+) - \theta'(\tau\,-)$, whence
\[
	\theta \in \Theta
	\quad\text{if and only if} \quad
	\beta_\tau \ge 0 \ \ \text{for all} \ \tau \in \DD .
\]

\subsection{Properties of $L$}
\label{subsec:properties.L}

On the set $\V$, the functional $L$ is continuous with respect to the norm
\begin{equation}
\label{eq:norm.on.V}
	\|\theta\| \ := \ \begin{cases}
		\max_{x \in [x_1,x_n]} \, |\theta(x)|
			& \text{in Setting~1}, \\
		\max_{x \in [x_1,x_n]} \, |\theta(x)| + |\theta'(x_1)| + |\theta'(x_n)|
			& \text{in Settings~2A-B} .
	\end{cases}		
\end{equation}
For Setting~1, $\|\cdot\|$ quantifies uniform convergence on $\XX$. For Settings~2A-B, convergence with respect to $\|\cdot\|$ is equivalent to uniform convergence on arbitrary bounded subsets of $\XX$. Moreover, in Setting~1, $L$ is real-valued, whereas in Settings~2A-B it follows from our assumptions on $P_o$ that
\[
	\{\theta \in \V : L(\theta) > - \infty\}
	\ = \ \begin{cases}
		\{\theta \in \V : \theta'(x_1) > \lambda_\ell(P_o)
				\ \text{and} \ \theta'(x_n) < \lambda_r(P_o)\}
			& \text{in Setting~2A} , \\
		\{\theta \in \V : \theta'(x_n) < \lambda_r(P_o)\}
			& \text{in Setting~2B} .
	\end{cases}
\]
Finally, on the set $\{\theta \in \V : L(\theta) > - \infty\}$, the functional $L$ is strictly concave. Precisely, for $\theta, v \in \V$ with $L(\theta) > -\infty$,
\begin{align*}
	DL(\theta,v) \
	&:= \ \frac{d}{dt} \Big|_{t = 0} \, L(\theta + tv)
		\ = \ \int v \, d\hat{P} - \int_{\XX} v e_{}^\theta \, dM , \\
	H(\theta,v) \
	&:= \ - \frac{d^2}{dt^2} \Big|_{t = 0} \, L(\theta + tv)
		\ = \ \int_{\XX} v^2 e_{}^\theta \, dM .
\end{align*}
These derivatives $DL(\theta,v)$ and $H(\theta,v)$ are well-defined, because $\int_{\XX} e^{\theta(x) + \eps |x|} \, M(dx) < \infty$ for sufficiently small $\eps > 0$. Note that $H(\theta,v) > 0$ unless $\|v\| = 0$.

\subsection{Characterizing $\hat{\theta}$}
\label{subsec:Characterization}

The properties of $L$ imply that a function $\theta \in \V \cap \Theta$ with $L(\theta) > - \infty$ equals $\hat{\theta}$ if and only if
\begin{equation}
\label{eq:characterization.0}
	DL(\theta,v) \ \le \ 0
	\quad\text{for any} \ v \in \V \
		\text{such that} \ \theta + tv \in \Theta \ \text{for some} \ t > 0 .
\end{equation}
Representing $\theta$ as in \eqref{eq:representation.theta} and $v$ analogously, one can easily verify that \eqref{eq:characterization.0} is equivalent to the following four conditions:
\begin{align}
\label{eq:characterization.1a}
	\int_{\XX} e_{}^\theta \, dM \
	&= \ 1 , \\
\label{eq:characterization.1b}
	\int_{\XX} x e_{}^{\theta(x)} \, M(dx) \
	&= \ \hat{\mu} \quad (\text{in Settings~1 and 2A}) , \\
\label{eq:characterization.1c}
	\int_{\XX} V_\tau^{} e_{}^{\theta} \, dM \
	&= \ \int V_\tau \, d\hat{P}
		\quad \text{whenever} \ \beta_\tau > 0 , \\
\label{eq:characterization.1d}
	\int_{\XX} V_\tau e^\theta \, dM \
	&\ge \ \int V_\tau \, d\hat{P}
		\quad \text{whenever} \ \beta_\tau = 0 ,
\end{align}
where $\hat{\mu}$ denotes the empirical mean $\hat{\mu} := \int x \, \hat{P}(dx) = \sum_{i=1}^n w_i x_i$.

\paragraph{Local optimality.}
Requirements (\ref{eq:characterization.1a}--\ref{eq:characterization.1c}) can be interpreted as follows: For $\theta \in \V$ let $D(\theta) \subset \DD$ be the finite set of its ``deactivated (equality) constraints''. That means,
\[
	D(\theta) \ := \ \bigl\{ \tau \in \DD : \theta'(\tau\,-) \ne \theta'(\tau\,+) \bigr\} .
\]
For an arbitrary finite set $D \subset \DD$ we define
\[
	\V_D \ := \ \bigl\{ \theta \in \V : D(\theta) \subset D \bigr\} .
\]
This is a linear subspace of $\V$ with dimension $2 + \#D$ (in Settings~1 and 2A) or $1 + \# D$ (in Setting~2B). Then requirements (\ref{eq:characterization.1a}--\ref{eq:characterization.1c}) are equivalent to saying that $\int_{\XX} v e_{}^\theta \, dM = \int v \, d\hat{P}$ for all $v \in \V_{D(\theta)}$, that means,
\begin{equation}
\label{eq:optimality.1}
	DL(\theta, v) \ = \ 0
	\quad\text{for all} \ v \in \V_{D(\theta)} .
\end{equation}
In other words, $\theta$ is ``locally optimal'' in the sense that
\[
	\theta \ = \ \argmax_{\eta \in \V_{D(\theta)}} \, L(\eta) .
\]

\paragraph{Checking global optimality.}
Requirement \eqref{eq:characterization.1d} is equivalent to
\begin{equation}
\label{eq:optimality.2}
	h_\theta(\tau) := DL(\theta, V_\tau) \ \le \ 0
	\quad\text{for all} \ \tau \in \DD \setminus D(\theta) .
\end{equation}
Thus a function $\theta \in \V \cap \Theta$ with $L(\theta) > - \infty$ is equal to $\hat{\theta}$ if and only if it is locally optimal in the sense of \eqref{eq:optimality.1} and satisfies \eqref{eq:optimality.2}. As explained in Section~\ref{subsec:Localised.kinks}, for computational efficiency and numerical accuracy it is advisable to replace the simple kink functions $V_\tau$ with localised versions $V_{\tau,\theta} = V_\tau - \eta_{\tau,\theta}$, where $\eta_{\tau,\theta} \in \V_{D(\theta)}$, but the general description of our methods is easier in terms of the $V_\tau$.

\subsection{Basic procedures}

Our active set method involves a candidate $\theta \in \Theta \cap \V$ for the function $\hat{\theta}$ such that $f_\theta$ defines a probability density w.r.t.\ $M$ and a finite set $D \subset \DD$ such that $D(\theta) \subset D$.

\subsubsection*{Basic step~1: Obtaining a proposal $\theta_{\rm new}$ via Newton's method}

Recall that the functional $L$ is continuous and concave on the finite-dimensional space $\V_D$. Moreover, on $\{\eta \in \V_D : L(\eta) > -\infty\}$ it is twice continuously differentiable with negative definite Hessian operator. Thus we may perform a standard Newton step to obtain a function $\theta_{\rm new} \in \V_D$ such that
\[
	\delta := DL(\theta, \theta_{\rm new} - \theta) \ \ge \ 0
\]
with equality if and only if
\[
	\theta = \theta_{\rm new} \ = \ \argmax_{\eta \in \V_D} \, L(\eta) .
\]
Even in case of $\delta > 0$, it may happen that $L(\theta_{\rm new}) \le L(\theta)$. To guarantee a real improvement, we apply a standard Armijo--Goldstein step size correction and replace $\theta_{\rm new}$ with $\theta + 2^{-n} (\theta_{\rm new} - \theta)$, where $n$ is the smallest nonnegative integer such that
\[
	L(\theta + 2^{-n}(\theta_{\rm new} - \theta)) - L(\theta)
	\ \ge \ 2^{-n} DL(\theta, \theta_{\rm new} - \theta) / 3 .
\]
(A theoretical justification of this step size correction can be found, for instance, in \cite{duembgen2017}.) In algorithmic language, as long as $L(\theta_{\rm new}) < L(\theta) + \delta/3$, we replace $(\theta_{\rm new}, \delta)$ with the pair $\bigl( (\theta + \theta_{\rm new})/2, \delta/2 \bigr)$. After finitely many steps, the new pair $(\theta_{\rm new}, \delta)$ will satisfy $L(\theta_{\rm new}) \ge L(\theta) + \delta/3$ and $\delta = DL(\theta, \theta_{\rm new} - \theta) > 0$. In the pseudocode provided later, this Newton--Armijo--Goldstein step is abbreviated as ``$(\theta_{\rm new},\delta) \leftarrow \text{Newton}(\theta, D)$''.

\subsubsection*{Basic step 2: Modification of $\theta$ or reduction of $D$}

Having computed a new proposal $\theta_{\rm new}$ as in basic step~1, where $\delta = DL(\theta, \theta_{\rm new} - \theta) > 0$, we first check whether it belongs to $\Theta$ or at least satisfies
\[
	(1 - t) \theta + t \theta_{\rm new} \ \in \ \Theta
	\quad\text{for some} \ t > 0 .
\]
If we represent $\theta$ and $\theta_{\rm new}$ as in \eqref{eq:representation.theta} with coefficients $\alpha_0, \alpha_1, \beta_\tau$ for $\theta$ and $\alpha_{0,{\rm new}}, \alpha_{1,{\rm new}}, \beta_{\tau,{\rm new}}$ for $\theta_{\rm new}$, then the latter requirement is satisfied if
\begin{equation}
\label{eq:valid.direction}
	\beta_{\tau,{\rm new}} > 0
	\quad\text{whenever} \ \tau \in D \setminus D(\theta) .
\end{equation}
If \eqref{eq:valid.direction} is violated, we leave $\theta$ unchanged, but we replace $D$ with $D \setminus \{\tau_o\}$, where $\tau_o$ is an index in $D \setminus D(\theta)$ such that $\beta_{\tau_o,{\rm new}}$ is minimal.
If \eqref{eq:valid.direction} is satisfied, we perform a second step size correction and replace $\theta$ with $(1 - t_o) \theta + t_o \theta_{\rm new}$, where $t_o \in (0,1]$ is the largest number such that the latter convex combination belongs to $\Theta$. An explicit expression for $t_o$ is given by
\[
	t_o \
	:= \ \max \bigl\{ t \in (0,1] :
		(1 - t)\theta + t \theta_{\rm new} \in \Theta \bigr\}
	\ = \ \min \Bigl( \{1\} \cup
	 	\Bigl\{ \frac{\beta_\tau}{\beta_\tau - \beta_{\tau,{\rm new}}}
			: \tau \in D(\theta), \beta_{\tau,{\rm new}} < 0 \Bigr\} \Bigr) .
\]
In addition, we then replace $\theta$ with $\theta - c$ for some constant $c$ such that $f_\theta$ defines a probability density. Finally, we replace $D$ with $D(\theta)$ for the modified candidate $\theta$. Note that $L(\theta)$ increases strictly, and in case of $t_o < 1$, the new set $D$ is a proper subset of the previous set $D$.

All in all, we obtain a new pair $(\theta,D)$ such that $L(\theta)$ has increased strictly or $D$ is a proper subset of the former set $D$. Moreover, the new $\theta$ differs from the previous one if and only if the new value $L(\theta)$ is strictly larger than the previous one. In the pseudocode provided later, this whole modification of $(\theta,D)$ is written as ``$(\theta,D) \leftarrow \text{StepForward}(\theta,D,\theta_{\rm new})$''.

\subsubsection*{Local search}

If we start from a pair $(\theta,D)$ with $\theta \in \Theta \cap \V$, $L(\theta) > -\infty$ and $D \supset D(\theta)$, a local search means to iterate basic steps~1 and 2 with a certain threshold $\delta_{\rm Newton} \ge 0$ as follows:
\[
	\begin{array}{l}
	\hline
	(\theta_{\rm new},\delta) \ \leftarrow \ \text{Newton}(\theta, D) \\
	\text{while} \ \delta > \delta_{\rm Newton} \ \text{do} \\
	\strut\quad (\theta,D)
		\leftarrow \text{StepForward}(\theta,D,\theta_{\rm new}) \\
	\strut\quad (\theta_{\rm new},\delta)
		\leftarrow \text{Newton}(\theta,D,\theta_{\rm new}) \\
	\text{end while} \\
	\hline
	\end{array}
\]
Imagine for the moment that $\delta_{\rm Newton} = 0$. After finitely many iterations, the set $D$ would remain unchanged and be equal to $D(\theta)$, while the first assignment within the while-loop would amount to $\theta \leftarrow \theta_{\rm new}$. That means, eventually, a local search leads to a standard Newton procedure and a locally optimal function $\theta$.

Note also that after finitely many steps, $L(\theta)$ is strictly larger than the original value unless the starting point $\theta$ was already locally optimal while the set $D \supsetneq D(\theta)$ has been chosen poorly in the sense that basic step~2 leaves $\theta$ unchanged and results in a stepwise reduction of $D$ until $D = D(\theta)$ again.

In practice, of course, we run a local seach with a small threshold $\delta_{\rm Newton} > 0$. The resulting $\theta$ is called \textsl{almost locally optimal}.

\subsubsection*{Basic step 3: Deactivating constraints}

Suppose that $\theta \in \Theta \cap \V$ is (almost) locally optimal, but \eqref{eq:optimality.2} is violated. More precisely, suppose that $\max_{\tau \in \DD} h_\theta(\tau)$ is strictly larger than a given threshold $\delta_{\rm Knot} \ge 0$. Then we choose a nonempty finite set $D_o \subset \DD \setminus D(\theta)$ such that
\begin{equation}
\label{ineq:theta.not.yet.optimal}
	h_\theta(\tau_o) \ > \ \delta_{\rm Knot}
	\quad\text{for all} \ \tau_o \in D_o .
\end{equation}
Thereafter we start a new local search with $D = D(\theta) \cup D_o$.

The obvious question is whether such a choice of $D$ is reasonable. It may happen that during the first iterations of the local search, $\theta$ remains unchanged while elements of the set $D_o$ are removed again. But eventually, at least one of its elements will be retained and $\theta$ will be modified. To prove this claim, we write $\theta_{\rm new} = \theta + v + \sum_{\tau_o \in D_o} \beta_{\tau_o,{\rm new}} V_{\tau_o}$ with some function $v \in \V_{D(\theta)}$. Then it follows from \eqref{eq:optimality.1} that
\[
	0 \ < \ DL(\theta, \theta_{\rm new} - \theta)
	\ = \ DL(\theta, v)
		+ \sum_{\tau_o \in D_o} \beta_{\tau_o,{\rm new}} DL(\theta, V_{\tau_o})
	\ = \ \sum_{\tau_o \in D_o} \beta_{\tau_o,{\rm new}} h_\theta(\tau_o) ,
\]
and because of \eqref{ineq:theta.not.yet.optimal}, at least one coefficient $\beta_{\tau_o,{\rm new}}$, $\tau_o \in D_o$, has to be strictly positive. Consequently, starting a local search with this choice of $D$ yields a strict improvement of $L(\theta)$ after at most $\# D_o$ iterations.

Our explicit construction of $D_o$ depends on the current set $D(\theta)$ and follows essentially the proposal of \cite{Liu_Wang_2018}. Suppose first that $D(\theta) = \emptyset$. Then we choose $D_o = \{\tau_o\}$ with a point $\tau \in \DD$ such that $h_\theta(\tau_o) = \max_{\tau \in \DD} h_\theta(\tau)$. Otherwise, let $\tau_1 < \cdots < \tau_m$ be the $m \ge 1$ different elements of $D(\theta)$. With $\tau_0 := -\infty$ and $\tau_{m+1} := \infty$, we set $\DD_j := \DD \cap (\tau_j, \tau_{j+1})$. For each $0 \le j \le m$ with $\DD_j \ne \emptyset$, we determine a point $\tau_o \in \argmax_{\tau \in \DD_j} h_\theta(\tau)$. If $h_\theta(\tau_o)$ is greater than both $\delta_{\rm Knot}$ and $10^{-3} \max_{\tau \in \DD} h_\theta(\tau)$, then $\tau_o$ is added to $D_o$. The latter condition on $h_\theta(\tau_o)$ prevents us from deactivating too many constraints early on, which would increase the dimensionality unnecessarily.

All in all, basic step~3 amounts to a procedure ``$(h_o,D_o) \leftarrow \text{NewKnots}(\theta)$''. It returns $h_o := \max_{\tau \in \DD} h_\theta(\tau)$ and, in case of $h_o > \delta_{\rm Knot}$, a nonempty finite set $D_o \subset \DD$ such that $DL(\theta,V_{\tau_o}) > \max(10^{-3} h_o, \delta_{\rm Knot})$ for all $\tau_o \in D_o$.

\subsubsection*{Explicit maximisation of $h_\theta$}

In Setting~1, maximizing $h_\theta$ over subsets of $\DD$ is straightforward, because $\DD$ is finite. In Settings~2A-B, suppose that $\theta \in \Theta \cap \V$ is (almost) locally optimal, and that $P_\theta(dx) := e_{}^{\theta(x)} \, P_o(dx)$ defines a probability measure on $\XX$. Here,
\[
	h_\theta(\tau) \ = \ \int V_\tau \, d(\hat{P} - P_\theta)
	\ = \ \int (x - \tau)^+ \, (\hat{P} - P_\theta)(dx) .
\]
Note that for any probability measure $Q$ on $\R$ with $\int |x| \, Q(dx) < \infty$ and $\tau \in \R$,
\[
	H_Q(\tau) \ := \ \int (x - \tau)^+ \, Q(dx)
\]
defines a convex and non-increasing function $H_Q : \R \to [0,\infty)$ with derivatives
\begin{align*}
	H_Q'(\tau \, -) \ &= \ - Q([\tau,\infty)) \ = \ Q((-\infty,\tau)) - 1 , \\
	H_Q'(\tau \, +) \ &= \ - Q((\tau,\infty)) \ = \ Q((-\infty,\tau]) - 1 .
\end{align*}
Hence $h_\theta = H_{\hat{P}} - H_{P_\theta}$ is a Lipschitz-continuous function on $\R$ with derivatives
\[
	h_\theta'(\tau\,\pm)
	\ = \ \hat{F}(\tau\,\pm) - F_\theta(\tau) ,
\]
where $\hat{F}$ and $F_\theta$ denote the cumulative distribution functions of $\hat{P}$ and $P_\theta$, respectively. Note that $\hat{F}$ is constant on the intervals $(-\infty,x_1)$, $[x_1,x_2)$, \ldots, $[x_{n-1},x_n)$, $[x_n,\infty)$ whereas $F_\theta$ is continuous on $\R$ and strictly increasing on $\XX$. Consequently,\\[1ex]
(i) \ $h_\theta$ is strictly concave on each interval $[x_i,x_{i+1}]$, $1 \le i < n$,\\[0.5ex]
(ii) \ $h_\theta$ is concave and non-increasing on $(-\infty,x_1]$,\\[0.5ex]
(iii) \ $h_\theta$ is concave and non-decreasing on $[x_n,\infty)$ with $\lim_{\tau \to \infty} h_\theta(\tau) = 0 > h_\theta(x_n)$.\\[1ex]
The limit in (iii) follows from dominated convergence together with the fact that $(x - x_n)^+ \ge (x - \tau)^+ \to 0$ as $x_n \le \tau \to \infty$. The strict inequality for $h_\theta(x_n)$ follows from $\hat{P}((x_n,\infty)) = 0 < P_\theta((x_n,\infty))$. Hence any $\tau$ with $h_\theta(\tau) > 0$ has to satisfy $\tau < x_n$.

In Setting~2A one may even conclude from local optimality of $\theta$ that\\[1ex]
(ii') \ $h_\theta$ is concave and non-increasing on $(-\infty,x_1]$ with limit $\lim_{\tau \to - \infty} h_\theta(\tau) = 0 > h_\theta(x_1)$,\\[1ex]
because $\int (x - \tau) \, (\hat{P} - P_\theta)(dx) = 0$, so the equality $(x - \tau)^+ = x - \tau + (\tau - x)^+$ leads to the alternative representation $h_\theta(\tau) = \int (\tau - x)^+ \, (\hat{P} - P_\theta)(dx)$. Consequently, it suffices to search for local maximizers of $h_\theta$ on $(x_1,x_n)$.

In Setting~2B, (ii) implies that the maximizer of $h_\theta$ on $[0,x_1]$ is $0$. Hence it suffices to search for local maximizers of $h_\theta$ on $\{0\} \cup (x_1,x_n)$.

If we want to maximize $h = h_\theta$ on an interval $[a,b] = [x_i,x_{i+1}]$ for some $1 \le i < n$, we could proceed as follows: First we check whether $h'(a\,+) \le 0$ or $h'(b\,-) \ge 0$. In these cases, $h(a) = \max_{\tau \in [a,b]} h(\tau)$ or $h(b) = \max_{\tau \in [a,b]} h(\tau)$, respectively. In case of $h'(a\,+) > 0 > h'(b\,-)$, we determine the unique point $\tau \in (a,b)$ satisfying $h_\theta'(\tau) = 0$. In general, this leads to a numerical approximation of $\tau$, but in our specific examples for Settings~2A-B, $\tau$ may be computed explicitly by means of the standard Gaussian or gamma quantile functions, see Sections~\ref{subsec:Details.2A} and \ref{subsec:Details.2B}.

\subsubsection*{Finding a starting point $\theta$}

One possibility to determine a starting point $\theta$ is to activate all constraints initially and find an optimal function in $\V_{\emptyset} \subset \Theta$. In Setting~2A, we are then looking for a function $\theta(x) = \hat{\kappa}x - c(\hat{\kappa})$ with $c(\kappa) := \log \int_{\XX} e^{\kappa x} \, P_o(dx)$, and $\hat{\kappa} \in \R$ is the unique real number such that $c'(\hat{\kappa}) = \hat{\mu}$. Specifically, if $P_o = \NN(0,1)$, then $c(\kappa) = \kappa^2/2$, whence $\hat{\kappa} = \hat{\mu}$.

In Setting~2B, activating all constraints would lead to the trivial space $\V_{\emptyset} = \{0\}$. Alternatively, one could determine an optimal function in $\V_{\{0\}} \cap \Theta$. With $\hat{\kappa}$ as before, i.e.\ $c'(\hat{\kappa}) = \hat{\mu}$, the optimal function $\theta$ is given by $\theta(x) = \hat{\kappa}^+ x - c(\hat{\kappa}^+)$. Specifically, if $P_o = \mathrm{Gamma}(\alpha,\beta)$, then $c(\kappa) = - \alpha \log((1 - \kappa/\beta)^+)$, so that $\hat{\kappa} = \beta - \alpha/\hat{\mu}$.

All in all, for Settings~2A-B we obtain a starting point $\theta \in \Theta$ depending only on $\hat{\mu}$ which is locally optimal, indicated as ``$\theta \leftarrow \text{Start}(\hat{\mu})$''.

In Setting~1, finding an optimal function in $\V_{\emptyset}$ would amount to solving a nonlinear equation numerically. Alternatively, we start with the MLE $\theta$ of a Gaussian log-density up to an additive constant, i.e.
\[
	\theta_0(x) \
	:= \ - (x - \hat{\mu})^2/(2 \hat{\sigma}^2)
\]
with $\hat{\mu} = \sum_{i=1}^n w_i x_i$ and $\hat{\sigma}^2 := \sum_{i=1}^n w_i (x_i - \hat{\mu})^2$. Next we fix a nonempty set $D_0 \subset \DD$ and replace $\theta_0$ with the unique linear spline $\theta \in \V_{D_0}$ such that $\theta \equiv \theta_0$ on $D_0 \cup \{x_1,x_n\}$. Then we normalize it via $\theta \leftarrow \theta - \log \bigl( \int_{x_1}^{x_n} e^{\theta(x)} \, dx \bigr)$. All these operations are hidden behind ``$\theta \leftarrow \text{Start}(\hat{\mu},\hat{\sigma},D_0)$'' in the subsequent pseudocode. Note that this starting point $\theta$ is not locally optimal in general.

\subsection{Complete algorithms}
\label{subsec:Complete.algorithms}

In Settings~2A-B, where a locally optimal starting point is easily found, our complete algorithm works as follows:
\[
	\begin{array}{l}
	\hline
	\theta \leftarrow \text{Start}(\hat{\mu}) \\
	(h_o,D_o) \leftarrow \text{NewKnots}(\theta) \\
	\text{while} \ h_o > \delta_{\rm Knot} \ \text{do} \\
	\strut\quad
		D \leftarrow D(\theta) \cup D_o \\
	\strut\quad \text{\sl \# Local search:} \\
	\strut\quad
		(\theta_{\rm new},\delta) \leftarrow \text{Newton}(\theta,D) \\
	\strut\quad
		\text{while} \ \delta > \delta_{\rm Newton} \ \text{do} \\
	\strut\quad\quad
			(\theta,D) \leftarrow \text{StepForward}(\theta,D,\theta_{\rm new}) \\
	\strut\quad\quad
			(\theta_{\rm new},\delta) \leftarrow \text{Newton}(\theta,D) \\
	\strut\quad
		\text{end while} \\
	\strut\quad \text{\sl \# Check global optimality:} \\
	\strut\quad	(h_o,D_o) \leftarrow \text{NewKnots}(\theta) \\
	\text{end while} \\
	\hline
	\end{array}
\]
In Setting~1, our algorithm has a slightly different beginning, because the starting point $\theta$ is not locally optimal:
\[
	\begin{array}{l}
	\hline
	\theta \leftarrow
		\text{Start}(\hat{\mu},\hat{\sigma},D_0) \\
	(D_o,h_o) \leftarrow (\emptyset,\infty) \\
	\text{while} \ h_o > \delta_{\rm Knot} \ \text{do} \\
	\strut\quad \ldots \\
	\text{end while} \\
	\hline
	\end{array}
\]

Note that in Setting~1, an affine transformation $x \mapsto a + bx$ of our data with $b > 0$ would result in new directional derivatives $DL(\theta,V_{\tau_o})$ which differ from the original values by this factor $b$. By way of contrast, the output $\delta$ of $\text{Newton}(\theta,D)$ is invariant under such transformations. Hence it is advisable to distinguish the stopping thresholds $\delta_{\rm Newton}$ and $\delta_{\rm Knots}$, where $\delta_{\rm Knot} > 0$ is chosen to be a small constant times $\hat{\sigma}$. In Settings~2A-B the parameter $\delta_{\rm Knot}$ should reflect the spread of the reference distribution $P_o$.

\subsection{Convergence}
\label{subsec:Convergence}

After circulating a first version of the present paper, \cite{Sommer-Simpson_2019} provided a proof of convergence of our algorithm in Setting~2B. Lemma~\ref{lem:convergence} below implies that in all three settings, the output of our algorithm is arbitrarily close to $\hat{\theta}$ if $\delta_{\rm Knot}$ and $\delta_{\rm Newton}$ are sufficiently small. Our proof generalizes and simplifies the arguments of \cite{Sommer-Simpson_2019}.

To formulate the result, let $\theta \in \Theta \cap \V$ with $L(\theta) > -\infty$. To check local optimality of $\theta$, we perform a Newton step for $L$ on the parameter space $\V_{D(\theta)}$. This yields a function $\theta_{\rm new} \in \V_{D(\theta)}$ maximizing a second order Taylor approximation of $L$ on $\V_{D(\theta)}$ and the directional derivative
\[
	\delta_{\rm Newton}(\theta) \ := \ DL(\theta, \theta_{\rm new} - \theta) .
\]
In our algorithm, $\theta$ is viewed as approximately locally optimal if $\delta_{\rm Newton}(\theta)$ is smaller than a given number $\delta_{\rm Newton}$. If that is the case, we check whether
\[
	\delta_{\rm Knot}(\theta) \ := \ \max_{\tau \in \DD} \, DL(\theta,V_{\tau,\theta})
\]
is smaller than a given number $\delta_{\rm Knot}$. Note also that during our algorithm the value $L(\theta)$ never decreases.

\begin{Lemma}
\label{lem:convergence}
In all Settings and for any constant $L_o \in (-\infty,L(\hat{\theta}))$, there exist constants $C_{\rm Newton}$ and $C_{\rm Knot}$ such that for all $\theta \in \Theta \cap \V$ with $L(\theta) \ge L_o$,
\[
	L(\hat{\theta}) - L(\theta)
	\ \le \ C_{\rm Newton} \sqrt{\delta_{\rm Newton}(\theta)}
		+ C_{\rm Knot} \delta_{\rm Knot}(\theta) .
\]
\end{Lemma}

\begin{Remark}
\label{rem:from.L.to.theta}
For $\theta \in \Theta \cap \V$, it follows from $L(\theta) \to L(\hat{\theta})$ that $\|\theta - \hat{\theta}\| \to 0$, where $\|\cdot\|$ is the norm in \eqref{eq:norm.on.V}.
\end{Remark}

\section{Numerical examples, simulations and an application}
\label{sec:numerical.examples}

\subsection{Comparisons in Setting~1}

An obvious question is how much better the new algorithm for Setting~1 is in comparison to the active set method of \cite{Duembgen_Rufibach_2011}. To enable a fair comparison, we implemented the latter method as follows:
\[
	\begin{array}{l}
	\hline
	\theta \leftarrow
		\text{Start}(\hat{\mu},\hat{\sigma},D_0) \\
	h_o \leftarrow \infty \\
	D \leftarrow D_0 \\
	\text{while} \ h_o > \delta_{\rm Knot} \ \text{do} \\
	\strut\quad
		(\theta_{\rm new},\delta) \leftarrow
			\text{Newton}(\theta,D) \\
	\strut\quad
		\text{while} \ \delta > \delta_{\rm Newton} \ \text{do} \\
	\strut\quad\quad
			\theta_{\rm new} \leftarrow \text{Newton}(\theta,D) \\
	\strut\quad\quad
			\theta_{\rm new} \leftarrow \text{Normalize}(\theta_{\rm new}) \\
	\strut\quad
		\text{end while} \\
	\strut\quad
		\text{if} \ \theta_{\rm new} \in \Theta \ \text{then}\\
	\strut\quad\quad
			\theta \leftarrow \theta_{\rm new} \\
	\strut\quad\quad
			(h_o,\tau_o) \leftarrow \text{NewKnot}(\theta) \\
	\strut\quad\quad
			D \leftarrow D(\theta) \cup \{\tau_o\} \\
	\strut\quad
		\text{else} \\
	\strut\quad\quad
			(\theta,D) \leftarrow \text{StepForward}(\theta,D,\theta_{\rm new}) \\
	\strut\quad
		\text{end if} \\
	\text{end while} \\
	\hline
	\end{array}
\]
Here $\theta \leftarrow \text{Normalize}(\theta)$ stands for replacing $\theta$ with $\theta - c$ such that $f_\theta$ defines a probability density. And ``$(h_o,\tau_o) \leftarrow \text{NewKnot}(\theta)$'' returns only one point $\tau_o \in \DD$ with maximal directional derivative $h_o = DL(\theta,\tau_o)$. This is the first main difference between the old and the new algorithm. The second main difference is that a full Newton procedure is run on $\V_D$ without checking and enforcing the shape constraint that $\theta \in \Theta$. An advantage of omitting the shape constraint is that the Newton search runs a bit faster. A disadvantage is that we sometimes iterate and optimize in a region far from $\Theta$, whereas in the subsequent $\text{StepForward}(\theta,D,\theta_{\rm new})$, only a rather small step is performed.

Concerning $D_0$, extensive numerical experiments showed that the choice $D_0 = \{x_{j(1)}, x_{j(2)}, x_{j(3)}\}$ with approximately equispaced indices $1 < j(1) < j(2) < j(3) < n$ is a good choice for a broad range of sample sizes $n$. With this choice, we simulated $200$ times a random sample of size $n$ from the standard Gaussian distribution and fitted a log-concave density with the old and the new method. Figure~\ref{fig:S1ratios} shows boxplots of the running time with the old method divided by the running time with the new method. One sees clearly, that the improvement is substantial, particularly for large sample sizes. It is similar in magnitude to the improvements reported by \cite{Wang_2018} for the algorithm of \cite{Liu_Wang_2018}. Table~\ref{tab:S1} reports the means of these relative efficiencies as well as the mean absolute running times. The methods have been implemented in pure R code, and the simulations have been performed on a MacBook Pro (2.6 GHz 6-Core Intel Core i7), the stopping thresholds being $\delta_{\rm Newton} = 10^{-7}/n$ and $\delta_{\rm Knot} = 10^{-7} \hat{\sigma}/n$.

\begin{figure}
\centering
\includegraphics[width=0.7\textwidth]{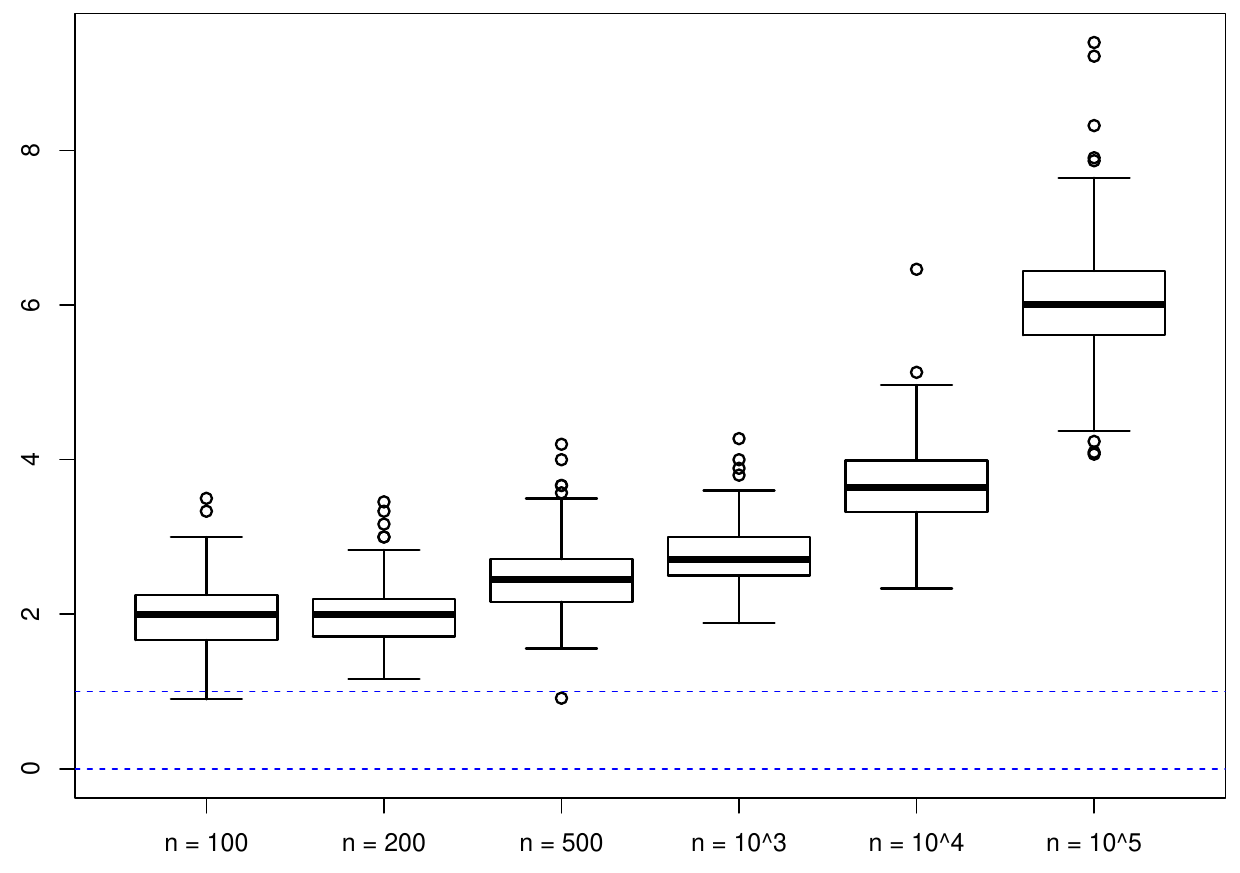}
\caption{Relative efficiencies of the new algorithm for Setting~1 with Gaussian samples.}
\label{fig:S1ratios}
\end{figure}

\begin{table}
\[
	\begin{array}{lcccccc}
	\hline
	\text{Sample size}
		& 100 & 200 & 500 & 10^3 & 10^4 & 10^5 \\
	\hline
	\text{Rel.\ efficiency}
		& 2.003 & 1.988 & 2.467 & 2.749 & 3.615 & 6.067 \\
	\hline
	\text{Running time (s)}
		& 4.425 \cdot 10^{-3}
		& 6.310 \cdot 10^{-3}
		& 8.450 \cdot 10^{-3}
		& 0.0115
		& 0.1029
		& 1.4805 \\
	\hline
	\end{array}
\]
\caption{Mean relative efficiencies and running times of the new algorithm for Setting~1 with Gaussian samples.}
\label{tab:S1}
\end{table}

\subsection{Numerical examples for Settings~2A-B}

\paragraph{Setting~2A.}
Inspired by the testing problem described in Section~\ref{subsec:GoF.test}, we simulated $n = 400$ independent observations $X_i$ with distribution $P_i = \NN(0,1)$ for $i > 20$ and $P_i = \NN(1.5,1)$ for $i \le 20$. With the reference distribution $P_o = \NN(0,1)$, the corresponding log-density ratio equals
\[
	\theta(x) \ = \ \log \frac{dP}{dP_o}(x)
	\ = \ \log(0.95 + 0.05 \,e^{1.5x-1.125} ) .
\]
The estimator $\hat{\theta}$ turned out to have $m = 5$ knots, and Figure~\ref{fig:HplotA} depicts the function
\[
	t \ \mapsto \ h(t) = DL(\hat{\theta},V_t) ,
\]
where the knots of $\hat{\theta}$ are indicated by vertical lines. As predicted by theory, $h(t) \le 0$ for all $t$ with equality in case of $t \in D(\hat{\theta})$. Figure~\ref{fig:ThetaplotA} depicts the true and estimated tail inflation functions $\theta$ and $\hat{\theta}$. Figure~\ref{fig:PplotA} shows the corresponding densities $p_o = \phi$, $p = e^\theta p_o$ and $\hat{p} = e^{\hat{\theta}} p_o$. Note that the estimator $\hat{p}$ captures the heavier right tail of $p$ in comparison to $p_o$. Applying the goodness-of-fit test described in Section~\ref{subsec:GoF.test} to this particular data set yielded a Monte-Carlo p-value smaller than $10^{-3}$ (with $10^5-1$ simulations) for the null hypothesis that all $400$ observations are standard Gaussian.

\begin{figure}
\centering
\includegraphics[width=0.7\textwidth]{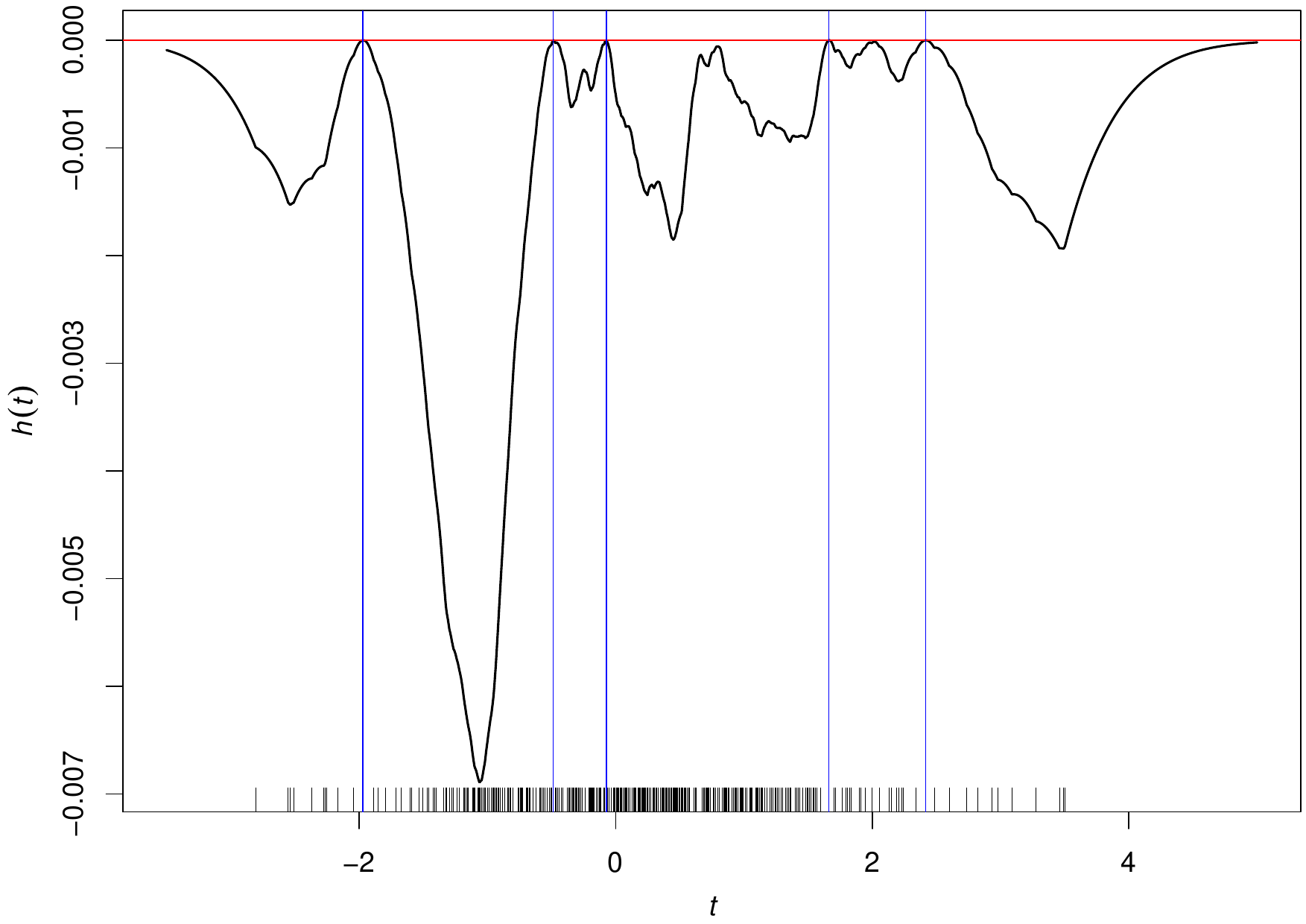}
\caption{Directional derivatives $h(t) = DL(\hat{\theta},V_t)$ for data example in Setting~2A.}
\label{fig:HplotA}
\end{figure}

\begin{figure}
\centering
\includegraphics[width=0.7\textwidth]{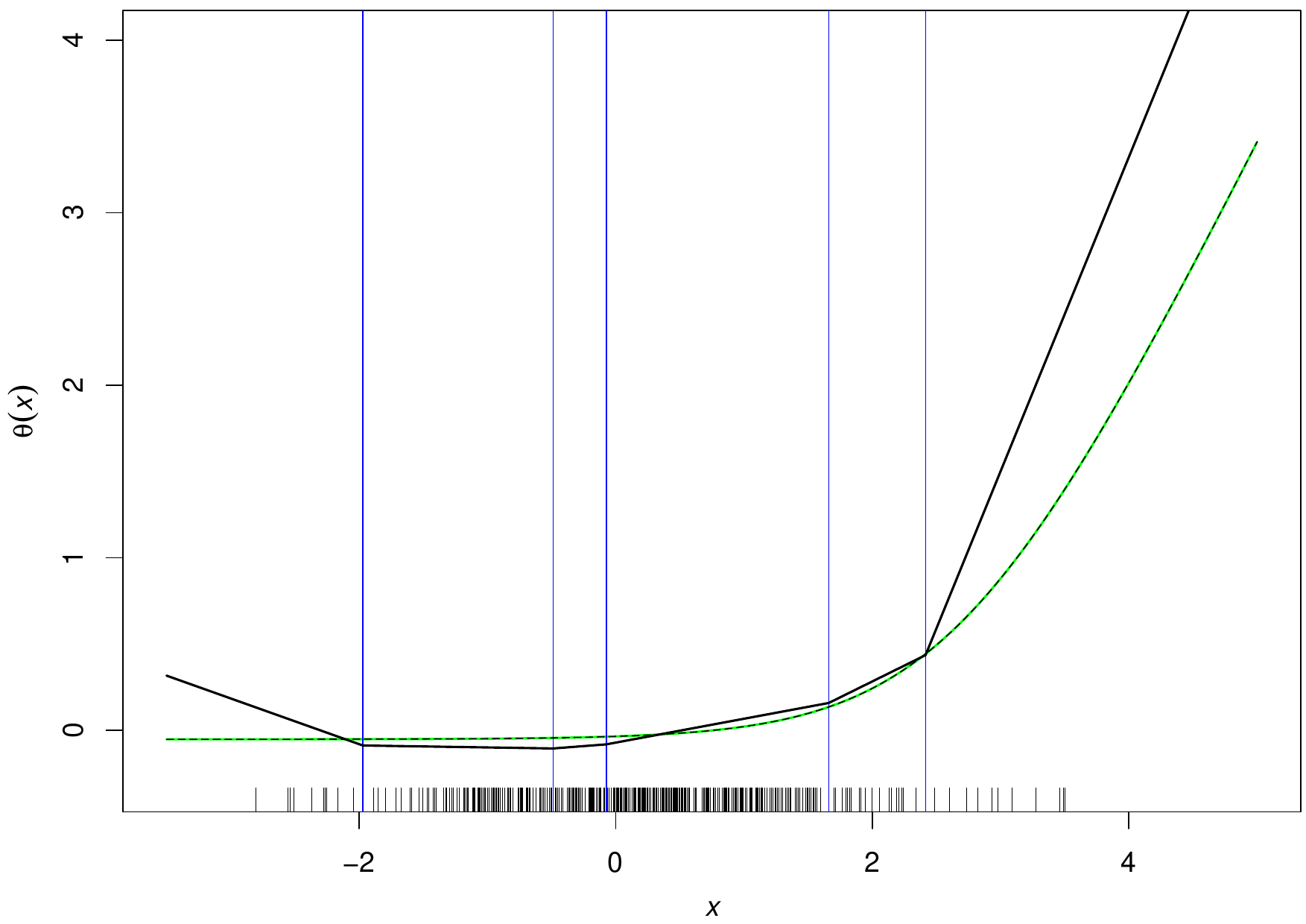}
\caption{True (green, dashed) and estimated (black) tail inflation functions $\theta$ and $\hat{\theta}$ for data example in Setting~2A.}
\label{fig:ThetaplotA}
\end{figure}

\begin{figure}
\centering
\includegraphics[width=0.7\textwidth]{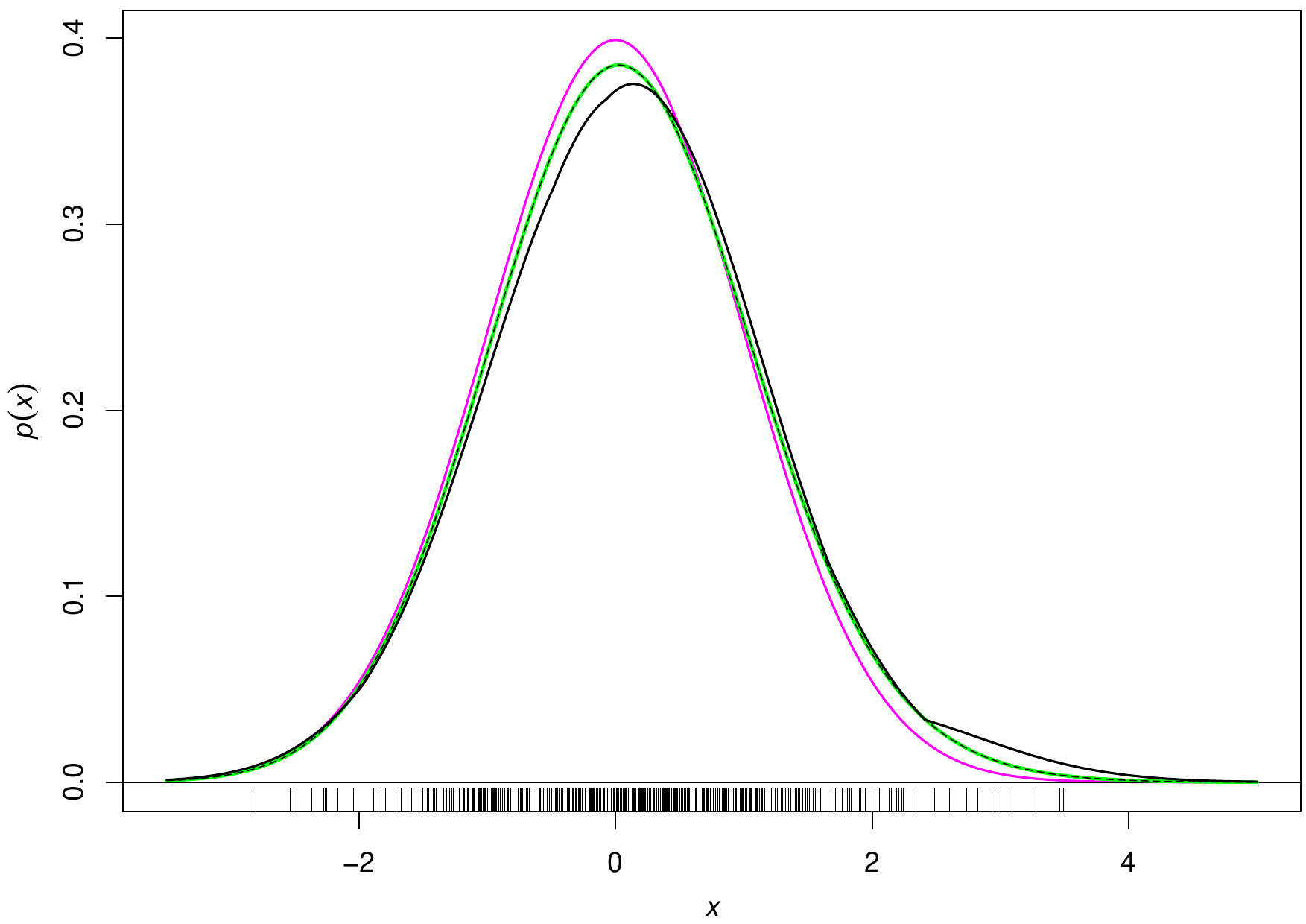}
\caption{Lebesgue densities $p_o$ (magenta), $p = e^\theta p_o$ (green, dashed) and $\hat{p} = e^{\hat{\theta}} p_o$ (black) for data example in Setting~2A.}
\label{fig:PplotA}
\end{figure}

\paragraph{Setting 2B.}
We simulated $n=1000$ independent observations $X_i$ such that $X_i \sim \chi_1^2$ for $i > 200$, $X_i/1.4 \sim \chi_1^2$ for $100 < i \le 200$ and $X_i/2 \sim \chi_1^2$ for $i \le 100$. With the reference distribution $P_o = \chi_1^2$, the corresponding log-density ratio equals
\[
	\theta(x) 
	\ = \ \log \bigl( 8 + 1.4^{-1/2} e^{x/7} + 2^{-1/2} e^{x/4} \bigr) - \log 10 .
\]
The estimator $\hat{\theta}$ turned out to have $m=6$ knots, and Figures~\ref{fig:HplotB} and \ref{fig:ThetaplotB} are analogous to the displays for Setting~2A, showing the directional derivatives $h(\tau) = DL(\hat{\theta},V_\tau)$ and the log-density ratios $\theta, \hat{\theta}$, respectively. Applying the goodness-of-fit test described in Section~\ref{subsec:GoF.test} to this particular data set yielded a Monte-Carlo p-value of $10^{-5}$ (with $10^5-1$ simulations) for the null hypothesis that all $400$ observations have distribution $\chi_1^2$.

\begin{figure}
\centering
\includegraphics[width=0.7\textwidth]{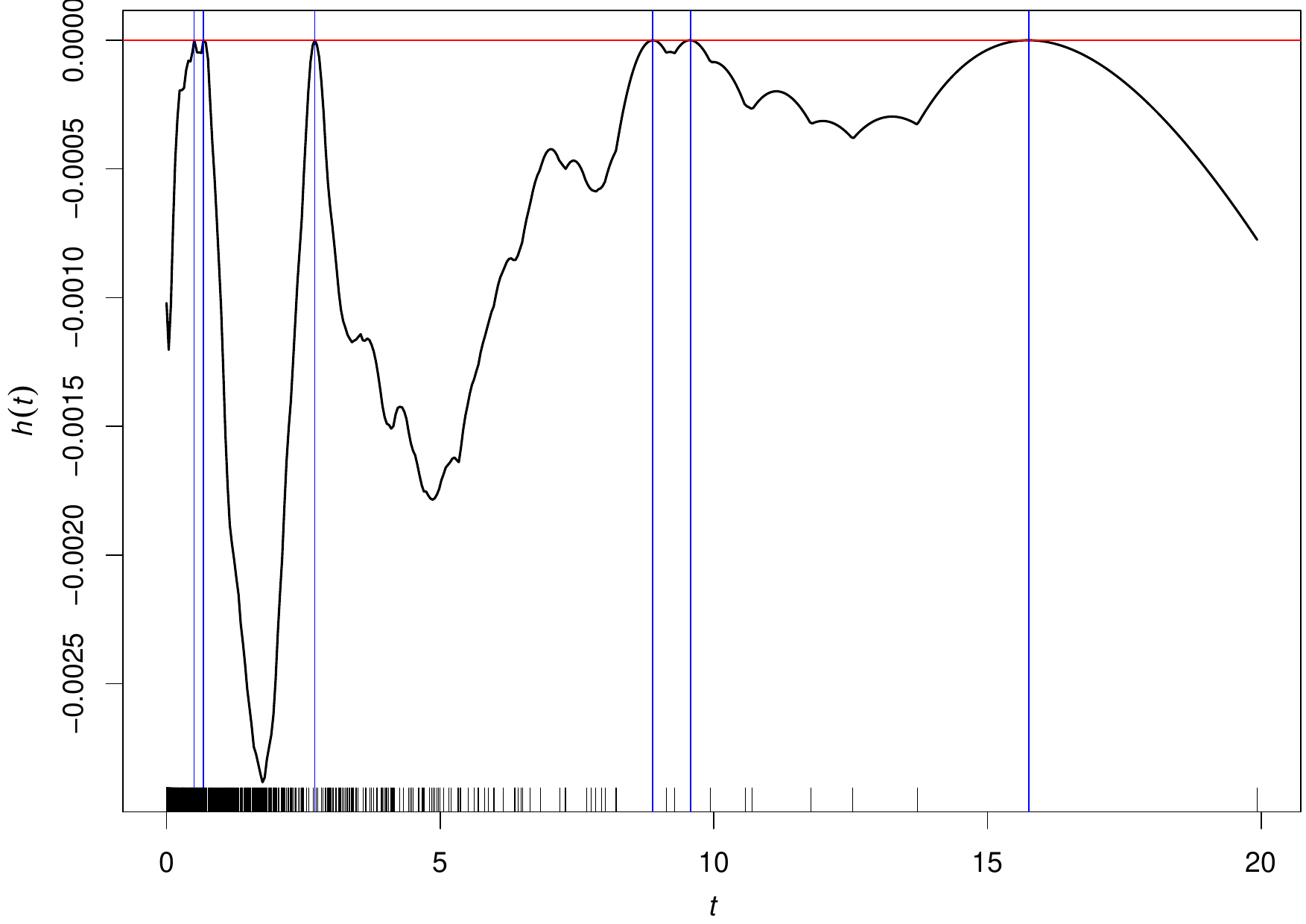}
\caption{Directional derivatives $h(t) = DL(\hat{\theta},V_t)$ for data example in Setting~2B.}
\label{fig:HplotB}
\end{figure}

\begin{figure}
\centering
\includegraphics[width=0.7\textwidth]{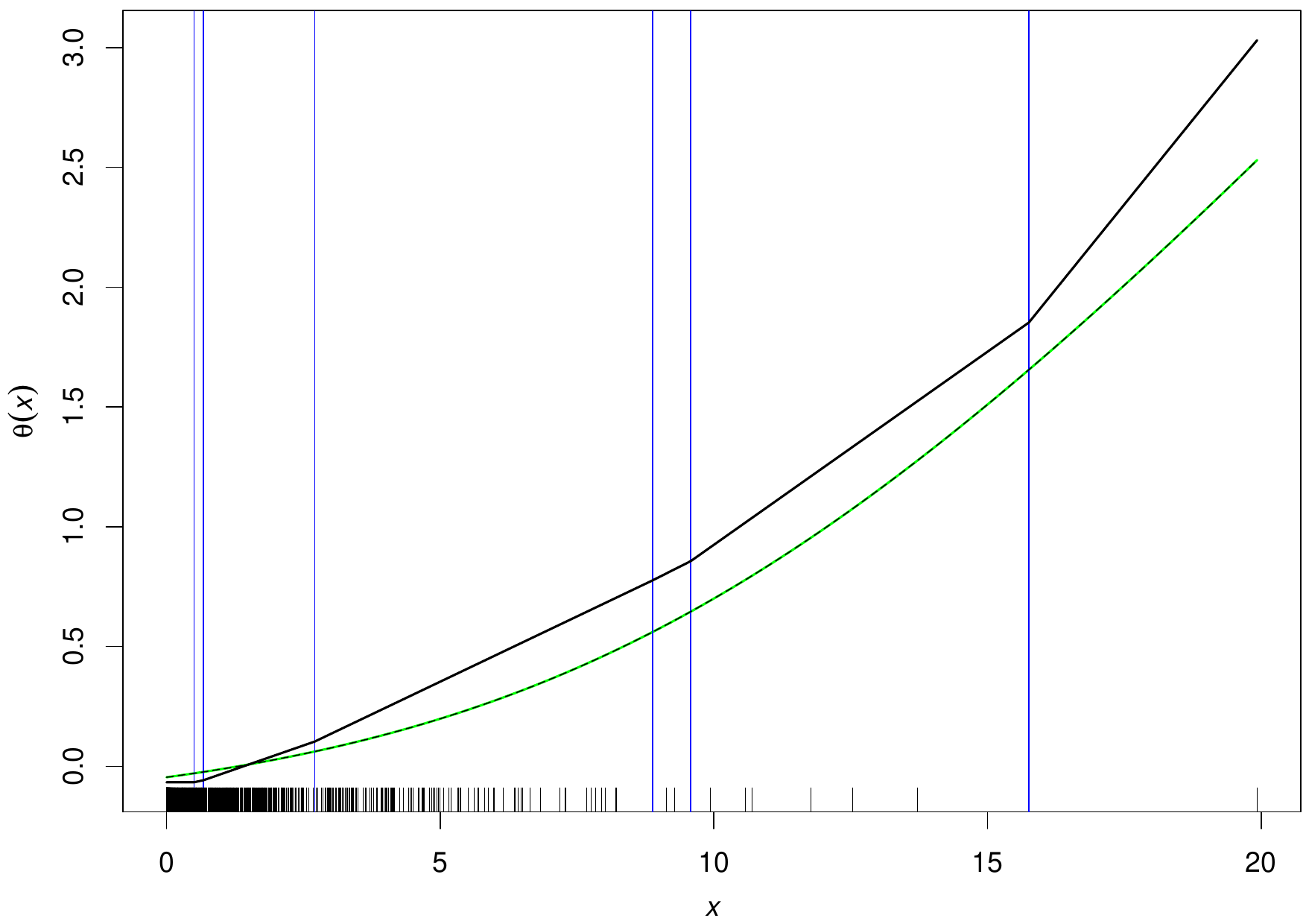}
\caption{True (green, dashed) and estimated (black) tail inflation functions $\theta$ and $\hat{\theta}$ for data example in Setting~2B.}
\label{fig:ThetaplotB}
\end{figure}

\subsection{Data-driven goodness-of-fit tests}
\label{subsec:GoF.test}

With the estimator $\hat{\theta}$ at hand, one may use the likelihood ratio statistic
\[
	T_{LR}(X_1,\ldots,X_n) \ := \ \sum_{i=1}^n \hat{\theta}(X_i)
\]
to test the null hypothesis that all distributions $P_i$ are equal to $P_o$ versus the alternative hypothesis that the marginal $P$ has a convex log-density $\theta \not\equiv 0$ with respect to $P_o$. Large values of $T_{LR}$ indicate a violation of the null hypothesis. The distribution of this test statistic under the null hypothesis is unknown but can be easily estimated via Monte Carlo simulations.

Specifically, consider Setting~2A with $P_o = \NN(0,1)$. As mentioned before, if each distribution $P_i$ and thus the marginal $P$ is a mixture of Gaussian distributions with standard deviation at least $1$, then $\theta = \log(dP/dP_o)$ is convex. This renders $T_{LR}$ an interesting alternative to higher criticism statistics as introduced by \cite{Donoho_Jin_2004} and \cite{Gontscharuk_etal_2016}. In the subsequent power simulations, we focus on a particular union-intersection test similar to those considered by the latter authors: With the order statistics $X_{(1)} < \cdots < X_{(n)}$ of the $X_i$, note that under $H_o$, the random variables $\Phi(X_{(1)}), \ldots, \Phi(X_{(n)})$ are distributed like the order statistics of a sample from the uniform distribution on $[0,1]$. In particular, $\Phi(X_{(i)})$ follows the beta distribution with parameters $i$ and $n+1-i$. Denoting the corresponding distribution function with $B_{i,n+1-i}$, a union-intersection test statistic of $H_o$ is given by
\begin{align*}
	T_{UI}(X_1,\ldots,X_n) \
	:= \ &\min \Bigl(
		\min_{i < (n+1)/2} B_{i,n+1-i}(\Phi(X_{(i)})),
		\min_{i > (n+1)/2} \bigl(1 - B_{i,n+1-i}(\Phi(X_{(i)})) \bigr) \Bigr) \\
	= \ &\min \Bigl(
		\min_{i < (n+1)/2} B_{i,n+1-i}(\Phi(X_{(i)})),
		\min_{i > (n+1)/2} B_{n+1-i,i}(\Phi(-X_{(i)})) \Bigr) ,
\end{align*}
small values indicating a violation of $H_o$. The rationale behind this test statistic is as follows: If $H_o$ is violated and $\theta$ is convex, then the left tail of $P$ is heavier than the one of $P_o$, leading to smaller order statistics $X_{(1)}, X_{(2)}, \ldots$, or the right tail of $P$ is heavier than the one of $P_o$, leading to larger order statistics $X_{(n)}, X_{(n-1)}, \ldots$. We also use the identity $1 - B_{i,n+1-i}(\Phi(x)) = B_{n+1-i,i}(\Phi(-x))$ for numerical reasons.

In a large simulation study involving different sample sizes $n$, we estimated the $(1 - \alpha)$-quantile of the null distribution of $T_{LR}(X_1,\ldots,X_n)$ and the $\alpha$-quantile of $T_{UI}(X_1,\ldots,X_n)$ in $10^{5} - 1$ Monte Carlo simulations, where $\alpha = 1\%, 5\%$. With these critical values, we estimated the power of the two tests at level $\alpha$ under the following distribution of the sample: For a fixed distribution $P_*$ on the real line and a subset $J \subset \{1,2,\ldots,n\}$ with $k \ge 0$ elements, the distributions $P_i$ of the random variables $X_i$ are given by
\[
	P_i \ = \ \begin{cases}
		P_* & \text{if} \ i \in J, \\
		P_o & \text{otherwise} .
	\end{cases}
\]
Specifically, we used $P_* = \NN(1.5, 1)$ and $P_* = \NN(0, 3)$. This setting is similar to the setting of \cite{Donoho_Jin_2004} with $P_i = (1 - k/n) P_o + (k/n) P_*$ for all $i$. The latter setting corresponds to a random set $J$ with $\# J$ having binomial distribution $\mathrm{Bin}(n, k/n)$.

For these two choices of $P_*$, Figures~\ref{fig:Power100}, \ref{fig:Power400} and \ref{fig:Power1000} show the power $\Pr(\text{reject} \ H_o \ \text{at level} \ \alpha)$ of both tests as a function of $k = \#J$. Clearly, the test based on $T_{LR}$ has higher power than the one based on $T_{UI}$. The difference in case of $P_* = \NN(0,3)$ is stronger than in case of the simple shift altervative $P_* = \NN(1.5, 1)$.

\begin{figure}
\strut \hfill $P_* = \NN(1.5, 1)$ \hfill\hfill $P_* = \NN(0,3)$ \hfill \strut

\includegraphics[width=0.45\textwidth]{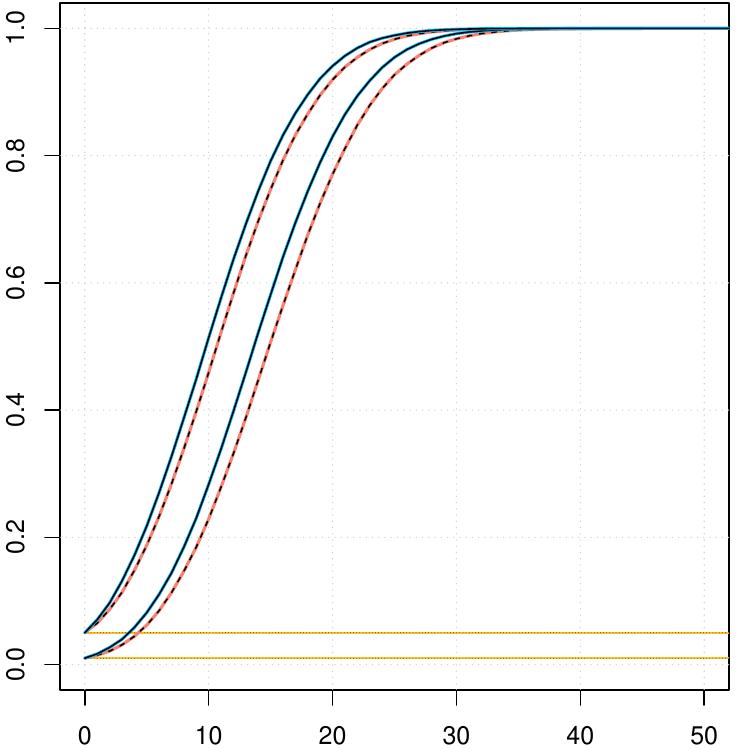}
\hfill
\includegraphics[width=0.45\textwidth]{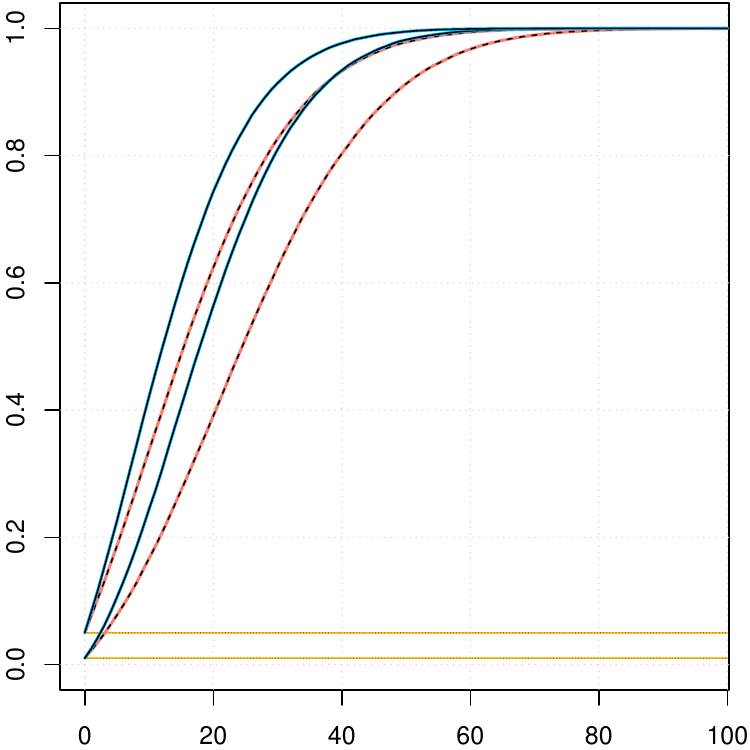}

\caption{Power of goodness-of-fit tests based on $T_{LR}$ (blue, solid) and $T_{UI}$ (red, dashed) as a function of $k$ for two distributions $P_*$ and sample size $n = 100$. The test levels $\alpha$ are $5\%$ and $1\%$.}
\label{fig:Power100}
\end{figure}

\begin{figure}
\strut \hfill $P_* = \NN(1.5, 1)$ \hfill\hfill $P_* = \NN(0,3)$ \hfill \strut

\includegraphics[width=0.45\textwidth]{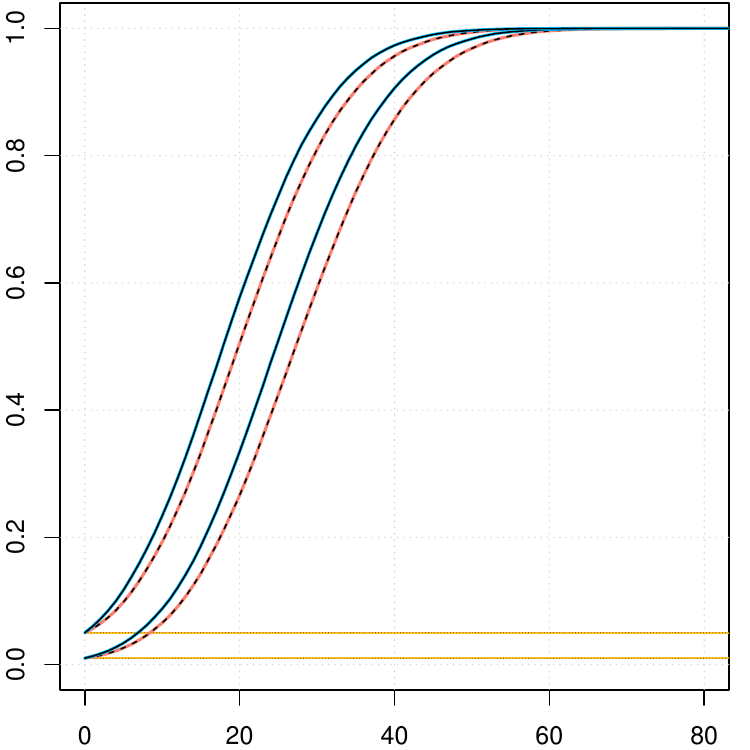}
\hfill
\includegraphics[width=0.45\textwidth]{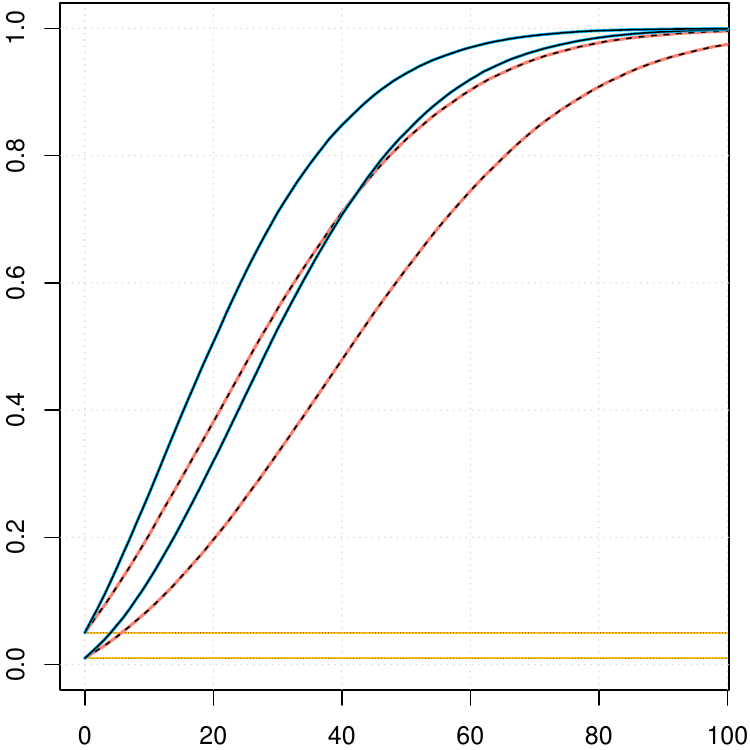}

\caption{Power comparison for sample size $n = 400$.}
\label{fig:Power400}
\end{figure}

\begin{figure}
\strut \hfill $P_* = \NN(1.5, 1)$ \hfill\hfill $P_* = \NN(0,3)$ \hfill \strut

\includegraphics[width=0.45\textwidth]{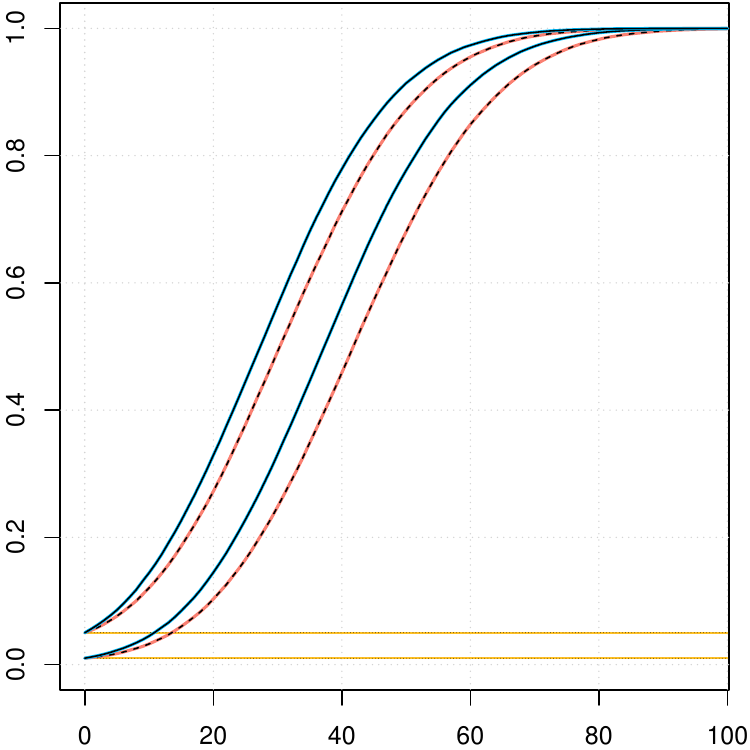}
\hfill
\includegraphics[width=0.45\textwidth]{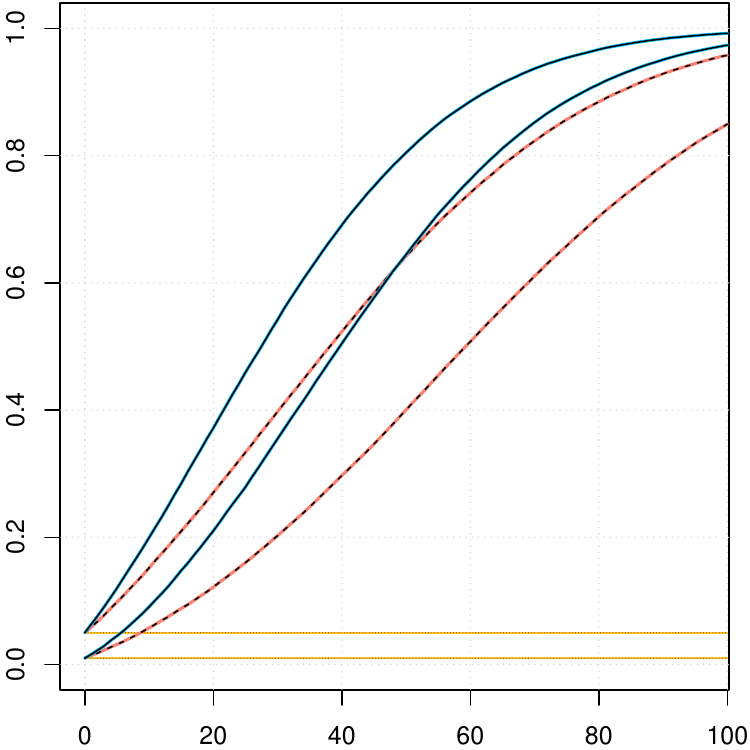}

\caption{Power comparison for sample size $n = 1000$.}
\label{fig:Power1000}
\end{figure}

Section~\ref{subsec:null.distributions} contains further information about the null distribution of $T_{LR}$ for different sample sizes and $P_o = \NN(0,1)$ or $P_o = \chi_1^2$.

\section{Proofs}
\label{sec:proofs}

An essential ingredient for the proof of Lemmas~\ref{lem:existence.uniqueness.1}, \ref{lem:existence.uniqueness.2A} and \ref{lem:existence.uniqueness.2B} is the following coercivity result.

\begin{Lemma}
\label{lem:coercivity}
Let $M$ be a measure on $\R$, and let $L(\theta) := \int \theta \, d\hat{P} - \int_{} e^\theta \, dM + 1$ for measurable functions $\theta : \R \to \R$.

\noindent
{\bf(a)} Suppose that $M(B) = \mathrm{Leb}(B \cap [x_1,x_n])$. Then for concave functions $\theta : \R \to \R$,
\[
	L(\theta) \ \to \ - \infty
	\quad\text{as}\quad
	\max_{x \in [x_1,x_n]} \, |\theta(x)| \ \to \ \infty .
\]

\noindent
{\bf(b)} Suppose that the three numbers $M((-\infty,x_1))$, $M([x_1,x_n])$ and $M((x_n,\infty))$ are strictly positive. Then for convex functions $\theta$,
\[
	L(\theta) \ \to \ - \infty
	\quad\text{as}\quad
	\max_{x \in [x_1,x_n]} \, |\theta(x)|
		+ \max \bigl\{ - \theta'(x_1\,-), \theta'(x_n\,+) \bigr\}
	\ \to \ \infty .
\]
\end{Lemma}

Part~(a) is known from \cite{Duembgen_etal_2011}, but for the reader's convenience and later reference, a simplified argument is also given here.

\begin{proof}[\bf Proof of Lemma~\ref{lem:coercivity}]
Let $i(\theta) := \min_{x \in [x_1,x_n]} \theta(x)$, $s(\theta) := \max_{x \in [x_1,x_n]} \theta(x)$ and $r(\theta) := s(\theta) - i(\theta)$.

As to part~(a), note first that
\[
	L(\theta) \ \le \ s(\theta) - e^{i(\theta)} (x_n - x_1) + 1
	\ = \ i(\theta) - e^{i(\theta)} (x_n - x_1) + r(\theta) + 1 .
\]
The right-hand side converges to $-\infty$ if either $s(\theta) \to - \infty$ or $i(\theta) \to \infty$ while $r(\theta)$ stays bounded. Thus it suffices to show that $L(\theta) \to -\infty$ as $r(\theta) \to \infty$. By concavity of $\theta$, the difference $\theta - i(\theta)$ is bounded from below on $[x_1,x_n]$ by a piecewise linear function with values in $[0,r(\theta)]$, and the value $0$ is attained at $x_1$ or at $x_n$. Hence, with $w_{\rm min} := \min(w_1,w_n)$, we may conclude that
\begin{align*}
	L(\theta) \
	&\le \ i(\theta) + (1 - w_{\rm min}) r(\theta)
		- e^{i(\theta)} \int_{x_1}^{x_n} e^{\theta(x) - i(\theta)} \, dx + 1 \\
	&\le \ i(\theta) + (1 - w_{\rm min}) r(\theta)
		- e^{i(\theta)} (x_n - x_1) \int_0^1 e^{r(\theta) t} \, dt + 1 \\
	&\le \ i(\theta) + (1 - w_{\rm min}) r(\theta)
		- e^{i(\theta)} (x_n - x_1) (e^{r(\theta)} - 1)/r(\theta) + 1 .
\end{align*}
For fixed $r(\theta)$, the maximum of the latter bound with respect to $i(\theta)$ equals
\[
	- \log(x_n - x_1) - \log(1 - e^{-r(\theta)})
		+ \log r(\theta) - w_{\rm min} r(\theta) ,
\]
and this converges to $-\infty$ as $r(\theta) \to \infty$.

As to part~(b), convexity of $\theta$ implies that either
\begin{equation}
\label{eq:boundaries.left}
	s(\theta) \ = \ \theta(x_1) \ > \ \theta(x_n) , \quad
	- \theta'(x_1\,-) \ \ge \ \frac{r(\theta)}{x_n - x_1}
	\quad\text{and}\quad
	\theta(x) \ \ge \ s(\theta) + \theta'(x_1\,-) (x - x_1) \ \
		\text{for} \ x \le x_1 ,
\end{equation}
or
\begin{equation}
\label{eq:boundaries.right}
	s(\theta) \ = \ \theta(x_n) \ \ge \ \theta(x_1) , \quad
	\theta'(x_n\,+) \ \ge \ \frac{r(\theta)}{x_n - x_1}
	\quad\text{and}\quad
	\theta(x) \ \ge \ s(\theta) + \theta'(x_n\,+) (x - x_n) \ \
		\text{for} \ x \ge x_n .
\end{equation}
Hence with $\XX_\ell := (-\infty,x_1)$ and $\XX_r := (x_n,\infty)$,
\[
	L(\theta)
	\ \le \ s(\theta)
		- e_{}^{s(\theta)} \min \bigl\{ M(\XX_\ell), M(\XX_r) \bigr\} + 1
	\ \to \ - \infty \quad \text{as} \ |s(\theta)| \to \infty ,
\]
because $M(\XX_\ell), M(\XX_r) > 0$. Moreover,
\[
	L(\theta)
	\ \le \ s(\theta) - e_{}^{s(\theta)} \int e_{}^{\theta - s(\theta)} \, dM + 1
	\ \le \ \sup_{s \in \R} \Bigl( s
		- e_{}^{s} \int e_{}^{\theta - s(\theta)} \, dM \Bigr) + 1
	\ = \ - \log \int e_{}^{\theta - s(\theta)} \, dM ,
\]
and the right-hand side is not larger than
\begin{align*}
	&\begin{cases}
		\displaystyle
		- \log \int_{\XX_\ell} e_{}^{\theta'(x_1\,-)(x - x_1)} \, dM - 1
			& \text{in case of \eqref{eq:boundaries.left}} \\
		\displaystyle
		- \log \int_{\XX_r} e_{}^{\theta'(x_n\,+)(x - x_n)} \, dM - 1
			& \text{in case of \eqref{eq:boundaries.right}}
		\end{cases} \\
	&\le \ - \min \biggl\{
		\log \int_{\XX_\ell} e_{}^{- r(\theta)(x - x_1)/(x_n - x_1)} \, dM , \
		\log \int_{\XX_r} e_{}^{r(\theta)(x - x_n)/(x_n - x_1)} \, dM
		\biggr\} - 1 .
\end{align*}
Hence these inequalities show that
\[
	L(\theta) \ \to \ - \infty
	\quad\text{as} \ r(\theta)
		+ \max \bigl\{ - \theta'(x_1\,-), \theta'(x_n\,+) \bigr\} \to \infty .
	\qedhere
\]
\end{proof}

\begin{proof}[\bf Proof of Lemmas~\ref{lem:existence.uniqueness.2A} and \ref{lem:existence.uniqueness.2B}]
We first consider Setting~2A. For an arbitrary function $\theta \in \Theta$ let
\[
	\tilde{\theta}(x) \ := \ \begin{cases}
		\theta(x_1) + (x - x_1) \theta'(x_1\,+)
			& \text{if} \ x \le x_1 , \\
		\theta(x)
			& \text{if} \ x \in [x_1,x_n] , \\
		\theta(x_n) + (x - x_n) \theta'(x_n\,-)
			& \text{if} \ x \ge x_n .
	\end{cases}
\]
Then $\tilde{\theta} \le \theta$, $\tilde{\theta} \equiv \theta$ on $[x_1,x_n]$, and $L(\tilde{\theta}) \ge L(\theta)$ with equality if, and only if $\tilde{\theta} \equiv \theta$. Thus we may restrict our attention to convex functions $\theta$ such that $\theta' \equiv \theta'(x_1\,+)$ on $(-\infty,x_1]$ and $\theta' \equiv \theta'(x_n\,-)$ on $[x_n,\infty)$.

Let $(\theta_k)_k$ be a sequence of such functions such that $\lim_{k \to \infty} L(\theta_k) = \sup_{\theta \in \Theta} L(\theta)$. By Lemma~\ref{lem:coercivity},
\[
	\sup_k \Bigl( \sup_{x \in [x_1,x_n]} \, |\theta_k(x)| + 
		\max \bigl\{ -\theta_k'(x_1), \theta_k'(x_n) \bigr\} \Bigr)
	\ < \ \infty .
\]
Consequently, the sequence $(\theta_k)_k$ is uniformly bounded on $[x_1,x_n]$ and uniformly Lipschitz continuous on $\R$. Hence we may apply the theorem of Arzela--Ascoli and replace $(\theta_k)_k$ with a subsequence, if necessary, such that $\theta_k \to \theta \in \Theta$ pointwise and uniformly on any compact set as $k \to \infty$. By Fatou's lemma, $L(\theta) \ge \lim_{k \to \infty} L(\theta_k)$, so $\theta$ is a maximizer of $L$ over $\Theta$.

One can easily deduce from strict convexity of $\exp(\cdot)$ that $L$ is strictly concave on $\Theta$. Hence there exists a unique maximizer $\hat{\theta}$ of $L$ over $\Theta$.

Let
\[
	\check{\theta}(x)
	\ := \ \max_{i=1,\ldots,n} \bigl( \hat{\theta}(x_i) + \hat{\theta}'(x_i)(x - x_i) \bigr)
\]
with $\hat{\theta}'(x_i\,-) \le \hat{\theta}'(x_i) \le \hat{\theta}'(x_i\,+)$ for $2 \le i < n$. This defines another function $\check{\theta} \in \Theta$ such that $(\check{\theta}(x_i))_{i=1}^n = (\hat{\theta}(x_i))_{i=1}^n$ and $\check{\theta} \le \hat{\theta}$. Thus we may conclude that $\hat{\theta} \equiv \check{\theta}$, a function with at most $n-1$ changes of slope, all of which are within $(x_1,x_n)$.

Suppose that $\hat{\theta}$ changes slope at two points $\tau_1 < \tau_2$ but $(\tau_1,\tau_2)$ contains no observation $x_i$. Then we could redefine
\[
	\hat{\theta}(x)
	\ := \ \max \bigl( \hat{\theta}(\tau_1) + \hat{\theta}'(\tau_1\,-) (x - \tau_1),
		\hat{\theta}(\tau_2) + \hat{\theta}'(\tau_2\,+) (x - \tau_2) \bigr)
\]
for $x \in (\tau_1,\tau_2)$. This modification would not change the vector $(\hat{\theta}(x_i))_{i=1}^n$ but decrease strictly the integral $\int e_{}^{\hat{\theta}(x)} \, P_o(dx)$, a contradiction to optimality of $\hat{\theta}$. Hence any interval $[x_i, x_{i+1}]$, $1 \le i < n$, contains at most one point $\tau$ such that $\hat{\theta}'(\tau\,-) < \hat{\theta}'(\tau\,+)$.

Finally, as argued in Section~\ref{subsec:Characterization}, $\hat{\theta}$ satisfies the (in)equalities
\[
	h(\tau) \ := \ \int (x - \tau)^+ \, (\hat{P} - P_{\hat{\theta}})(dx)
	\ \begin{cases}
		\le \ 0 & \text{for all} \ \tau \in (x_1,x_n) , \\
		= \ 0 & \text{if} \ \hat{\theta}'(\tau\,-) < \hat{\theta}'(\tau\,+) .
	\end{cases}
\]
But $h(\cdot)$ itself is continuous with one-sided derivatives
\[
	h'(\tau\,\pm) \ = \ \hat{F}(\tau\,\pm) - F_{\hat{\theta}}(\tau) ,
\]
where $\hat{F}$ and $F_{\hat{\theta}}$ are the distribution functions of $\hat{P}$ and $P_{\hat{\theta}}$, respectively. If $\hat{\theta}$ changes slope at some point $\tau$, then it follows from $h \le 0 = h(\tau)$ that $h'(\tau\,-) \ge 0 \ge h'(\tau\,+)$, so
\[
	0 \ \ge \ h'(\tau\,+) - h'(\tau\,-) \ = \ \hat{P}(\{\tau\}) .
\]
Hence $\tau$ cannot be an observation $x_i$.

These arguments prove Lemma~\ref{lem:existence.uniqueness.2A}. The same arguments apply to Setting~2B without essential changes, because the functions $\tilde{\theta}, \check{\theta}$ and $\theta = \lim_{k \to \infty} \theta_k$ above are automatically isotonic. The only difference, merely notational, is that in case of $\hat{\theta}'(0\,+) > 0$ we interpret $0$ as a first knot $\tau_1$. Hence Lemma~\ref{lem:existence.uniqueness.2B} is also true.
\end{proof}

\begin{proof}[\bf Proof of Lemma~\ref{lem:convergence}]
We prove the lemma for Setting~2A. The arguments for Setting~2B and Setting~1 are very similar, see Section~\ref{subsec:further.proofs}. Let $\Theta_o$ be the set of all functions $\theta \in \Theta \cap \V$ such that $L(\theta) \ge L_o$. Obviously, the target function $\hat{\theta}$ belongs to $\Theta_o$. It follows from Lemma~\ref{lem:coercivity} that
\[
	C_o \
	:= \ \sup_{\theta \in \Theta_o} \, \sup_{x \in [x_1,x_n]} \, |\theta(x)|
	\ < \ \infty ,
\]
and
\[
	C_{\ell} \ := \ \inf_{\theta \in \Theta_o} \, \theta'(x_1\,-)
		\ > \ \lambda_\ell(P_o) ,
	\quad
	C_{r} \ := \ \sup_{\theta \in \Theta_o} \, \theta'(x_n\,+)
		\ < \ \lambda_r(P_o) .
\]

For arbitrary $\theta \in \Theta_o$, let $\theta_{\rm new} \in \V_{D(\theta)}$ be the subsequent Newton proposal. Precisely, $\theta_{\rm new} - \theta$ maximizes the second order Taylor approximation
\[
	L(\theta) + DL(\theta,v) - 2^{-1} H(\theta,v)
\]
of $L(\theta + v)$ over all $v \in \V_{D(\theta)}$, and elementary considerations show that
\[
	DL(\theta, \theta_{\rm new} - \theta)
	\ = \ \max_{v \in \V_{D(\theta)} \setminus \{0\}} \,
		\frac{DL(\theta,v)^2}{H(\theta,v)} .
\]
Now let $\VV$ be the set of basis functions $v_0(x) := 1$, $v_1(x) := x - x_1$ and $V_\tau(x) = (x - \tau)^+$, $\tau \in \DD$. Then for any $v \in \VV$,
\[
	H(\theta,v) \ = \ \int v^2 e^\theta \, dP_o
	\ \le \ C_{\rm N} \ := \ \int v_{\rm max}(x)^2 e^{\theta_{\rm max}(x)} \, P_o(dx)
	\ < \ \infty ,
\]
where $v_{\rm max}(x) := \max(1,|x - x_1|)$ is an upper bound for $|v(x)|$, $v \in \VV$, and $\theta_{\rm max}(x) := C_o - C_\ell (x - x_1)^- + C_r (x - x_n)^+$ is an upper bound for $\theta(x)$, $\theta \in \Theta_o$. That $C_{\rm N}$ is finite follows from the fact that $\int e_{}^{\theta_{\rm max}(x) + \eps |x|} \, P_o(dx) < \infty$ for sufficiently small $\eps > 0$. Consequently,
\[
	DL(\theta,v) \ \le \ \sqrt{C_{\rm N} \delta_{\rm Newton}(\theta)}
	\quad\text{for all} \ v \in \VV \cap \V_{D(\theta)} .
\]

After these preparations, let us compare $\theta$ with $\hat{\theta}$. By concavity of $L(\cdot)$,
\[
	L(\hat{\theta}) - L(\theta)
	\ \le \ DL(\theta, \hat{\theta} - \theta) .
\]
Now we write $\hat{\theta} - \theta = \alpha_0 v_0 + \alpha_1 v_1 + \sum_{\tau \in \DD} \beta_\tau V_\tau$ with parameters satisfying
\begin{align*}
	|\alpha_0| \ &= \ \bigl| \hat{\theta}(x_1) - \theta(x_1) \bigr|
		\ \le \ 2C_o , \\
	|\alpha_1| \ &= \ \bigl| \hat{\theta}'(x_1) - \theta'(x_1) \bigr|
		\le \ C_r - C_\ell
		\quad\text{and} \\
	\beta_\tau \ &= \ \hat{\theta}'(\tau\,+) - \hat{\theta}'(\tau\,-)
		- \bigl( \theta'(\tau\,+) - \theta'(\tau\,-) \bigr)
		\ \begin{cases}
		\le \ \ \hat{\theta}'(\tau\,+) - \hat{\theta}'(\tau\,-) , \\
		\ge \ - \bigl( \theta'(\tau\,+) - \theta'(\tau\,-) \bigr) .
	\end{cases}
\end{align*}
In particular,
\[
	\sum_{\tau \in \DD} \beta_\tau^+ \ \le \ \hat{\theta}'(x_n) - \hat{\theta}'(x_1)
		\ \le \ C_r - C_\ell , \quad
	\sum_{\tau \in \DD} \beta_\tau^- \ \le \ \theta'(x_n) - \theta'(x_1)
		\ \le \ C_r - C_\ell .
\]
If $\beta_\tau^- > 0$, then $\tau \in D(\theta)$. And if $\tau \in D(\theta)$, then $V_\tau \in \V_{D(\theta)}$ and $|DL(\theta,V_\tau)| \le \sqrt{C_{\rm N} \delta_{\rm Newton}(\theta)}$. For $\tau \in \DD \setminus D(\theta)$, we know that $\beta_\tau = \beta_\tau^+$ and
\[
	DL(\theta,V_\tau)
	\ = \ DL(\theta,V_{\tau,\theta}) + DL(\theta,\eta_{\tau,\theta})
	\ \le \ \delta_{\rm Knot}(\theta)
		+ (1 + x_n - x_1) \sqrt{C_{\rm N} \delta_{\rm Newton}(\theta)} .
\]
Here $V_{\tau,\theta} = V_\tau - \eta_{\tau,\theta}$ is the localised kink function with $\eta_{\tau,\theta} \in \V_{D(\theta})$ as described in Section~\ref{subsec:Localised.kinks}. The explicit construction of $\eta_{\tau,\theta}$ shows that it is a linear combination of at most two basis functions in $\VV \cap \V_{D(\theta)}$ with coefficients whose absolute values sum to less than $1 + x_n - x_1$. This explains the upper bound $(1 + x_n - x_1) \sqrt{C_{\rm N} \delta_{\rm Newton}(\theta)}$ for $DL(\theta,\eta_{\tau,\theta})$. Consequently,
\begin{align*}
	DL(\theta, \hat{\theta} - \theta) \
	\le \ &\alpha_0 DL(\theta, v_0) + \alpha_1 DL(\theta, v_1)
		+ \sum_{\tau \in \DD} \beta_\tau^+ DL(\theta, V_\tau)^+
		+ \sum_{\tau \in \DD} \beta_\tau^- DL(\theta, V_\tau)^- \\
	\le \ &2C_o \sqrt{C_{\rm N} \delta_{\rm Newton}(\theta)}
		+ (C_r - C_\ell) \sqrt{C_{\rm N} \delta_{\rm Newton}(\theta)} \\
		&+ \ (C_r - C_\ell)
			\bigl( \delta_{\rm Knot}(\theta)
				+ (1 + x_n - x_1) \sqrt{C_{\rm N} \delta_{\rm Newton}(\theta)} \bigr)
		+ (C_r - C_\ell) \sqrt{C_{\rm N} \delta_{\rm Newton}(\theta)} ,
\end{align*}
so the assertion is true with $C_{\rm Newton} = \bigl( 2C_o + (C_r - C_\ell)(3 + x_n - x_1) \bigr) \sqrt{C_{\rm N}}$ and $C_{\rm Knot} = C_r - C_\ell$.
\end{proof}

\paragraph{Acknowledgements.}
This work was supported by Swiss National Science Foundation. We owe thanks to Peter McCullagh for drawing our attention to the nonparametric tail inflation model of \cite{mccullagh2012}, to Jon Wellner for the hint to Artin's theorem and Gaussian mixtures, and to Jasha Sommer-Simpson for sharing his MSc thesis. Constructive comments of two referees are gratefully acknowledged.


\clearpage

\appendix
\section{Technical details}
\label{sec:technical.details}

\subsection{Localised kink functions}
\label{subsec:Localised.kinks}

As mentioned at the end of Section~\ref{subsec:Characterization}, working with the kink  functions $V_\tau(x) = \xi (x - \tau)^+$ may be computationally inefficient and numerically problematic. For instance, by means of local search we obtain functions $\theta$ satisfying \eqref{eq:optimality.1} approximately, but not perfectly. As a result it may happen that $DL(\theta,V_\tau) > 0$ for some $\tau \in D(\theta)$ although this contradicts \eqref{eq:optimality.1}. Furthermore, the support of $V_\tau$ may contain several points $\sigma \in D(\theta)$, so the evaluation of $DL(\theta,V_\tau)$ would involve several integrals of an affine function times a log-affine function with respect to $P_o$. Hence we propose to replace the simple kink functions $V_\tau$ in \eqref{eq:optimality.2} with localised kink functions $V_{\tau,\theta} = V_\tau - \eta_{\tau,\theta}$ for some $\eta_{\tau,\theta} \in \V_{D(\theta)}$ such that\\[0.5ex]
(i) \ $\theta$ is affine on $\{x \in \XX : V_{\tau,\theta}(x) \ne 0\}$,\\
(ii) \ $\tau \mapsto V_{\tau,\theta}(x)$ is Lipschitz-continuous with constant $1$ for any $x \in \XX$,\\
(iii) \ $V_{\tau,\theta} \equiv 0$ if $\tau \in D(\theta)$.\\[0.5ex]
Then we redefine the auxiliary function $h_\theta$ and replace \eqref{eq:optimality.2} with
\begin{equation}
\label{eq:optimality.2'}
	h_\theta(\tau) := DL(\theta, V_{\tau,\theta}) \ \le \ 0
	\quad\text{for all} \ \tau \in \DD \setminus D(\theta) .
\end{equation}
Note that in case of \eqref{eq:optimality.1}, the two requirements \eqref{eq:optimality.2} and \eqref{eq:optimality.2'} are equivalent, because then $DL(\theta,V_{\tau,\theta}) = DL(\theta,V_\tau)$. We do assume that $P_\theta$ is a probability measure, even if \eqref{eq:optimality.1} is not satisfied perfectly.

To simplify subsequent explicit formulae, let us introduce the following auxiliary functions: For real numbers $a < b$ let
\[
	j_{10}(x; a, b) \ := \ 1_{[a < x \le b]} \, \frac{b - x}{b - a}
	\quad\text{and}\quad
	j_{01}(x; a, b) \ := \ 1_{[a < x \le b]} \, \frac{x - a}{b - a} ,
\]
so $j_{10}(x; a, b) + j_{01}(x; a, b) = 1_{[a < x \le b]}$. In addition we set $j_{01}(x; a,a) := j_{10}(x; a,a) := 0$.

In Setting~1 let $D(\theta) \cup \{x_1,x_n\} = \{\tau_1,\ldots,\tau_m\}$ with $m \ge 2$ points $\tau_1 < \cdots < \tau_m$ in $\{x_1,\ldots,x_n\}$. Then for $\tau_j \le \tau \le \tau_{j+1}$ with $1 \le j < m$,
\begin{align*}
	V_{\tau,\theta}(x)
	\ := \ &V_\tau(x)
		- \frac{\tau_{j+1} - \tau}{\tau_{j+1} - \tau_j} \, V_{\tau_j}(x)
		- \frac{\tau - \tau_j}{\tau_{j+1} - \tau_j} \, V_{\tau_{j+1}}(x)
	 \ = \ \begin{cases}
	 	0
			& \text{for} \ x \not\in [\tau_j,\tau_{j+1}] \\[1.5ex]
		\displaystyle
		\frac{(x - \tau_j)(\tau_{j+1} - \tau)}{\tau_{j+1} - \tau_j}
			& \text{for} \ x \in [\tau_j,\tau] \\[2ex]
		\displaystyle
		\frac{(\tau - \tau_j)(\tau_{j+1} - x)}{\tau_{j+1} - \tau_j}
			& \text{for} \ x \in [\tau,\tau_{j+1}]
		\end{cases} \\[1ex]
	 = \ &\frac{(\tau - \tau_j)(\tau_{j+1} - \tau)}{\tau_{j+1} - \tau_j}
		\, \bigl( j_{01}(x; \tau_j, \tau) + j_{10}(x; \tau, \tau_{j+1}) \bigr) .
\end{align*}
Figure~\ref{fig:LocalPert1} illustrates these localised kink functions $V_{\tau,\theta}$.

\begin{figure}
\centering
\includegraphics[width=0.7\textwidth]{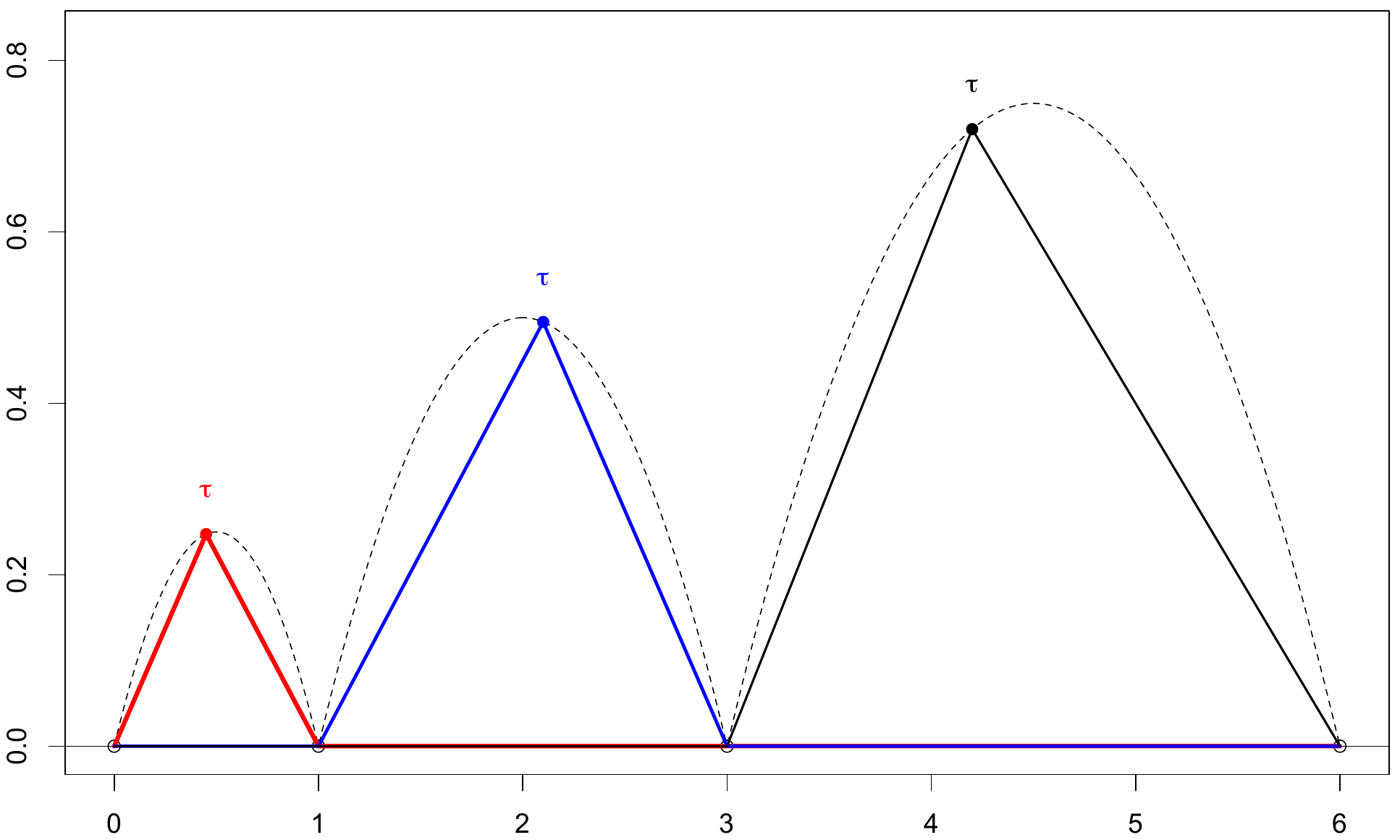}
\caption{Localised kink functions in Setting~1: For $D(\theta) \cup \{x_1,x_n\} = \{0,1,3,6\}$ one sees $V_{\tau,\theta}$ for three different values of $\tau$.}
\label{fig:LocalPert1}
\end{figure}

Now we consider Settings~2A-B. If $D(\theta) = \emptyset$, we set $V_{\tau,\theta} := V_\tau = (\cdot - \tau)^+$ and note that $\partial V_\tau(x)/\partial \tau = - 1_{[x > \tau]}$ for $x \ne \tau$. Otherwise, let $D(\theta) = \{\tau_1,\ldots,\tau_m\}$ with $m \ge 1$ points $\tau_1 < \cdots < \tau_m < x_n$, where $\tau_1 > x_1$ in Setting~2A and $\tau_1 \in \{0\} \cup (x_1,x_n)$ in Setting~2B. For $\tau \le \tau_1$ we define
\begin{align}
\nonumber
	V_{\tau,\theta}(x) \
	:= \ &V_\tau(x) - (\tau_1 - \tau) - V_{\tau_1}(x)
	 \ = \ \begin{cases}
	 	\tau - \tau_1
			& \text{for} \ x \le \tau \\
		x - \tau_1
			& \text{for} \ x \in [\tau,\tau_1] \\
		0
			& \text{for} \ x \ge \tau_1
		\end{cases} \\[0.5ex]
\label{eq:Vtautheta.2.l}
	= \ &(\tau - \tau_1) \bigl( 1_{[x \le \tau]} + j_{10}(x; \tau, \tau_1) \bigr)
\end{align}
and note that
\begin{equation}
\label{eq:Vtautheta.2.l'}
	\partial V_{\tau,\theta}(x)/\partial \tau
	\ = \ 1_{[x \le \tau]}
	\quad\text{for} \ x \ne \tau .
\end{equation}
For $\tau_j \le \tau \le \tau_{j+1}$ with $1 \le j < m$ we set
\begin{align}
\nonumber
	V_{\tau,\theta}(x) \
	:= \ &V_\tau(x)
		- \frac{\tau_{j+1} - \tau}{\tau_{j+1} - \tau_j} \, V_{\tau_j}(x)
		- \frac{\tau - \tau_j}{\tau_{j+1} - \tau_j} \, V_{\tau_{j+1}}(x)
	\ = \ \begin{cases}
	 	0
			& \text{for} \ x \not\in [\tau_j,\tau_{j+1}] \\[1.5ex]
		\displaystyle
		- \frac{(x - \tau_j)(\tau_{j+1} - \tau)}{\tau_{j+1} - \tau_j}
			& \text{for} \ x \in [\tau_j,\tau] \\[2ex]
		\displaystyle
		- \frac{(\tau - \tau_j)(\tau_{j+1} - x)}{\tau_{j+1} - \tau_j}
			& \text{for} \ x \in [\tau,\tau_{j+1}]
		\end{cases} \\[1ex]
\label{eq:Vtautheta.2.m}
	 = \ &(\tau - \tau_j)
	 	\bigl( 1_{[\tau_j < x \le \tau]}
			- j_{10}(x; \tau_j, \tau_{j+1})
			- j_{01}(x; \tau_j, \tau) \bigr) .
\end{align}
and note that
\begin{equation}
\label{eq:Vtautheta.2.m'}
	\partial V_{\tau,\theta}(x)/\partial \tau \
	= \ 1_{[\tau_j < x \le \tau]} - j_{10}(x; \tau_j, \tau_{j+1})
	\quad\text{for} \ x \ne \tau ,
\end{equation}
because $1_{[\tau_j < x \le \tau]}$ and $(\tau - \tau_j) j_{01}(x; \tau_j,\tau) = 1_{[\tau_j < x \le \tau]} (x - \tau_j)$ are locally constant in $\tau \ne x$. Finally, for $\tau > \tau_m$ we define
\begin{align}
\nonumber
	V_{\tau,\theta}(x) \
	:= \ &V_\tau(x) - V_{\tau_m}(x)
	\ = \ \begin{cases}
	 	0
			& \text{for} \ x \le \tau_m \\
		\tau_m - x
			& \text{for} \ x \in [\tau_m,\tau] \\
		\tau_m - \tau
			& \text{for} \ x \ge \tau
		\end{cases} \\[0.5ex]
\label{eq:Vtautheta.2.r}
	 = \ &(\tau - \tau_m) \big( -1_{[x > \tau]} - j_{01}(x; \tau_m, \tau) \bigr)
\end{align}
and note that
\begin{equation}
\label{eq:Vtautheta.2.r'}
	\partial V_{\tau,\theta}(x) \ = \ -1_{[x > \tau]}
	\quad\text{for} \ x \ne \tau .
\end{equation}
Figure~\ref{fig:LocalPert2} illustrates these localised kink functions $V_{\tau,\theta}$.

\begin{figure}
\centering
\includegraphics[width=0.7\textwidth]{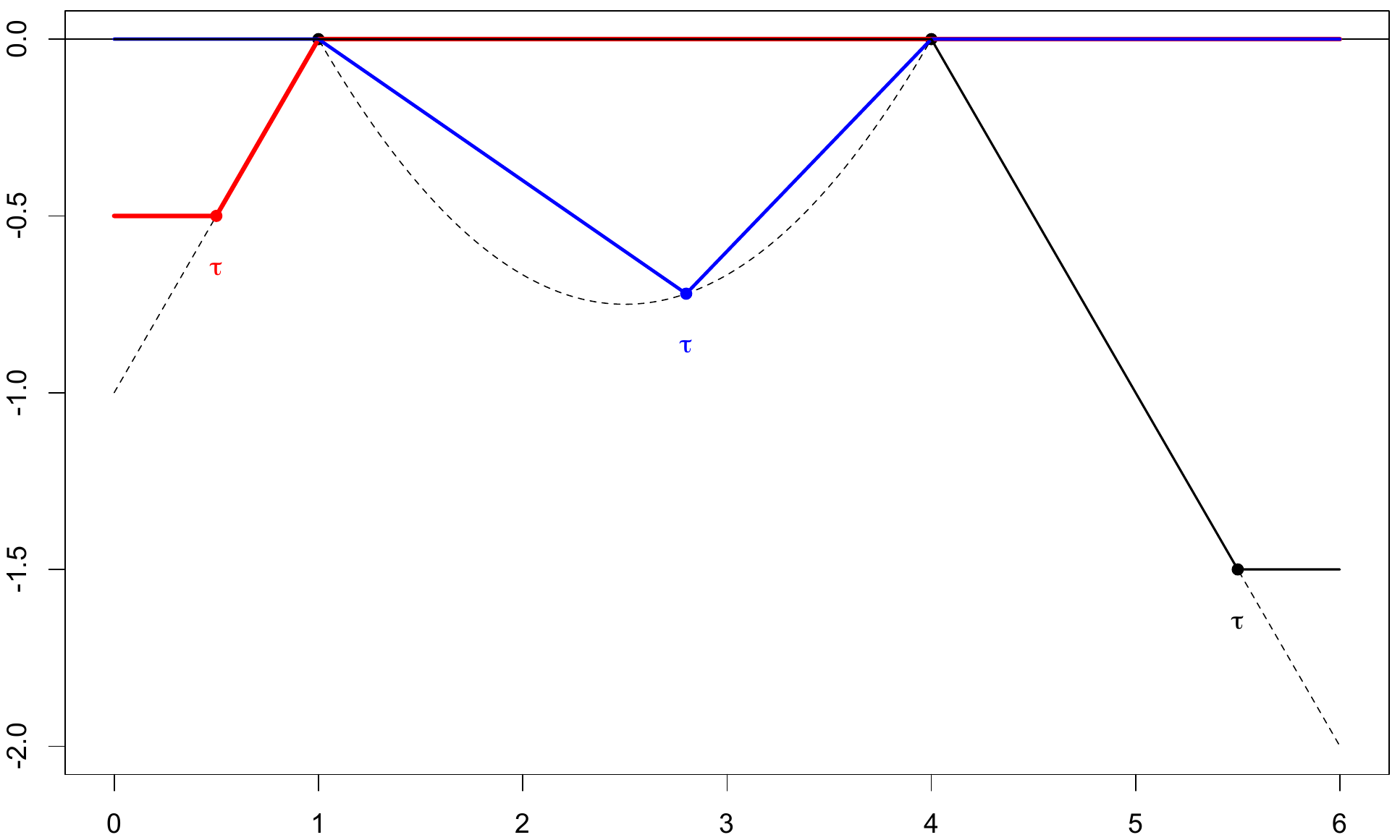}
\caption{Localised kink functions in Settings~2A-B: For $D(\theta) = \{1,4\}$ one sees $V_{\tau,\theta}$ for three different values of $\tau$.}
\label{fig:LocalPert2}
\end{figure}

When searching for local maxima of $h_\theta(\tau) := DL(\theta,V_{\tau,\theta})$ in case of $D(\theta) = \{\tau_1,\ldots,\tau_m\}$ as above, one should treat the $m+1$ intervals $(-\infty,\tau_1]$, $[\tau_j,\tau_{j+1}]$ with $1 \le j < m$ and $[\tau_m,\infty)$ separately, because $h_\theta$ equals $0$ but could be non-differentiable at points in $D(\theta)$. Hence one should look for maximizers of $h_\theta$ on the $\tilde{n} - 1$ intervals $[t_i, t_{i+1}]$, $1 \le i < \tilde{n}$, where $t_1 < \cdots < t_{\tilde{n}}$ are the different elements of $\{x_1,\ldots,x_n\} \cup \{\tau_1,\ldots,\tau_m\}$.

Now we provide explicit formulae for $h_\theta$ and its one-sided derivatives. One can easily derive from \eqref{eq:Vtautheta.2.l} and \eqref{eq:Vtautheta.2.l'} that for $\tau < \tau_1$,
\begin{align*}
	h_\theta'(\tau\,+) \
	&= \ (\hat{F} - F_\theta)(\tau) \quad\text{and} \\
	h_\theta(\tau) \
	&= \ (\tau - \tau_1) \Bigl( h_\theta'(\tau\,+)
		+ \int j_{10}(x; \tau,\tau_1) \, (\hat{P} - P_\theta)(dx) \Bigr) .
\end{align*}
For $1 \le j < m$ and $\tau_j \le \tau < \tau_{j+1}$, equations \eqref{eq:Vtautheta.2.m} and \eqref{eq:Vtautheta.2.m'} lead to
\begin{align*}
	h_\theta'(\tau\,+) \
	&= \ (\hat{P} - P_\theta)((\tau_j,\tau])
		- \int \ j_{10}(x; \tau_j, \tau_{j+1}) \, (\hat{P} - P_\theta)(dx)
		\quad\text{and} \\
	h_\theta(\tau) \
	&= \ (\tau - \tau_j) \Bigl( h_\theta'(\tau\,+)
		- \int j_{01}(x; \tau_j,\tau) \, (\hat{P} - P_\theta)(dx) \Bigr) .
\end{align*}
Finally, for $\tau \ge \tau_m$, it follows from \eqref{eq:Vtautheta.2.r} and \eqref{eq:Vtautheta.2.r'} that
\begin{align*}
	h_\theta'(\tau\,+) \
	&= \ (\hat{F} - F_\theta)(\tau)
		\ = \ - (\hat{P} - P_\theta)((\tau,\infty))
		\quad\text{and} \\
	h_\theta(\tau) \
	&= \ (\tau - \tau_m) \Bigl( h_\theta'(\tau\,+)
		- \int j_{01}(x; \tau_m,\tau) \, (\hat{P} - P_\theta)(dx) \Bigr) .
\end{align*}
The representation of $h_\theta(\tau)$ in terms of $h_\theta'(\tau\,+)$ is particularly convenient, because $h_\theta$ is evaluated only at its local maximizers, i.e.\ zeros of $h_\theta'$.

\subsection{Details for Setting~1}
\label{subsec:Details.1}

\paragraph{Auxiliary functions.}
For real numbers $x_1 < x_2$ and a linear function $\theta$ on $[x_1,x_2]$,
\[
	\int_{x_1}^{x_2} e_{}^{\theta(x)} \, dx
	\ = \ (x_2 - x_1) J \bigl( \theta(x_1), \theta(x_2) \bigr)
\]
with
\begin{equation}
	\label{Eq:AS_AuxiliaryJ}
	J(r,s) \ := \ \int_0^1 e_{}^{(1 - v)r + v s} \, dv
	\ = \ \begin{cases}
		\displaystyle
		\frac{e^s - e^r}{s - r} & \text{if} \ r \ne s , \\
		e_{}^s & \text{if} \ r = s .
	\end{cases}
\end{equation}
In general, for integers $a,b \ge 0$,
\[
	J_{ab}(r,s) := \frac{\partial^{a+b}}{\partial r^a \partial s^b} \, J(r,s)
	\ = \ \int_0^1 (1 - v)^a v^b e_{}^{(1 - v) r + vs} \, dv .
\]
Let $m := (r + s)/2$ and $\delta := (s - r)/2$, so $r = m - \delta$, $s = m + \delta$ and $s - r = 2\delta$. In case of $\delta \ne 0$ we may write
\[
	J(r,s) \ = \ e_{}^m \sinh(\delta)/\delta .
\]
Moreover, with $\Delta := s - r = 2\delta$, partial integration leads to the formulae
\begin{align*}
	J_{10}(r,s) \
	&= \ e_{}^r \int_0^1 (1 - v) e_{}^{\Delta v} \, dv
		\ = \ e_{}^r \Bigl( - \frac{1}{\Delta}
			+ \frac{e^{\Delta} - 1}{\Delta^2} \Bigr)
		&&= \ e_{}^m
			 \bigl( \sinh(\delta) - \delta e_{}^{-\delta} \bigr)
			 	/ (2\delta^2) , \\
	J_{20}(r,s) \
	&= \ e_{}^r \int_0^1 (1 - v)^2 e_{}^{\Delta v} \, dv
		\ = \ e_{}^r \Bigl( - \frac{1}{\Delta}
			- \frac{2}{\Delta^2} + \frac{2 (e^\Delta - 1)}{\Delta^3} \Bigr)
		&&= \ e_{}^m
			\bigl( \sinh(\delta)/\delta - (1 + \delta) e_{}^{-\delta} \bigr)
				/ (2\delta^2) , \\
	J_{11}(r,s) \
	&= \ e_{}^r \int_0^1 (1 - v)v e_{}^{\Delta v} \, dv
		\ = \ e_{}^r \Bigl( \frac{e^{\Delta} + 1}{\Delta^2}
			- \frac{2(e^\Delta - 1)}{\Delta^3} \Bigr)
		&&= \ e_{}^m
			\bigl( \cosh(\delta) - \sinh(\delta)/\delta \bigr)
				/ (2\delta^2) .
\end{align*}

If $|\delta|$ is close to $0$, the formulae above get problematic. Here is a reasonable approximation for small values of $|\delta|$: For integers $a, b \ge 0$ let $B_{ab} := \int_0^1 u^a (1 - u)^b \, du = a! b! / (a+b+1)!$, and let $U_{ab}$ be a random variable with distribution $\mathrm{Beta}(a+1,b+1)$, so
\begin{align*}
	\mu_{ab} := \Ex U_{ab} \
	&= \ \frac{a+1}{a+b+2} , \\
	\sigma_{ab}^2 := \mathrm{Var}(U_{ab}) \
	&= \ \frac{(a+1)(b+1)}{(a+b+2)^2 (a+b+3)} , \\
	\gamma_{ab} := \Ex \bigl( (U_{ab} - \mu_{ab})^3 \bigr) \
	&= \ \frac{2(a+1)(b+1)(b-a)}{(a+b+2)^3 (a+b+3) (a+b+4)} .
\end{align*}
Then
\[
	J_{ab}(r,s)
	\ = \ B_{ab} \Ex \exp \bigl( U_{ab} r + (1 - U_{ab}) s \bigr)
	\ = \ B_{ab} \exp \bigl( \mu_{ab} r + (1 - \mu_{ab}) s \bigr)
		\Ex \exp \bigl( (U_{ab} - \mu_{ab}) (r - s) \bigr) ,
\]
and
\begin{align*}
	\log \Ex \exp \bigl( (U_{ab} - \mu_{ab}) (r - s) \bigr) \
	&= \ \frac{\sigma_{ab}^2 (r - s)^2}{2} + \frac{\gamma_{ab} (r - s)^3}{6}
		+ O(|r - s|^4)
\end{align*}
as $|r - s| \to 0$. Hence
\begin{align*}
	J_{ab}(r,s) \ = \
	&\frac{a! b!}{(a+b)! (a+b+1)}
		\cdot \exp \Bigl( \frac{(a+1)r + (b+1) s}{a+b+2} \\
	& \qquad + \, \frac{(a+1)(b+1) (r-s)^2}{2 (a+b+2)^2 (a+b+3)}
			+ \frac{(a+1)(b+1)(b-a) (r-s)^3}{3 (a+b+2)^3 (a+b+3) (a+b+4)} \Bigr)
		\cdot \bigl( 1 + O(|r - s|^4) \bigr)
\end{align*}
as $|r - s| \to 0$. Specifically,
\begin{align*}
	J(r,s) \
	&\approx \ \exp \bigl( (r + s)/2 + (r - s)^2/24 \bigr) , \\
	J_{10}(r,s) \
	&\approx \ 2^{-1} \exp \bigl( (2r + s)/3 + (r - s)^2/36
		- (r - s)^3/810 \bigr) , \\
	J_{20}(r,s) \
	&\approx \ 3^{-1} \exp \bigl( (3r + s)/4 + 3(r - s)^2/160
		- (r - s)^3/960 \bigr) , \\
	J_{11}(r,s) \
	&\approx \ 6^{-1} \exp \bigl( (r + s)/2 + (r - s)^2/40 \bigr) .
\end{align*}
Numerical experiments show that the relative error of these approximations is less than $10^{-10}$ for $|r - s| \le 0.01$.

\paragraph{Local parametrizations.}
Let us fix arbitrary points $\tau_1 < \cdots < \tau_m$ in $\{x_1,\ldots,x_n\}$ with $\tau_1 = x_1$ and $\tau_m = x_n$. Any function $\theta : \R \to [-\infty,\infty)$ which is linear on each interval $[\tau_j, \tau_{j+1}]$, $1 \le j < m$, and satisfies $\theta \equiv - \infty$ of $\R \setminus [\tau_1,\tau_m]$ is uniquely determined by the vector $\thb = (\theta_j)_{j=1}^m := (\theta(\tau_j))_{j=1}^m \in \R^m$. Then $L(\theta) = L(\taub,\thb)$ with $L(\taub,\cdot) : \R^m \to \R$ given by
\begin{equation}
\label{eq:LocalL.Setting1}
	L(\taub,\thb)
	\ := \ \sum_{i=1}^n w_i \theta(x_i)
		- \sum_{j=1}^{m-1} (\tau_{j+1} - \tau_j) J(\theta_j, \theta_{j+1}) + 1
	\ = \ \sum_{j=1}^m \tilde{w}_j \theta_j
		- \sum_{j=1}^{m-1} (\tau_{j+1} - \tau_j) J(\theta_j, \theta_{j+1}) + 1
\end{equation}
with the auxiliary function $J(\cdot,\cdot)$ defined in~\eqref{Eq:AS_AuxiliaryJ} and the weights
\[
	\tilde{w}_j \
	:= \ 1_{[j = 1]} w_1 + \sum_{i=1}^n \Bigl(
		1_{[j > 1, \, x_i \le \tau_j]} \,
			\frac{(x_i - \tau_{j-1})^+}{\tau_j - \tau_{j-1}}
		+ 1_{[j < m, \, x_i > \tau_j]} \,
			\frac{(\tau_{j+1} - x_i)^+}{\tau_{j+1} - \tau_j} \Bigr) w_i .
\]
The function $L(\taub,\cdot)$ on $\R^m$ is twice continuously differentiable with negative definite Hessian matrix, see the next paragraph.

\paragraph{Gradient vector and Hessian matrix of $L(\taub,\thb)$ in \eqref{eq:LocalL.Setting1}.}
For fixed $\taub$ and as a function of $\thb \in \R^m$, $L(\taub,\thb)$ has gradient vector $\nabla L(\taub,\thb) =: \bs{g}(\taub,\thb)$ with components
\[
	g_j(\taub,\thb) \ = \ \tilde{w}_j
		- 1_{[j < m]} (\tau_{j+1} - \tau_j) J_{10}(\theta_j,\theta_{j+1})
		- 1_{[j > 1]} (\tau_j - \tau_{j-1}) J_{10}(\theta_j,\theta_{j-1})
\]
and negative Hessian matrix $- D^2 L(\taub,\thb) =: \bs{H}(\taub,\thb)$ with components
\begin{align*}
	H_{jj}(\taub,\thb) \
	&= \ 1_{[j < m]} (\tau_{j+1} - \tau_j) J_{20}(\theta_j,\theta_{j+1})
		+ 1_{[j > 1]} (\tau_j - \tau_{j-1}) J_{20}(\theta_j,\theta_{j-1}) , \\
	H_{j,j+1}(\taub,\thb) = H_{j+1,j}(\taub,\thb) \
	&= \ (\tau_{j+1} - \tau_j) J_{11}(\theta_j,\theta_{j+1}) , \\
	H_{jk}(\taub,\thb) \
	&= \ 0 \quad \text{if} \ |k - j| \ge 2 .
\end{align*}
Note also that
\[
	\bs{g}(\taub,\thb)^\top \bs{\delta}
	\ = \ \int_{[x_1,x_n]} \delta(x) \, (\hat{P}(dx) - e_{}^{\theta(x)}\,dx)
	\quad\text{and}\quad
	\bs{\delta}^\top \bs{H}(\taub,\thb) \bs{\delta}
	\ = \ \int_{[x_1,x_n]} \delta(x)^2 e_{}^{\theta(x)} \, dx ,
\]
the last equality showing positive definiteness of $\bs{H}(\taub,\thb)$.

\paragraph{Evaluating the directional derivative $DL(\theta,V_{\tau,\theta})$.}
If $\theta \in \V$ with $\{x_1,x_n\} \cup D(\theta)$ having elements $\tau_1 < \cdots < \tau_m$, then for $1 \le j < m$ and $\tau_j \le \tau \le \tau_{j+1}$,
\begin{align*}
	DL(\theta,V_{\tau,\theta}) \
	&= \ \sum_{i=1}^n V_{\tau,\theta}(x_i) w_i
		- \frac{(\tau - \tau_j)(\tau_{j+1} - \tau)}{\tau_{j+1} - \tau_j}
			\int_{\tau_j}^{\tau_{j+1}}
				\bigl( j_{01}(x; \tau_j, \tau) + j_{10}(x; \tau, \tau_{j+1}) \bigr)
					e_{}^{\theta(x)} \, dx \\
	&= \ \sum_{i=1}^n V_{\tau,\theta}(x_i) w_i
		- \frac{(\tau - \tau_j)(\tau_{j+1} - \tau)}{\tau_{j+1} - \tau_j}
			\bigl( (\tau - \tau_j) J_{10}(\theta_*,\theta_j)
				+ (\tau_{j+1} - \tau) J_{10}(\theta_*,\theta_{j+1}) \bigr)
\end{align*}
with
\[
	\theta_* \ := \ \theta(\tau)
	\ = \ \frac{(\tau_{j+1} - \tau) \theta_j + (\tau - \tau_j) \theta_{j+1}}
		{\tau_{j+1} - \tau_j} .
\]

\paragraph{Activating one constraint.}
Suppose that $m \ge 3$ in \eqref{eq:LocalL.Setting1}. If we activate the constraint at $\tau_{j_o}$, where $1 < j_o < m$, this amounts to replacing $(\tilde{w}_{j_o-1}, \tilde{w}_{j_o}, \tilde{w}_{j_o+1})$ with
\[
	\Bigl( \tilde{w}_{j_o-1} + \frac{\tau_{j_o+1} - \tau_{j_o}}{\tau_{j_o+1} - \tau_{j_o-1}}
			\, \tilde{w}_{j_o},
		\ 0, \
		\tilde{w}_{j_o+1} + \frac{\tau_{j_o} - \tau_{j_o-1}}{\tau_{j_o+1} - \tau_{j_o-1}}
			\, \tilde{w}_{j_o} \Bigr)
\]
and then removing the $j_o$-th components of $\bs{\tau}$ and $(\tilde{w}_j)_{j=1}^m$.

\subsection{Details for Setting~2A}
\label{subsec:Details.2A}

We provide explicit formulae for the special case of $P_o = \NN(0,1)$ with Lebesgue density $\phi$ and distribution function $\Phi$.

\paragraph{Auxiliary functions.}
The subsequent formulae follow from tedious but elementary algebra, the essential ingredients being
\[
	e^{\theta x} \phi(x) \ = \ e^{\theta^2/2} \phi(x - \theta)
	\quad\text{for} \ x,\theta \in \R
\]
and
\begin{align*}
	\int \phi(z) \, dz \ &= \ C + \Phi(z) , \\
	\int z \phi(z) \, dz \ &= \ C - \phi(z) , \\
	\int z^2 \phi(z) \, dz \ &= \ C - z \phi(z) + \Phi(z) .
\end{align*}

On the one hand, for a fixed number $a \in \R$ let
\begin{equation}
	\label{Eq:AS_AuxiliaryK}
	K(\theta_0, \theta_1) = K(\theta_0, \theta_1; a)
	\ := \ \int_a^\infty e_{}^{\theta_0 + \theta_1 (x - a)} \phi(x) \, dx .
\end{equation}
Then
\[
	K(\theta_0, \theta_1)
	\ = \ e_{}^{\theta_0 - \theta_1 a + \theta_1^2/2} \, \Phi(\theta_1 - a)
	\ = \ \frac{\partial K(\theta_0,\theta_1)}{\partial \theta_0} ,
\]
and explicit expressions for
\[
	K_\ell(\theta_0,\theta_1)
	\ := \ \frac{\partial^\ell K(\theta_0,\theta_1)}{\partial \theta_1^\ell}
	\ = \ \int_a^\infty
	 	(x - a)_{}^\ell e_{}^{\theta_0 + \theta_1 (x - a)} \phi(x) \, dx
\]
are given by
\begin{align*}
	K_1(\theta_0,\theta_1) \
	&= \ e_{}^{\theta_0 - \theta_1 a + \theta_1^2/2}
		\bigl( (\theta_1 - a) \Phi(\theta_1 - a) + \phi(\theta_1 - a) \bigr) , \\
	K_2(\theta_0,\theta_1) \
	&= \ e_{}^{\theta_0 - \theta_1 a + \theta_1^2/2}
		\Bigl( \bigl( 1 + (\theta_1 - a)^2 \bigr) \Phi(\theta_1 - a)
			+ (\theta_1 - a) \phi(\theta_1 - a) \Bigr) .
\end{align*}
Moreover,
\[
	\int_{-\infty}^a e_{}^{\theta_0 + \theta_1 (x - a)} \phi(x) \, dx
	\ = \ K(\theta_0, -\theta_1; -a) .
\]

On the other hand, for fixed real numbers $a < b$ let
\begin{equation}
	\label{Eq:AS_AuxiliaryJTheta}
	J(\theta_0,\theta_1) = J(\theta_0,\theta_1; a, b)
	\ := \ \int_a^b \exp \Bigl( \frac{b-x}{b-a} \, \theta_0
		+ \frac{x-a}{b-a} \, \theta_1 \Bigr) \phi(x) \, dx .
\end{equation}
With
\[
	\tilde{\theta}_0 \ := \ \frac{b\theta_0 - a \theta_1}{b - a} ,
	\quad
	\tilde{\theta}_1 \ := \ \frac{\theta_1 - \theta_0}{b - a}
	\quad\text{and}\quad
	\tilde{b} \ := \ b - \tilde{\theta}_1 ,
	\quad
	\tilde{a} \ := \ a - \tilde{\theta}_1
\]
we may write
\[
	J(\theta_0,\theta_1)
	\ = \ e_{}^{\tilde{\theta}_0 + \tilde{\theta}_1^2/2}
		\bigl( \Phi(\tilde{b}) - \Phi(\tilde{a}) \bigr).
\]
Furthermore, explicit expressions for
\[
	J_{\ell m}(\theta_0,\theta_1)
	\ := \ \frac{\partial^{\ell + m} J(\theta_0,\theta_1)}
		{\partial \theta_0^\ell \partial \theta_1^m}
	\ = \ \int_a^b \frac{(b - x)^\ell (x - a)^m}{(b - a)^{\ell+m}} \,
		\exp \Bigl( \frac{b - x}{b - a} \, \theta_0
			+ \frac{x - a}{b - a} \, \theta_1 \Bigr) \phi(x) \, dx
\]
for $\ell,m\in\{0,1,2\}$ with $1\le \ell + m\le 2$ are given by
\begin{align*}
	J_{10}(\theta_0,\theta_1) \
	&= \ e_{}^{\tilde{\theta}_0 + \tilde{\theta}_1^2/2} \,
		\frac{\tilde{b} \bigl( \Phi(\tilde{b}) - \Phi(\tilde{a}) \bigr)
			+ \phi(\tilde{b}) - \phi(\tilde{a})}{b - a} , \\
	J_{01}(\theta_0,\theta_1) \
	&= \ J_{10}(\theta_1,\theta_0; -b, -a) , \\
	J_{20}(\theta_0,\theta_1) \
	&= \ e_{}^{\tilde{\theta}_0 + \tilde{\theta}_1^2/2} \,
		\frac{(1 + \tilde{b}^2) \bigl( \Phi(\tilde{b}) - \Phi(\tilde{a}) \bigr)
			+ (\tilde{a} - 2 \tilde{b}) \phi(\tilde{a})
			+ \tilde{b} \phi(\tilde{b})}{(b - a)^2} , \\
	J_{11}(\theta_0,\theta_1) \
	&= \ e_{}^{\tilde{\theta}_0 + \tilde{\theta}_1^2/2} \,
		\frac{- (1 + \tilde{a}\tilde{b}) \bigl( \Phi(\tilde{b}) - \Phi(\tilde{a}) \bigr)
			+ \tilde{b} \phi(\tilde{a}) - \tilde{a} \phi(\tilde{b})}{(b - a)^2} , \\
	J_{02}(\theta_0,\theta_1) \
	&= \ e_{}^{\tilde{\theta}_0 + \tilde{\theta}_1^2/2} \,
		\frac{(1 + \tilde{a}^2) \bigl( \Phi(\tilde{b}) - \Phi(\tilde{a}) \bigr)
			+ (2 \tilde{a} - \tilde{b}) \phi(\tilde{b})
			- \tilde{a} \phi(\tilde{a})}{(b - a)^2} .
\end{align*}
In case of $\tilde{a} > 0$, the right hand side of the equation
\[
	\Phi(\tilde{b}) - \Phi(\tilde{a})
	\ = \ \Phi(-\tilde{a}) - \Phi(-\tilde{b})
\]
is numerically more accurate than its left-hand side. In connection with $J(\theta_0,\theta_1)$ we also use the the lower bound
\begin{align*}
	\log(\Phi(\tilde{b}) - \Phi(\tilde{a})) \
	&= \ - \frac{\tilde{m}^2}{2}
		+ \log \int_{-\tilde{d}}^{\tilde{d}} \exp(\tilde{m}z) \phi(z) \, dz
		\ \ge \ - \frac{\tilde{m}^2}{2}
			+ \log \bigl( \Phi(\tilde{d}) - \Phi(-\tilde{d}) \bigr)
\end{align*}
with $\tilde{m} := (\tilde{a} + \tilde{b})/2$ and $\tilde{d} := (\tilde{b} - \tilde{a})/2$. The bound follows from $\exp(\tilde{m}z) \ge 1 + \tilde{m} z$.

\paragraph{Local parametrizations.}
Let us fix any vector $\taub$ with $m \ge 1$ components $\tau_1 < \cdots < \tau_m$ in $(x_1,x_n)$. Any function $\theta$ which is linear on the intervals $\XX_0,\XX_1,\ldots,\XX_m$ specified in Lemma~\ref{lem:existence.uniqueness.2A} is uniquely determined by the vector
\[
	\thb = (\theta_j)_{j=0}^{m+1} \
	:= \ \bigl( \theta'(\tau_1\,-),
		\theta(\tau_1),\ldots,\theta(\tau_m),
		\theta'(\tau_m\,+) \bigr)^\top
	\ \in \ \R^{m+2} .
\]
Then $L(\theta)$ is given by
\begin{align}
\nonumber
	L(\taub,\thb) \
	:= \ &\sum_{i=1}^n w_i \theta(x_i)
		- \int_{\XX_0} e_{}^{\theta(x)} \, P_o(dx)
		- \sum_{j=1}^m \int_{\XX_j} e_{}^{\theta(x)} \, P_o(dx)
		+ 1 \\
\label{eq:LocalL.Setting2A}
	= \ &\sum_{j=0}^{m+1} \tilde{w}_j \theta_j
		- K(\theta_1, - \theta_0; - \tau_1)
		- \sum_{1 \le j < m} J(\theta_j,\theta_{j+1}; \tau_j, \tau_{j+1})
		- K(\theta_m, \theta_{m+1}; \tau_m) + 1 .
\end{align}
with the auxiliary functions $K(\cdot,\cdot;\cdot)$ and $J(\cdot,\cdot;\cdot,\cdot)$ introduced in~\eqref{Eq:AS_AuxiliaryK} and \eqref{Eq:AS_AuxiliaryJTheta} and the `weights'
\begin{align*}
	\tilde{w}_0 \
	&:= \ - \sum_{i=1}^n (\tau_1 - x_i)^+ w_i , \\
	\tilde{w}_1 \
	&:= \ \sum_{i=1}^n
		\min \Bigl( 1, \frac{(\tau_2 - x_i)^+}{\tau_2 - \tau_1} \Bigr) \, w_i , \\
	\tilde{w}_j \
	&:= \ \sum_{i=1}^n \Bigl(
		1_{[x_i \le \tau_j]} \,
			\frac{(x_i - \tau_{j-1})^+}{\tau_j - \tau_{j-1}}
		+ 1_{[x_i > \tau_j]} \,
			\frac{(\tau_{j+1} - x_i)^+}{\tau_{j+1} - \tau_j} \Bigr) w_i
		\quad\text{for} \ 1 < j < m , \\
	\tilde{w}_m \
	&:= \ \sum_{i=1}^n 
		\min \Bigl( 1, \frac{(x_i - \tau_{m-1})^+}{\tau_m - \tau_{m-1}} \Bigr) \, w_i, \\
	\tilde{w}_{m+1} \
	&:= \ \sum_{i=1}^n (x_i - \tau_m)^+ \, w_i .	
\end{align*}
In case of $m = 1$, the weight $\tilde{w}_1$ is just given by $\tilde{w}_1 = 1$.

The function $L(\taub,\cdot): \R^{m+2} \to \R$ is twice continuously differentiable with negative definite Hessian matrix, see the next paragraph.

\paragraph{Gradient vector and Hessian matrix for $L(\taub,\cdot)$ in \eqref{eq:LocalL.Setting2A}.}
In case of $m \ge 2$, the gradient $\bs{g}(\taub,\thb) = \bigl( g_j(\taub,\thb) \bigr)_{j=0}^{m+1}$ of $L(\taub,\cdot)$ equals
\[
	g_j(\taub, \thb) \ = \ \tilde{w}_j - \begin{cases}
		- K_1(\theta_1, - \theta_0; - \tau_1)
			& \text{if} \ j = 0 , \\
		K(\theta_1, - \theta_0; - \tau_1)
			+ J_{10}(\theta_1,\theta_2; \tau_1, \tau_2)
			& \text{if} \ j = 1 , \\
		J_{01}(\theta_{j-1},\theta_j; \tau_{j-1}, \tau_j)
			+ J_{10}(\theta_j, \theta_{j+1}; \tau_j, \tau_{j+1})
			& \text{if} \ 2 < j < m , \\
		J_{01}(\theta_{m-1},\theta_m; \tau_{m-1}, \tau_m)
			+ K(\theta_m, \theta_{m+1}; \tau_m)
			& \text{if} \ j = m , \\
		K_1(\theta_m, \theta_{m+1}; \tau_m)
			& \text{if} \ j = m+1 ,
	\end{cases}	
\]
while its negative Hessian matrix $\bs{H}(\taub,\thb) = \bigl( H_{jk}(\taub,\thb) \bigr)_{j,k=0}^{m+1}$ is given by
\begin{align*}
	H_{00}(\taub,\thb) \
	&= \, K_2(\theta_1, - \theta_0; - \tau_1) , \\
	H_{01}(\taub,\thb) = H_{10}(\taub,\thb) \
	&= \, - K_1(\theta_1, - \theta_0; - \tau_1) , \\
	H_{11}(\taub,\thb) \
	&= \, K(\theta_1, - \theta_0; - \tau_1)
		+ J_{20}(\theta_1, \theta_2; \tau_1, \tau_2) , \\
	H_{j,j+1}(\taub,\thb) = H_{j+1,j}(\taub,\thb) \
	&= \, J_{11}(\theta_j, \theta_{j+1}; \tau_j, \tau_{j+1})
		\ \, \text{for} \ 1 \le j < m , \\
	H_{jj}(\taub,\thb) \
	&= \, J_{02}(\theta_{j-1}, \theta_j; \tau_{j-1}, \tau_j)
		+ J_{20}(\theta_j, \theta_{j+1}; \tau_j, \tau_{j+1})
		\ \, \text{for} \ 1 < j < m , \\
	H_{mm}(\taub,\thb) \
	&= \, J_{02}(\theta_{m-1},\theta_m; \tau_{m-1},\tau_m)
		+ K(\theta_m,\theta_{m+1}; \tau_m) \\
	H_{m,m+1}(\taub,\thb) = H_{m+1,m}(\taub,\thb) \
	&= \, K_1(\theta_m, \theta_{m+1}; \tau_m) , \\
	H_{m+1,m+1}(\taub,\thb) \
	&= \, K_2(\theta_m,\theta_{m+1}; \tau_m), \\
	H_{jk}(\taub,\thb) \
	&= \, 0 \ \, \text{if} \ |j - k| \ge 2 .
\end{align*}

In case of $m = 1$ we get the simplified formulae
\[
	L(\taub,\thb) \ = \ \sum_{j=0}^2 \tilde{w}_j \theta_j
		- K(\theta_1, - \theta_0; - \tau_1) - K(\theta_1, \theta_2; \tau_1) + 1 ,
\]
\[
	g_j(\taub,\thb) \ = \ \tilde{w}_j - \begin{cases}
		- K_1(\theta_1, - \theta_0; - \tau_1)
			& \text{if} \ j = 0 , \\
		K(\theta_1, - \theta_0; - \tau_1) + K(\theta_1, \theta_2; \tau_1)
			& \text{if} \ j = 1 , \\
		K_1(\theta_1, \theta_2; \tau_1)
			& \text{if} \ j = 2 ,
	\end{cases}
\]
and
\begin{align*}
	H_{00}(\taub,\thb) \
	&= \ K_2(\theta_1, - \theta_0; - \tau_1) , \\
	H_{01}(\taub,\thb) = H_{10}(\taub,\thb) \
	&= \ - K_1(\theta_1, - \theta_0; - \tau_1) , \\
	H_{11}(\taub,\thb) \
	&= \ K(\theta_1, - \theta_0; - \tau_1) + K(\theta_1,\theta_2; \tau_2) \\
	H_{12}(\taub,\thb) = H_{21}(\taub,\thb) \
	&= \ K_1(\theta_1, \theta_2; \tau_1) , \\
	H_{22}(\taub,\thb) \
	&= \ K_2(\theta_1,\theta_2; \tau_1) .
\end{align*}

\paragraph{Evaluating $h_\theta(\tau) := DL(\theta,V_{\tau,\theta})$ and $h_\theta'(\tau\,+)$.}
Suppose first that $\theta(x) = \hat{\mu} x - \hat{\mu}^2/2$, so $P_\theta = \NN(\hat{\mu},1)$ and $D(\theta) = \emptyset$. Then one can show that
\begin{align*}
	h_\theta'(\tau\,+) \
	&= \ \hat{F}(\tau) - \Phi(\tau - \hat{\mu}) , \\
	h_\theta(\tau) \
	&= \ \tau h_\theta'(\tau\,+) - \int_{(-\infty,\tau]} x \, \hat{P}(dx)
		+ \hat{\mu} \Phi(\tau - \hat{\mu}) - \phi(\tau - \hat{\mu}) .
\end{align*}

Now suppose that $\theta$ is given by a vector $\taub$ of $m \ge 1$ points $\tau_1 < \cdots < \tau_m$ and a vector $\thb = (\theta_j)_{j=0}^{m+1}$ as in \eqref{eq:LocalL.Setting2A}. Then for $\tau < \tau_1$,
\begin{align*}
	h_\theta'(\tau\,+) \
	&= \ \hat{F}(\tau) - K(\theta_*, - \theta_0; - \tau) , \\
	h_\theta(\tau) \
	&= \ (\tau - \tau_1)
		\bigl( h_\theta'(\tau\,+) - J_{10}(\theta_*, \theta_1; \tau, \tau_1) \bigr)
		- \int 1_{[\tau < x \le \tau_1]}(\tau_1 - x) \hat{P}(dx) ,
\end{align*}
where $\theta_* := \theta(\tau) = \theta_1 + (\tau - \tau_1) \theta_0$. For $1 \le j < m$ and $\tau \in [\tau_j,\tau_{j+1})$,
\begin{align*}
	h_\theta'(\tau\,+) \
	&= \ \hat{P}((\tau_j,\tau])
		- J(\theta_j, \theta_*; \tau_j, \tau)
		- \int \ j_{10}(x; \tau_j, \tau_{j+1}) \, \hat{P}(dx)
		+ J_{10}(\theta_j,\theta_{j+1}; \tau_j,\tau_{j+1}) , \\
	h_\theta(\tau) \
	&= \ (\tau - \tau_j)
		\bigl( h_\theta'(\tau\,+) + J_{01}(\theta_j, \theta_*; \tau_j, \tau) \bigr)
		- \int 1_{[\tau_j < x \le \tau]} (x - \tau_j) \, \hat{P}(dx) ,
\end{align*}
where $\theta_* := \theta(\tau) = (\tau_{j+1} - \tau_j)^{-1} \bigl( (\tau_{j+1} - \tau) \theta_j + (\tau - \tau_j) \theta_{j+1} \bigr) = \theta_j + (\tau - \tau_j) \theta_j'$. Finally, for $\tau > \tau_m$,
\begin{align*}
	h_\theta'(\tau\,+) \
	&= \ K(\theta_*, \theta_{m+1}; \tau) - \hat{P}((\tau,\infty)) , \\
	h_\theta(\tau) \
	&= \ (\tau - \tau_m)
		\bigl( h_\theta'(\tau\,+) + J_{01}(\theta_m,\theta_*; \tau_m, \tau) \bigr)
		- \int 1_{[\tau_m < x \le \tau]} (x - \tau_m) \hat{P}(dx) ,
\end{align*}
where $\theta_* := \theta_m + (\tau - \tau_m) \theta_{m+1}$.

If $\tau$ is restricted to some interval $I$ not containing any observations $x_i$ or knots $\tau_j$, the latter expressions for $h_\theta'(\tau\,+)$ are constant in $\tau$ except for one term $K(\theta_*, - \theta_0; - \tau)$, $J(\theta_j, \theta_*; \tau_j, \tau)$ or $K(\theta_*, \theta_{m+1}; \tau)$. Hence finding $\tau$ such that $h_\theta'(\tau\,+) = 0$ leads to equations of the following type: For given real numbers $\theta_0, \theta_1, \tau_0$ and $c$, find $\tau \in \R$ such that
\begin{align}
\label{eq:inverse.1}
	K \bigl( \theta_0 + \theta_1(\tau - \tau_0), \pm \theta_1; \pm \tau \bigr) \
	&= \ c , \\
\label{eq:inverse.2}
	J \bigl( \theta_0, \theta_0 + \theta_1(\tau - \tau_0); \tau_0, \tau \bigr) \
	&= \ c ,
\end{align}
and check whether $\tau \in I$. Since $K \bigl( \theta_0 + \theta_1(\tau - \tau_0), \pm\theta_1; \pm \tau \bigr)$ equals $e_{}^{\theta_0 - \theta_1\tau_0 +\theta_1^2/2} \Phi(\mp (\tau - \theta_1))$, the unique solution of \eqref{eq:inverse.1} is given by
\[
	\tau \ = \ \theta_1 \mp \Phi^{-1}(e_{}^{-\theta_0 + \theta_1\tau_0 - \theta_1^2/2} c) ,
\]
provided that $c > 0$ and $c e^{-\theta_0 + \theta_1\tau_0 - \theta_1^2/2} < 1$; otherwise no solution exists. Likewise, since $J \bigl( \theta_0, \theta_0 + \theta_1(\tau - \tau_0); \tau_0, \tau \bigr)$ equals $e_{}^{\theta_0 - \theta_1\tau_0 + \theta_1^2/2} \bigl( \Phi(\tau - \theta_1) - \Phi(\tau_0 - \theta_1) \bigr)$, the unique solution of \eqref{eq:inverse.2} is given by
\[
	\tau \ = \ \theta_1
		+ \Phi^{-1} \bigl( \Phi(\tau_0 - \theta_1)
			+ e_{}^{-\theta_0 + \theta_1\tau_0 - \theta_1^2/2} c \bigr) ,
\]
provided that $0 < \Phi(\tau_0 - \theta_1) + c e^{-\theta_0 + \theta_1\tau_0 - \theta_1^2/2} < 1$; otherwise no solution exists.

\paragraph{Activating one constraint.}
Suppose that $m \ge 2$ in \eqref{eq:LocalL.Setting2A}. If even $m \ge 3$, and if we activate the constraint at $\tau_{j_o}$, where $1 < j_o < m$, the update of $\bs{\tau}$ and $(\tilde{w}_j)_{j=0}^{m+1}$ is essentially the same as in Setting~1. If we activate the constraint at $\tau_1$, this amounts to replacing $(\tau_1,\tau_2)$ and $(\tilde{w}_0, \tilde{w}_1, \tilde{w}_{j_o+1})$ with
\[
	(\tau_2)
	\quad\text{and}\quad
	\bigl( \tilde{w}_0 - (\tau_2 - \tau_1) \tilde{w}_1, \,
		\tilde{w}_1 + \tilde{w}_2 \bigr) ,
\]
respectively. Similary, activating the constraint at $\tau_m$ amounts to replacing $(\tau_{m-1}, \tau_m)$ with
\[
	(\tau_{m-1})
	\quad\text{and}\quad
	\bigl( \tilde{w}_{m-1} + \tilde{w}_m, \,
		\tilde{w}_{m+1} + (\tau_m - \tau_{m-1}) \tilde{w}_m \bigr) ,
\]
respectively.

\subsection{Details for Setting~2B}
\label{subsec:Details.2B}

We provide explicit formulae for the special case of $P_o$ being a gamma distribution with shape parameter $\alpha>0$ and rate parameter $\beta=1$, i.e. $P_o$ has density
\[
	p_o(x) \ = \ \Gamma(\alpha)^{-1}x^{\alpha-1}e^{-x},\qquad x>0.
\]
Note that the case of a gamma distribution with rate parameter $\beta\neq 1$ may be reduced to the case $\beta=1$ by multiplying all observations with $\beta$, then estimating the function $\theta$ by $\hat{\theta}_{\text{temp}}$ and finally setting $\hat{\theta}(x):=\hat{\theta}_{\text{temp}}(x/\beta)$.

\paragraph{Auxiliary functions.}
For $s>0$, the c.d.f.\ of a gamma distribution with shape $s$ and rate $1$ is the function $G_s:[0,\infty] \to [0,1]$ defined by
\[
	G_s(x) \ := \ \Gamma(s)^{-1} \int_{0}^x z^{s-1}e^{-z}\,dz,
\]
and, for $0\leq a< b \leq \infty$, we define the partial integral
\[
	G_s(a,b) \ := \ \Gamma(s)^{-1} \int_{a}^b z^{s-1}e^{-z}\,dz \ = \ G_s(b) - G_s(a).
\]

On the one hand, for a fixed number $c \in \R$ let
\[
	K(\theta_0,\theta_1) 
	\  = \ K(\theta_0,\theta_1;c)
	\ := \ \int_{c}^\infty e^{\theta_0+\theta_1(x - c)}p_o(x) \, dx .
\]
This is equal to $\infty$ in case of $\theta_1 \ge 1$. Otherwise, when $\theta_1<1$, let $\tilde{c} := (1 - \theta_1) c$. Then
\[
	K(\theta_0,\theta_1)
	\ = \ \frac{e^{\theta_0-\theta_1c}}{(1 - \theta_1)^\alpha} \,
		G_\alpha(\tilde{c},\infty)
	\ = \ \frac{\partial K(\theta_0,\theta_1)}{\partial\theta_0} ,
\]
and explicit expressions for
\[
	K_\ell(\theta_0,\theta_1) 
	\ := \ \frac{\partial^\ell K(\theta_0,\theta_1)}{\partial\theta_1^\ell}
	\  = \ \int_c^\infty (x-c)^\ell e^{\theta_0+\theta_1(x-c)} p_o(x) \, dx
\]
are given by
\begin{align*}
	K_1(\theta_0,\theta_1) \
	&= \ \frac{e^{\theta_0-\theta_1c}}{(1-\theta_1)^{\alpha+1}}
		\bigl( \alpha G_{\alpha+1}(\tilde{c},\infty)
			-\tilde{c} G_\alpha(\tilde{c},\infty) \bigr) , \\
	K_2(\theta_0,\theta_1) \
	&= \ \frac{e^{\theta_0-\theta_1c}}{(1-\theta_1)^{\alpha+2}}
		\bigl( \alpha(\alpha+1)  G_{\alpha+2}(\tilde{c},\infty)
			- 2\alpha \tilde{a} G_{\alpha+1}(\tilde{c},\infty)
			+ \tilde{c}^2 G_{\alpha}(\tilde{c},\infty) \bigr) .
\end{align*}

On the other hand, for fixed numbers $0\leq a<b < \infty$ let
\[
	J(\theta_0,\theta_1) = J(\theta_0,\theta_1;a,b)
	\ = \ \int_a^b \exp \Bigl( \frac{b-x}{b-a}\theta_0+\frac{x-a}{b-a}\theta_1 \Bigr)
			p_o(x) \, dx
	\ = \ \frac{e^{\tilde{\theta}_0}}{\Gamma(\alpha)}
		\int_a^b e_{}^{(\tilde{\theta}_1 - 1)x} x_{}^{\alpha-1} \, dx  ,	
\]
where
\[
	\tilde{\theta}_0 \ := \ \frac{b\theta_0-a\theta_1}{b-a}
	\quad\text{and}\quad
	\tilde{\theta}_1 \ := \ \frac{\theta_1-\theta_0}{b-a} .
\]
With $\tilde{a} := (1 - \tilde{\theta}_1) a$ and $\tilde{b} := (1 - \tilde{\theta}_1) b$ we may write
\[
	J(\theta_0,\theta_1)
	\ = \ \begin{cases}
		\displaystyle
		\frac{e^{\tilde{\theta}_0}
				G_\alpha(\tilde{a},\tilde{b})}{(1-\tilde{\theta}_1)^{\alpha}}
		&\text{if} \ \tilde{\theta} < 1 , \\[2ex]
		\displaystyle
		\frac{e^{\tilde{\theta}_0} (b^\alpha - a^\alpha)}{\Gamma(\alpha + 1)}
		&\text{if} \ \tilde{\theta} = 1 .
	\end{cases}
\]
Note that in our specific applications the slope parameter $\tilde{\theta}_1$ corresponds to the difference ratio $\bigl( \theta(b) - \theta(a) \bigr) /(b - a)$ of a function $\theta \in \V$. Thus it will be strictly smaller than $1$ as soon as $\theta \in \Theta$ and $L(\theta) > - \infty$. During a Newton step the latter conditions may be violated temporarily, so in case of $\tilde{\theta}_1 > 1$ we use the simple bound
\[
	J(\theta_0,\theta_1)
	\ \le \ \frac{e^{\tilde{\theta}_0 + (\tilde{\theta}_1 - 1)b} (b^\alpha - a^\alpha)}
		{\Gamma(\alpha + 1)} .
\]
In case of $\tilde{\theta}_1 < 1$, explicit expressions for
\[
	J_{\ell m}(\theta_0,\theta_1) 
	\ := \ \frac{\partial^{\ell+m}J(\theta_0,\theta_1)}
		{\partial\theta_0^\ell\partial\theta_1^m}
	\ = \ \int_a^b \frac{(b-x)^\ell(x-a)^m}{(b-a)^{\ell+m}}
		\exp \Bigl( \frac{b-x}{b-a}\theta_0 + \frac{x-a}{b-a}\theta_1 \Bigr)
			p_o(x) \, dx
\]
are given by
\begin{align*}
	J_{10}(\theta_0,\theta_1) \
	&= \ \frac{e^{\tilde{\theta}_0}}{(1-\tilde{\theta}_1)^{\alpha+1}} \,
		\frac{\tilde{b}G_\alpha(\tilde{a},\tilde{b})
			- \alpha G_{\alpha+1}(\tilde{a},\tilde{b})}{b-a} , \\
	J_{01}(\theta_0,\theta_1) \
	&= \ \frac{e^{\tilde{\theta}_0}}{(1-\tilde{\theta}_1)^{\alpha+1}} \,
		\frac{-\tilde{a}G_\alpha(\tilde{a},\tilde{b})
			+ \alpha G_{\alpha+1}(\tilde{a},\tilde{b})}{b-a} , \\
	J_{20}(\theta_0,\theta_1) \
	&= \ \frac{e^{\tilde{\theta}_0}}{(1-\tilde{\theta}_1)^{\alpha+2}} \,
		\frac{\tilde{b}^2 G_\alpha(\tilde{a},\tilde{b})
			- 2\alpha\tilde{b} G_{\alpha+1}(\tilde{a},\tilde{b})
			+ \alpha(\alpha+1) G_{\alpha+2}(\tilde{a},\tilde{b})}{(b-a)^2} , \\
	J_{11}(\theta_0,\theta_1) \
	&= \ \frac{e^{\tilde{\theta}_0}}{(1-\tilde{\theta}_1)^{\alpha+2}} \,
		\frac{-\tilde{a}\tilde{b}G_\alpha(\tilde{a},\tilde{b})
			+ \alpha(\tilde{a}+\tilde{b}) G_{\alpha+1}(\tilde{a},\tilde{b})
			- \alpha(\alpha+1) G_{\alpha+2}(\tilde{a},\tilde{b})}{(b-a)^2} , \\
	J_{02}(\theta_0,\theta_1) \
	&= \ \frac{e^{\tilde{\theta}_0}}{(1-\tilde{\theta}_1)^{\alpha+2}} \,
		\frac{\tilde{a}^2G_\alpha(\tilde{a},\tilde{b})
			- 2\alpha\tilde{a} G_{\alpha+1}(\tilde{a},\tilde{b})
			+ \alpha(\alpha+1) G_{\alpha+2}(\tilde{a},\tilde{b})}{(b-a)^2} .
\end{align*}

\paragraph{Local parametrizations.}
Let us fix an arbitrary vector $\taub$ with $m \ge 1$ components $0 \le \tau_1 < \cdots < \tau_m < x_n$. Any function $\theta : [0,\infty) \to \R$ which is constant on $[0,\tau_1]$ and linear on the intervals $\XX_1,\ldots,\XX_m$ specified in Lemma~\ref{lem:existence.uniqueness.2B} is uniquely determined by the vector $\thb = (\theta_j)_{j=1}^{m+1} := \bigl( \theta(\tau_1),\ldots,\theta(\tau_m), \theta'(\tau_m\,+) \bigr)^\top \in \R^{m+1}$. Then $L(\theta)$ is given by
\begin{align}
\nonumber
	L(\taub,\thb) \
	:= \ &\sum_{i=1}^n w_i \theta(x_i)
		- e_{}^{\theta_1} F_0(\tau_1)
		- \sum_{j=1}^m \int_{\XX_j} e_{}^{\theta_j + \theta_j' (x - \tau_j)} \, P_o(dx)
		+ 1 \\
\label{eq:LocalL.Setting2B}
	= \ &\sum_{j=1}^{m+1}\tilde{w}_j\theta_j - e^{\theta_1} G_\alpha(\tau_1)
	- \sum_{1\leq j < m} J(\theta_j,\theta_{j+1};\tau_j,\tau_{j+1})
	- K(\theta_m,\theta_{m+1};\tau_m)
	+ 1
\end{align}
with the auxiliary functions $G_\alpha(\cdot)$, $J(\cdot,\cdot;\cdot,\cdot)$ and $K(\cdot,\cdot;\cdot)$ introduced before and the weights
\begin{align*}
	\tilde{w}_1 \
	&:= \ \sum_{i=1}^n
		\min \Bigl( 1, \frac{(\tau_2 - x_i)^+}{\tau_2 - \tau_1} \Bigr) \, w_i , \\
	\tilde{w}_j \
	&:= \ \sum_{i=1}^n \Bigl(
		1_{[x_i \le \tau_j]} \,
			\frac{(x_i - \tau_{j-1})^+}{\tau_j - \tau_{j-1}}
		+ 1_{[x_i > \tau_j]} \,
			\frac{(\tau_{j+1} - x_i)^+}{\tau_{j+1} - \tau_j} \Bigr) w_i
		\quad\text{for} \ 1 < j < m , \\
	\tilde{w}_m \
	&:= \ \sum_{i=1}^n
		\min \Bigl( 1, \frac{(x_i - \tau_{m-1})^+}{\tau_m - \tau_{m-1}} \Bigr) w_i , \\
	\tilde{w}_{m+1} \
	&:= \ \sum_{i=1}^n (x_i - \tau_m)^+ \, w_i .	
\end{align*}
In case of $m = 1$, the weight $\tilde{w}_1$ is just given by $\tilde{w}_1 = 1$.

The function $L(\taub,\cdot): \R^{m+1} \to [-\infty,\infty)$ is continuous and concave. On the open set $\bigl\{ \thb \in \R^{m+1} : L(\taub,\thb) > - \infty \bigr\} = \bigl\{ \thb \in \R^{m+1} : \theta_{m+1} < 1 \bigr\}$ it is twice continuously differentiable with negative definite Hessian matrix, see the next paragraph.

\paragraph{Gradient vector and Hessian matrix for $L(\bs \tau,\cdot)$ in \eqref{eq:LocalL.Setting2B}.}
Let $\theta_{m+1} < 1$. In case of $m\geq 2$, the gradient $\bs g(\bs\tau,\bs\theta)=\left(g_j(\bs\tau,\bs\theta)\right)_{j=1}^{m+1}$ of $L(\bs\tau,\cdot)$ equals
\[
	g_j(\bs\tau,\bs\theta) \ = \ \tilde{w}_j -
	\begin{cases}
	e^{\theta_1} G_\alpha(\tau_1) + J_{10}(\theta_1,\theta_2;\tau_1,\tau_2)
		& \text{if} \ j = 1 , \\	
	J_{01}(\theta_{j-1},\theta_{j};\tau_{j-1},\tau_{j})
		+ J_{10}(\theta_{j},\theta_{j+1};\tau_{j},\tau_{j+1})
		& \text{if} \ 1 < j < m , \\
	J_{01}(\theta_{m-1},\theta_{m};\tau_{m-1},\tau_{m})
		+ K(\theta_m,\theta_{m+1};\tau_m)
		& \text{if} \ j = m , \\
	K_1(\theta_m,\theta_{m+1};\tau_m)
		& \text{if} \ j = m+1 ,
	\end{cases}
\]
while its negative Hessian matrix $\bs H(\bs\tau,\bs\theta)=\left(H_{jk}(\bs\tau,\bs\theta)\right)_{j,k=1}^{m+1}$ is given by
\begin{align*}
	H_{11}(\bs\tau,\bs\theta)
	&= e^{\theta_1} G_\alpha(\tau_1) + J_{20}(\theta_1,\theta_2;\tau_1,\tau_2) , \\
	H_{j,j+1}(\bs\tau,\bs\theta) = H_{j+1,j}(\bs\tau,\bs\theta)
	&= J_{11}(\theta_{j},\theta_{j+1};\tau_{j},\tau_{j+1})
		\ \ \text{for} \ 1 \le j < m , \\
	H_{jj}(\bs\tau,\bs\theta)
	&= J_{02}(\theta_{j-1},\theta_{j};\tau_{j-1},\tau_{j})
		+ J_{20}(\theta_{j},\theta_{j+1};\tau_{j},\tau_{j+1})
		\ \ \text{for} \ 1 < j < m , \\
	H_{mm}(\bs\tau,\bs\theta)
	&= J_{02}(\theta_{m-1},\theta_{m};\tau_{m-1},\tau_{m})
		+ K(\theta_m,\theta_{m+1};\tau_m) , \\
	H_{m,m+1}(\bs\tau,\bs\theta) = H_{m+1,m}(\bs\tau,\bs\theta)
	&= K_1(\theta_m,\theta_{m+1};\tau_m) , \\
	H_{m+1,m+1}(\bs\tau,\bs\theta)
	&= K_2(\theta_m,\theta_{m+1};\tau_m) , \\
	H_{jk}(\bs\tau,\bs\theta)
	&= 0 \ \ \text{if} \ |j-k| > 1 .
\end{align*}

In case of $m=1$ we get the simplified formulae
\[
	L(\bs\tau,\bs\theta) \ = \
	\sum_{j=1}^{2}\tilde{w}_j\theta_j - e^{\theta_1} G_\alpha(\tau_1)
	- K(\theta_1,\theta_{2};\tau_1) + 1 ,
\]
\[
	g_j(\bs\tau,\bs\theta) \ = \ \tilde{w}_j -
	\begin{cases}
		e^{\theta_1} G_\alpha(\tau_1)+K(\theta_1,\theta_{2};\tau_1)
			& \text{if} \ j = 1 , \\	
		K_1(\theta_1,\theta_{2};\tau_1)
			& \text{if} \ j = 2 ,
	\end{cases}
\]
and
\begin{align*}
	H_{11}(\bs\tau,\bs\theta) \
	&= \ e^{\theta_1} G_\alpha(\tau_1) + K(\theta_1,\theta_{2};\tau_1), \\
	H_{12}(\bs\tau,\bs\theta) = H_{21}(\bs\tau,\bs\theta) \
	&= \ K_1(\theta_1,\theta_{2};\tau_1),\\
	H_{22}(\bs\tau,\bs\theta) \
	&= \ K_2(\theta_1,\theta_{2};\tau_1).
\end{align*}

\paragraph{Evaluating $h_\theta(\tau):=DL(\theta,V_{\tau,\theta})$ and $h_\theta'(\tau+)$.}
Suppose first that $\theta \equiv 0$, so $D(\theta)=\emptyset$. Then one can show that
\begin{align*}
	h_\theta'(\tau\,+) \
	&= \ (\hat{F} - G_\alpha)(\tau) , \\
	h_\theta(\tau) \
	&= \ \tau h_\theta'(\tau\,+)
		+ \hat{\mu} - \int_{[0,\tau]} x \, \hat{P}(dx)
			- \alpha + \alpha G_{\alpha+1}(\tau) .
\end{align*}

Now suppose that $\theta$ is given by a vector $\bs\tau$ of $m\geq 1$ points $\tau_1<\cdots<\tau_m$ and a vector $\bs\theta=(\theta_j)_{j=1}^{m+1}$ as in \eqref{eq:LocalL.Setting2B}. Then
\[
	h_\theta(0) \
	= \ - \tau_1 \bigl( \hat{F}(\tau_1) - e^{\theta_1} G_\alpha(\tau_1) \bigr)
		+ \int 1_{[x \le \tau_1]} x \, \hat{P}(dx)
		- e^{\theta_1} \alpha G_{\alpha+1}(\tau_1) ,
\]
while for $0 \le \tau<\tau_1$
\begin{align*}
	h_\theta'(\tau\,+) \
	&= \ \hat{F}(\tau) - e^{\theta_1} G_\alpha(\tau) , \\
	h_\theta(\tau) \
	&= \ h_\theta(0) + \tau h_\theta'(\tau\,+)
		- \int 1_{[x\leq \tau]} x \,\hat{P}(dx)
		+ e^{\theta_1} \alpha G_{\alpha+1}(\tau) .
\end{align*}
For $1\le j < m$ and $\tau \in [\tau_j,\tau_{j+1})$,
\begin{align*}
	h_\theta '(\tau\,+) \
	&= \ \hat{P}((\tau_j,\tau])
		- J(\theta_j,\theta_{*};\tau_j,\tau)
		- \int j_{10}(x;\tau_j,\tau_{j+1})\,\hat{P}(dx)
		+ J_{10}(\theta_j,\theta_{j+1};\tau_j,\tau_{j+1}) , \\
	h_\theta(\tau) \
	&= \ (\tau-\tau_j)\left(h_\theta '(\tau+)
		+ J_{01}(\theta_j,\theta_*;\tau_j,\tau)\right)
		- \int 1_{[\tau_j<x\leq \tau]}(x-\tau_j)\,\hat{P}(dx) ,
\end{align*}
where $\theta_* := \theta(\tau) = (\tau_{j+1} - \tau_j)^{-1} \bigl( (\tau_{j+1}-\tau) \theta_j + (\tau-\tau_j) \theta_{j+1} \bigr) = \theta_j + (\tau-\tau_j)\theta_j'$. Finally, for $\tau >\tau_m$,
\begin{align*}
	h_\theta '(\tau\,+) \
	&= \ K(\theta_*,\theta_{m+1};\tau) - \hat{P}((\tau,\infty)) , \\
	h_\theta(\tau) \
	&= \ (\tau - \tau_m) \bigl( h_\theta '(\tau+)
			+ J_{01}(\theta_m,\theta_*;\tau_m,\tau) \bigr)
		- \int 1_{[\tau_m<x\leq \tau]} (x - \tau_m) \, \hat{P}(dx) ,
\end{align*}
where $\theta_* := \theta_m + (\tau - \tau_m) \theta_{m+1}$.

If $\tau$ is restricted to some interval $I$ not containing any observations $x_i$ or knots $\tau_j$, the expressions for $h_\theta '(\tau+)$ are constant in $\tau$ except for one term $e^{\theta_1} G_\alpha(\tau)$, $J(\theta_j,\theta_{*};\tau_j,\tau)$ or $K(\theta_*,\theta_{m+1};\tau)$. Hence finding $\tau$ such that $h_\theta '(\tau+)=0$ leads to equations of the following type: For given real numbers $\theta_0,\theta_1,\tau_0$ and $c$, find $\tau\in [0,\infty)$ such that
\begin{align}
e^{\theta_0} G_\alpha(\tau) \ & = \ c,\label{Eq:IsEquToC1}\\
J(\theta_0,\theta_0+\theta_1(\tau-\tau_0);\tau_0,\tau) \ & = \ c,\label{Eq:IsEquToC2}\\
K(\theta_0+\theta_1(\tau-\tau_0),\theta_{1};\tau)\ & = \ c,\label{Eq:IsEquToC3}
\end{align}
and check whether $\tau\in I$. The unique solution of \eqref{Eq:IsEquToC1} is given by
\[
	\tau \ = \ G_\alpha^{-1}(ce^{-\theta_0})
\]
with the quantile function $G_\alpha^{-1} : [0,1) \to [0,\infty)$ of $\mathrm{Gamma}(\alpha,1)$, provided that $0 \le ce^{-\theta_0} < 1$; otherwise no solution exists. It follows from $J(\theta_0,\theta_0+\theta_1(\tau-\tau_0);\tau_0,\tau) = (1-\theta_1)^{-\alpha} e^{\theta_0-\theta_1\tau_0} \bigl( G_\alpha((1-\theta_1)\tau) - G_\alpha((1-\theta_1)\tau_0) \bigr)$ that the unique solution of \eqref{Eq:IsEquToC2} is given by
\[
	\tau \ = \
	(1-\theta_1)^{-1} G_\alpha^{-1}
		\bigl( c(1-\theta_1)^\alpha e^{\theta_1\tau_0-\theta_0}
			+ G_\alpha((1-\theta_1)\tau_0) \bigr) ,
\]
provided that $0\le \theta_1 < 1$ and $0 \le c(1-\theta_1)^\alpha e^{\theta_1\tau_0-\theta_0}+G_\alpha\left((1-\theta_1)\tau_0\right) < 1$; otherwise no solution exists. Likewise it follows from $K(\theta_0+\theta_1(\tau-\tau_0),\theta_{1};\tau)
	= (1-\theta_1)^{-\alpha} e^{\theta_0-\theta_1\tau_0} \bigl( 1 - G_\alpha((1-\theta_1)\tau) \bigr)$ that the unique solution of \eqref{Eq:IsEquToC3} is given by
\[
	\tau \ = \ (1-\theta_1)^{-1} G_\alpha^{-1}
		\bigl( 1 - c(1-\theta_1)^\alpha e^{\theta_1\tau_0-\theta_0} \bigr) ,
\]
provided that $0\le \theta_1 < 1$ and $0 < c(1-\theta_1)^\alpha e^{\theta_1\tau_0-\theta_0} \le 1$; otherwise no solution exists.

\paragraph{Activating one constraint.}
The activation of one constraint is identical to Setting~2A, except that here is no weight $\tilde{w}_0$.

\paragraph{Data Simulation.}
Let $P_o = \mathrm{Gamma}(\alpha,\beta)$, and let $\theta \in \Theta$ such that $\gamma = \gamma(\theta) := \lim_{x \to \infty} \theta'(x\,+) < \beta$ and $\int f_\theta \, dP_o = 1$ with $f_\theta := e^\theta$. To simulate data from the density $f_\theta := e^\theta$ with respect to $P_o$, we use the acceptance rejection method of \cite{vonNeumann_1951}. We simulate independent random variables $Y \sim \text{Gamma}(\alpha, \beta - \gamma)$ and $U \sim \mathrm{Unif}[0,1]$. Note that $Y$ has density $h(x) := (1 - \gamma/\beta)^{-\alpha} e^{\gamma x}$ with respect to $P_o$ and that
\[
	(f_\theta/h)(x)
	\ = \ (f_\theta/h)(0) \, \exp \bigl( \theta(x) - \theta(0) - \gamma x \bigr)
\]
is monotone decreasing in $x \ge 0$. Hence the conditional distribution of $Y$, given that $U \le \exp( \theta(Y) - \theta(0) - \gamma Y \bigr)$ is equal to the desired distribution $P_\theta$. This leads to the following pseudocode for generating an independent sample $\bs{X}$ of size $n$ from $f_\theta$:
\begin{displaymath}
	\begin{array}{l}
	\hline
	i \leftarrow 0\\
	\text{while} \ i < n \ \text{do}\\
	\strut\quad	\text{simulate} \ Y \sim \mathrm{Gamma}(\alpha, \beta - \gamma) \\
	\strut\quad	\text{simulate} \ U  \sim \mathrm{Unif}([0,1]) \\
	\strut\quad \text{if} \ U \leq \exp\bigl( \theta(Y) -
	\theta(0) - \gamma Y \bigr) \ \text{then}\\
	\strut\qquad i \leftarrow i + 1\\
	\strut\qquad X_i \leftarrow Y\\
	\strut\quad \text{end if}\\
	\text{end while} \\
	\hline
	\end{array}
\end{displaymath}

\subsection{Further proofs}
\label{subsec:further.proofs}

\begin{proof}[\bf Continuity of $L$ on $(\V,\|\cdot\|)$ (Section~\ref{subsec:properties.L})]
In Setting~1, the assertion is obvious, so we prove it for Settings~2A-B. Recall that a sequence $(\theta_k)_k$ in $\V$ converges to a function $\theta \in \V$ with respect to $\|\cdot\|$ if and only if it converges uniformly on any bounded subset of $\XX$. Assuming this from now on, we want to show that $L(\theta_k) \to L(\theta)$ as $k \to \infty$. If $L(\theta) = - \infty$, then it follows from Fatou's lemma that
\[
	\limsup_{k \to \infty} L(\theta_k)
	\ = \ \int \theta \, d\hat{P}
		- \liminf_{k \to \infty} \int e_{}^{\theta_k} \, dP_o + 1
	\ \le \ L(\theta) = - \infty .
\]
If $L(\theta) > - \infty$, then $\int \exp \bigl( \theta(x) + \eps (1 + |x|) \bigr) \, P_o(dx) < \infty$ for sufficiently small $\eps > 0$, and for sufficiently large $k$, $\theta_k(x) \le \theta(x) + \eps (1 + |x|)$ for all $x \in \R$. Hence, it follows from dominated convergence that $L(\theta_k) \to L(\theta)$ as $k \to \infty$.
\end{proof}

\begin{proof}[\bf Proof of Remark~\ref{rem:from.L.to.theta}]
Let $(\theta_k)_k$ be a sequence in $\Theta \cap \V$ such that $L(\theta_k) \to L(\hat{\theta})$ but $\theta_k \not\to \hat{\theta}$ pointwise as $k \to \infty$. As in the proof of Lemmas~\ref{lem:existence.uniqueness.2A} and \ref{lem:existence.uniqueness.2B}, we may replace this sequence by a subsequence, if necessary, such that it converges to some function $\theta_* \in \Theta \setminus \{\hat{\theta}\}$ with respect to $\|\cdot\|$. Since $L$ is continuous, this implies that $L(\theta_k) \to L(\theta_*)$ as $k \to \infty$, whence $L(\theta_*) = L(\hat{\theta})$. Now, uniqueness of the maximizer of $L$ on $\Theta$ leads to the contradiction that $\theta_* = \hat{\theta}$.
\end{proof}

\begin{proof}[\bf Proof of Lemma~\ref{lem:convergence} for Setting~2B and Setting~1]
We only indicate the main changes in the proof for Setting~2A.

In Setting~2B, the constant $C_\ell$ may be replaced with $0$, and the set $\VV$ of basis functions consists of $v_0 \equiv 1$ and $V_\tau$, $\tau \in \DD$. This leads to $v_{\rm max}(x) = \max(1,x)$, and $\theta_{\rm max}(x) = C_o + C_r (x - x_n)^+$. Moreover, $\hat{\theta} - \theta = \alpha_0 + \sum_{\tau \in \DD} \beta_\tau V_\tau$ with $|\alpha_0| = \bigl| \hat{\theta}(0) - \theta(0) \bigr| \le 2C_o$, and
\[
	\sum_{\tau \in \DD} \beta_\tau^+ \ \le \ \hat{\theta}'(x_n) \ \le \ C_r,
	\quad
	\sum_{\tau \in \DD} \beta_\tau^- \ \le \ \theta'(x_n) \ \le \ C_r .
\]
Here $|DL(\theta,\eta_{\tau,\theta})| \le (1 + x_n) \sqrt{C_{\rm N} \delta_{\rm Newton}(\theta)}$, and this leads to obvious changes in the upper bound for $DL(\theta, \hat{\theta} - \theta)$.

In Setting~1, the main changes are as follows. We do not need the constants $C_\ell, C_r$, and integrals $\int \cdots \, P_o(dx)$ have to be replaced with integrals $\int_{x_1}^{x_n} \cdots \, dx$. Here $v_{\rm max}(x) = \max(1,x-x_1)$, and $\theta_{\rm max} \equiv C_o$. The difference $\hat{\theta} - \theta$ equals $\alpha_0 v_0 + \alpha_1 v_1 + \sum_{\tau \in \DD} \beta_\tau V_\tau$ with
\begin{align*}
	|\alpha_0| \ &= \ \bigl| \hat{\theta}(x_1) - \theta(x_1) \bigr|
		\ \le \ 2C_o , \\
	|\alpha_1| \ &= \ \bigl| \hat{\theta}'(x_1\,+) - \theta'(x_1\,+) \bigr|
		\ \le \ 4C_o/(x_2 - x_1) , \\
	\beta_\tau \ &= \ \hat{\theta}'(\tau\,-) - \hat{\theta}'(\tau\,+)
		- \bigl( \theta'(\tau\,-) - \theta'(\tau\,+) \bigr)
		\ \begin{cases}
		\le \ \ \hat{\theta}'(\tau\,-) - \hat{\theta}'(\tau\,+) , \\
		\ge \ - \bigl( \theta'(\tau\,-) - \theta'(\tau\,+) \bigr) .
	\end{cases}
\end{align*}
In particular,
\[
	\left.\begin{array}{c}
		\displaystyle
		\sum_{\tau \in \DD} \beta_\tau^+ 
		\ \le \ \hat{\theta}'(x_1\,+) - \hat{\theta}'(x_n\,-) \\[3ex]
		\displaystyle
		\sum_{\tau \in \DD} \beta_\tau^- 
		\ \le \ \theta'(x_1\,+) - \theta'(x_n\,-)
	\end{array}\!\!\right\}
	\ \le \ 2C_o/\min\{x_2 - x_1,x_n - x_{n-1}\} .
\]
Here we utilized the fact that $v'(x_i\,+) = v'(x_{i+1}\,-) = \bigl( v(x_{i+1}) - v(x_i) \bigr) / (x_{i+1} - x_i)$ for $v \in \V$ and $1 \le i < n$. Finally, $|DL(\theta,\eta_{\tau,\theta})| \le \sqrt{C_{\rm N} \delta_{\rm Newton}(\theta)}$, because $\eta_{\tau,\theta}$ is always a convex combination of two basis functions in $\VV \cap \V_{D(\theta)}$.
\end{proof}

\subsection{On the distribution of $T_{LR}$ under the null hypothesis}
\label{subsec:null.distributions}

For the goodness-of-fit tests with a given sample size $n$, we simulated $10^5 - 1$ times a sample $X_1,\ldots,X_n$ from $P_o$ and recorded the test statistic $T_{LR} = T_{LR}(X_1,\ldots,X_n)$ as well as the number $M = M(X_1,\ldots,X_n)$ of kinks of the estimator $\hat{\theta} = \hat{\theta}(\cdot \,|\, X_1,\ldots,X_n)$. The reference distribution $P_o$ was $\NN(0,1)$ in Setting~2A and $\chi_1^2$ in Setting~2B. In the latter setting, we also recorded the indicator $J = J(X_1,\ldots,X_n)$ that $\hat{\theta}$ has a kink at $0$, i.e.\ $\hat{\theta}'(0\,+) > 0$.

Table~\ref{tab:crit.values} contains critical values $\hat{\kappa}_{n,\alpha}$ for different sample sizes $n$ and different test levels $\alpha$. Tables~\ref{tab:distr.M.2A} and \ref{tab:distr.M.2B} contain the estimated distribution of the random number $M$ in Settings~2A and 2B, respectively. In the latter setting, Monte Carlo estimators of probabilities $P(J = 1, M \cdots)$ are listed as well.

\begin{table}
\[
	\begin{array}{clcccc}
	\multicolumn{5}{l}{\text{Setting~2A}} \\
	\hline
	n	&& \hat{\kappa}_{n,0.10}
		 & \hat{\kappa}_{n,0.05}
		 & \hat{\kappa}_{n,0.01}
		 & \text{time (ms)} \\
	\hline\hline
	100  && 2.923 & 3.763 & 5.653 & 3.087 \\
	\hline
	400  && 3.298 & 4.179 & 6.133 & 4.282 \\
	\hline
	1000 && 3.531 & 4.434 & 6.473 & 5.880 \\
	\hline
	2000 && 3.682 & 4.613 & 6.678 & 8.355 \\
	\hline
	\end{array}
	\qquad
	\begin{array}{clcccc}
	\multicolumn{5}{l}{\text{Setting~2B}} \\
	\hline
	n	&& \hat{\kappa}_{n,0.10}
		 & \hat{\kappa}_{n,0.05}
		 & \hat{\kappa}_{n,0.01}
		 & \text{time (ms)} \\
	\hline\hline
	100  && 1.228 & 1.863 & 3.378 & 1.795 \\
	\hline
	400  && 1.481 & 2.160 & 3.751 & 2.736 \\
	\hline
	1000 && 1.622 & 2.317 & 3.879 & 4.228 \\
	\hline
	2000 && 1.736 & 2.418 & 4.128 & 6.676 \\
	\hline
	\end{array}
\]
\caption{Some estimated critical values for goodness-of-fit tests and mean running time per sample from $P_o$.}
\label{tab:crit.values}
\end{table}

\begin{table}
\[
	\begin{array}{clccccccc}
	\hline
	n    && 0     & 1     & 2     & 3     & 4     & 5     & >5    \\
	\hline
	100  && 0.164 & 0.324 & 0.296 & 0.154 & 0.050 & 0.011 & 0.002 \\
	\hline
	400  && 0.100 & 0.258 & 0.301 & 0.208 & 0.095 & 0.030 & 0.008 \\
	\hline
	1000 && 0.075 & 0.217 & 0.290 & 0.231 & 0.123 & 0.047 & 0.017 \\
	\hline
	2000 && 0.059 & 0.187 & 0.277 & 0.245 & 0.146 & 0.062 & 0.025 \\
	\hline
	\end{array}    
\]
\caption{Estimators of $P(M = m)$, $0 \le m \le 5$, and $P(M > 5)$ in Setting~2A.}
\label{tab:distr.M.2A}
\end{table}

\begin{table}
\[
	\begin{array}{clcccccc}
	\hline
	n    && 0     & 1     & 2     & 3     & 4     & >4    \\
	\hline
	100  && 0.360 & 0.445 & 0.165 & 0.028 & 0.002 & 0.000 \\
	     &&(0.000)&(0.069)&(0.029)&(0.006)&(0.000)&(0.000)\\
	\hline
	400  && 0.292 & 0.432 & 0.216 & 0.053 & 0.007 & 0.001 \\
	     &&(0.000)&(0.050)&(0.030)&(0.010)&(0.001)&(0.000)\\
	\hline
	1000 && 0.252 & 0.419 & 0.244 & 0.072 & 0.012 & 0.001 \\
	     &&(0.000)&(0.040)&(0.031)&(0.010)&(0.002)&(0.000)\\
	\hline
	2000 && 0.229 & 0.403 & 0.263 & 0.086 & 0.017 & 0.002 \\
	     &&(0.000)&(0.034)&(0.030)&(0.011)&(0.002)&(0.000)\\
	\hline
	\end{array}    
\]
\caption{Estimators of $P(M = m)$, $0 \le m \le 4$, and $P(M > 4)$ in Setting~2B. In brackets are the estimators of $P(J = 1, M \ldots)$.}
\label{tab:distr.M.2B}
\end{table}

\end{document}